\documentclass[sigconf]{acmart}
\usepackage{amsmath,amsthm,amssymb}
\usepackage{mathrsfs}
\usepackage{bbm}
\usepackage{enumerate}
\usepackage{graphicx}
\usepackage{url}
\usepackage{subcaption}
\usepackage{balance}

\usepackage{algpseudocode}
\usepackage{algorithmicx, algorithm}
\usepackage{color}

\newtheorem{theorem}{Theorem}
\newtheorem{lemma}[theorem]{Lemma}
\newtheorem{proposition}[theorem]{Proposition}
\newtheorem{claim}[theorem]{Claim}

\newtheorem{fact}[theorem]{Fact}

\newcommand{\set}[1]{\{#1\}}
\newcommand{\abs}[1]{\left\lvert#1\right\rvert}
\newcommand{\size}[1]{|#1|}

\newcommand{\floor}[1]{\lfloor#1\rfloor}
\newcommand{\ceil}[1]{\lceil#1\rceil}

\let\adj\sim

\let\emptyset\varnothing

\let\sss\scriptscriptstyle
\newcommand{\xmin}{x_{\min}}
\newcommand{\starnode}{v^*}

\newcommand{\ER}{Erd\"{o}s-R\'enyi}

\renewcommand{\epsilon}{\varepsilon}
\newcommand{\plexp}{\beta}
\newcommand{\weight}[1]{p_{#1}}

\newcommand{\w}[1]{p_{#1}}
\newcommand{\wb}[1]{p(#1)}
\newcommand{\seqdeg}{\mathbf{p}}
\newcommand{\avgdeg}{\nu}
\newcommand{\secdeg}{\omega}
\newcommand{\Exp}[1]{\mathbb{E}[#1]}
\newcommand{\E}{\ensuremath{\mathop{\mathbb{E}}}}
\newcommand{\vol}[1]{\mathrm{vol}(#1)}

\newcommand{\vols}[1]{\mathrm{vol}_2(#1)}
\newcommand{\level}{\Gamma}
\newcommand{\neigh}{N}
\newcommand{\pr}[1]{\mathrm{Pr}[#1]}

\newcommand{\maxdeg}{\frac 1 {\plexp - 1}}
\newcommand{\istrue}[1]{\mathbbm{1}_{#1}}

\newcommand{\dist}{\mathrm{dist}}
\newcommand{\argmax}{\mathrm{argmax}}

\newcommand{\maxdegWt}{\varepsilon(n)}
\newcommand{\labelExpo}{\min(\frac 1 {\plexp - 1}, \frac 1 {4 - \plexp})}
\newcommand{\labelSize}{1 - \labelExpo}

\newcommand{\dhat}[1]{\hat{d}_{#1}}
\newcommand{\threshold}{\frac 1 {(4 - \plexp)(\plexp - 1)}}

\algnewcommand\algorithmicinput{\textbf{Input:}}
\algnewcommand\Input{\item[\algorithmicinput]}
\algnewcommand\algorithmicoutput{\textbf{Output:}}
\algnewcommand\Output{\item[\algorithmicoutput]}

\newcommand{\cE}{\mathcal{E}}

\newcommand{\cN}{\mathcal{N}}

\newcommand{\cL}{\mathcal{L}}

\newcommand{\cO}{\mathcal{O}}

\newcommand{\volg}[1]{\mathrm{vol}_{2+\gamma}(#1)}

%\newcommand{\Hnote}[1]{{\color{magenta}\authnote{Hongyang}{{#1}}}}

%% matrices

\newcommand{\Bracket}[1]{\left(#1\right)}

\newcommand{\polylog}[1]{\mathrm{poly}\log(#1)}

\newtheoremstyle{restate}{}{}{\itshape}{}{\bfseries}{~(restated).}{.5em}{\thmnote{#3}}
\theoremstyle{restate}

\setcopyright{none}
%\acmDOI{10.475/123_4}

% ISBN
%\acmISBN{123-4567-24-567/08/06}

%Conference
\acmConference[WWW]{The Web Conference}{2019}{San Francisco, CA}
\acmYear{2019}
\copyrightyear{2019}

%\acmArticle{4}
%\acmPrice{15.00}

\title{Pruning based Distance Sketches with Provable Guarantees on Random Graphs}

\author{Hongyang Zhang}
\affiliation{Stanford University}
\email{hongyang@cs.stanford.edu}

\author{Huacheng Yu}
\affiliation{Harvard University}
\email{yuhch123@gmail.com}

\author{Ashish Goel}
\affiliation{Stanford University}
\email{ashishg@stanford.edu}

\begin{document}

\begin{abstract}
	Measuring the distances between vertices on graphs is one of the most fundamental components in network analysis.
	Since finding shortest paths requires traversing the graph, it is challenging to obtain distance information on large graphs very quickly.
	In this work, we present a preprocessing algorithm that is able to create landmark based distance sketches efficiently, with strong theoretical guarantees.
	When evaluated on a diverse set of social and information networks, our algorithm significantly improves over existing approaches by reducing the number of landmarks stored, preprocessing time, or stretch of the estimated distances.

	On Erdos-Renyi graphs and random power law graphs with degree distribution exponent $2 < \beta < 3$, our algorithm outputs an exact distance data structure with space between $\Theta(n^{5/4})$ and $\Theta(n^{3/2})$ depending on the value of $\beta$, where $n$ is the number of vertices.
	We complement the algorithm with tight lower bounds for Erdos-Renyi graphs and the case when $\beta$ is close to two.
	%Our algorithm obtains fine-grained sketches by pruning away traversals whose distance information are already inferred from the current sketch.
	%The pruning step is applied on high degree

	%We also study shortest path labeling schemes on random graphs with a power law degree distribution of exponent $\beta \in (2, 3)$.
	%We obtain upper and lower bounds which are parameterized by $\beta$, and are nearly tight for $\beta \in (2, 2.5)$.
\end{abstract}

\maketitle

\section{Introduction}

Computing shortest path distances on large graphs is a fundamental
problem in computer science and has been the subject of much
study~\cite{TZ05, GH05, S14}. In many applications, it is important to compute the
shortest path distance between two given nodes, i.e. to answer
shortest path queries, in real time. Graph distances measure the closeness or similarity of vertices and are often used as one of the most basic metric in network analysis \cite{HWYY07,PBCG09,VFD07,YB08}. In this paper, we will focus on
efficient and practical implementations of shortest path queries in
classes of graphs that are relevant to web search, social networks,
and collaboration networks etc.  For such graphs, one commonly used
technique is that of {\em landmark-based labelings}: every node is
assigned a set of landmarks, and the distance between two nodes in
computed only via their common landmarks. If the set of landmarks can be easily
computed, and is small, then we obtain both efficient pre-processing
and small query time.

Landmark based labelings (and their more general counterpart, {\em
  Distance Labelings}), have been studied
extensively~\cite{S14,BDG16}. In particular, a sequence of
results culminating in the work of Thorup and Zwick~\cite{TZ05}
showed that labeling schemes can provide a multiplicative
3-approximation to the shortest path distance between any two nodes,
while having an overhead of ${O}({\sqrt n})$ storage per node on average in
the graph (we use the standard notation that a graph has $n$ nodes and
$m$ edges). In the worst case, there is no distance labeling scheme
that always uses sub-quadratic amount of space and provides exact
shortest paths.  Even for graphs with maximum degree $3$, it is known
that any distance labeling scheme requires $\Omega(n^{3/2})$
total space \cite{GPPR01}.  In sharp contrast to these theoretical
results, there is ample empirical evidence that very efficient
distance labeling schemes exist in real world graphs that can achieve
much better approximations. For example, Akiba et al.~\cite{AIY13} and
Delling et al.~\cite{DGPW14} show that current algorithms can find
{\em landmark based labelings} that use only a few hundred landmarks
per vertex to obtain {\em exact distance}, in a wide collection of
social, Web, and computer networks with millions of vertices. In this
paper, we make substantial progress towards closing the gap between
theoretical and observed performance. We show that natural landmark
based labeling schemes can give exact shortest path distances with a
small number of landmarks for a popular model of (unweighted and
undirected) web and social graphs, namely the heavy-tailed random
graph model. We also formally show how further reduction in the number
of landmarks can be obtained if we are willing to tolerate an additive
error of one or two hops, in contrast to the multiplicative
3-approximation for general graphs. Finally, we present practical
versions of our algorithms that result in substantial performance
improvements on many real-world graphs.

In addition to being simple to implement, landmark based shortest path
algorithms also offer a qualitative benefit, in that they can directly
be used as the basis of a social search algorithm. In social
search~\cite{BG12}, we assume there is a collection of keywords
associated with every node, and we need to answer queries of the
following form: given node $v$ and keyword $w$, find the node that is
closest to $v$ among all nodes that have the keyword $w$ associated
with them. This requires an index size that is $O(L)$ times the size
of the total social search corpus and a query time of $O(L)$, where $L$
is the number of landmarks per node in the underlying landmark based
algorithm; the approximation guarantee for the social search problem
is the same as that of the landmark based algorithm. Thus, our results
lead to both provable and practical improvements to the social search
problem.

% In a landmark based labeling (also known as \emph{2-hop
%   covers}~\cite{CHKZ03}, hub labeling~\cite{ADGW11}), every node
% stores a set of nodes (i.e. landmarks) as well as its distance to each
% landmark.  To answer a distance query $(x, y)$, the algorithm finds
% the common landmark $z$ in the landmark sets of $x$ and $y$ that
% minimizes the sum of distances $d(x, z) + d(z, y)$.  The central
% question is to design low-stretch (i.e. small error) schemes with a
% small amount of storage.

Existing models for social and information networks build on random
graphs with some specified degree distribution \cite{D07,CL06,V09},
and there is considerable evidence that real-world graphs have
power-law degree distributions~\cite{CSN09,EG17}. We will use the
Chung-Lu model~\cite{CL02}, which assumes that the degree sequence of
our graph is given, and then draws a ``uniform'' sample from graphs
that have the same or very similar degree sequences. In particular, we
will study the following question:
{\em Given a random graph from the Chung-Lu model with a power law
  degree distribution of exponent $\beta$, how much storage does a
  landmark-based scheme require overall, in order to answer distance
  queries with no distortion?}

In the rest of the paper, we use the term ``random power law graph''
to refer to a graph that is sampled from the Chung-Lu model, where the
weight (equivalently, the expected degree) of each vertex is
independently drawn from a power law distribution with exponent
$\beta$.  We are interested in the regime when $\plexp > 2$
{\textemdash} this covers most of the empirical power law degree
distributions that people have observed on social and information
networks~\cite{CSN09}. Admittedly, real-world graphs have additional
structure in addition to having power-law degree
distributions~\cite{LLDM08}, and hence, we have also validated the effectiveness of our algorithm on real graphs.

\subsection{Our Results}

%Our first result corresponds to the ``easy regime'', where the degree
Our first result corresponds to the ``easy regime'', where the degree
distribution has finite variance ($\beta > 3$). We show that a simple
procedure for generating landmarks guarantees exact shortest paths,
while only requiring each node to store $\tilde{O}(\sqrt n)$
landmarks. The same conclusion also applies to Erd\H{o}s-Renyi graphs
$G(n, \frac c n)$ when $c > 1$, or when $c = 2\log n$.

We then study the case where $2 < \plexp < 3$. This is the most
emblematic regime for power-law graphs, since the degree distribution
has infinite variance but finite expectation. We present an algorithm
that generates at most $\tilde{O}(n^{\sss (\beta - 2)/(\beta - 1)})$
landmarks per node when $\beta \ge 2.5$; and
$\tilde{O}(n^{\sss (3-\beta)/(4 - \beta)})$ landmarks per node when
$2 < \beta < 2.5$. We obtain additional improvements if we allow an
additive error of 1 or 2. See Figure \ref{fig:result} for an
illustration of our results.

\vspace{-0.075in}
\begin{figure}[thbp]
	\centering
	\includegraphics[width=2.4in]{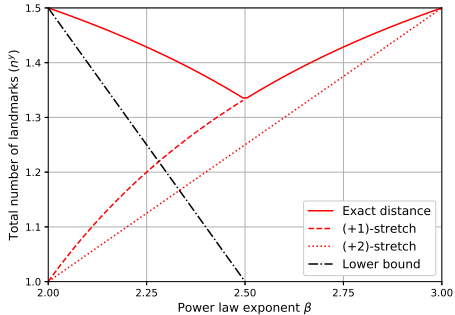}
	\vspace{-0.125in}
	\caption{An illustration of the results for labeling schemes:
	The $x$-axis is the exponent of the power law degree distribution and each value on the $y$-axis corresponds to a storage of $\tilde{\textsc O}(n^{y})$.
	The lower bound curve is for exact distances.}
	\label{fig:result}
\end{figure}

\vspace{-0.125in}
While the dependence on $\beta$ is complex, it is worth noting that in
the entire range that we study ($\beta > 2$), the number of landmarks
per node is at most $\tilde{O}({\sqrt n})$ for {\em exact shortest
  paths}.  This is in stark contrast to known impossibility results
for general graphs, where no distance labeling with a multiplicative
stretch less than 3 can use sub-linear space per
node~\cite{GPPR01}. The query time of our algorithms is proportional to the number
of landmarks per node, so we also get speed improvements.

Our algorithm is based on the pruned labeling algorithm of Akiba et
al. \cite{AIY13}, but differs in important ways.  The pruned labeling
algorithm initially posits that every node is a landmark for every
other node, and then uses the BFS tree from each node to iteratively
prune away unnecessary landmarks.  In our approach, we apply a similar BFS with pruning procedure on a
small subset of $H$ (i.e. high degree vertices), but switch to
lightweight local ball growing procedures up to radius $l$ for all
other vertices.  As we show, the original pruned labeling algorithm
requires storing $\tilde{\Omega}(n^2)$ landmarks on sparse \ER~graphs.
By growing local balls of size $\sqrt n$, our algorithm recovers exact
distances with at most $\tilde{O}(n^{3/2})$ landmarks instead, for
\ER~graphs and random power law graphs with $\beta > 3$. Hence, our
algorithm combines exploiting the existence of high-degree ``central
landmarks'' with finding landmarks that are ``locally important''.
Furthermore for $2 < \beta < 3$, by setting up the number of global landmarks $H$ and the
radius $l$ suitably, we provably recover the upper bounds described in
Figure \ref{fig:result}. While the algorithmic idea is quite simple,
the analysis is intricate.

We complement our algorithmic results with tight lower bounds for the
regime when $\beta > 3$: the total length of any distance labeling
schemes that answer distance queries exactly in this regime is almost
surely $\tilde{\Omega}(n^{1.5})$. We also show that when $2 < \beta < 2.5$,
any distance labeling scheme will generate labels of total size
$\tilde{\Omega}(n^{3.5-\plexp})$ almost surely.  In particular, our algorithm
achieves the optimal bound when $\beta$ is close 2.

%Informally, our algorithm starts from the following observation: do a
%breadth-first search from each vertex $x$ until either the entire
%connected component has been explored or $\sqrt n$ vertices have been
%traversed --- let $F(x)$ denote the set of vertices we discovered and
%use this as the landmark set for $x$. For any two vertices $x$ and $y$
%in the same component, $F(x)$ and $F(y)$ have a nonempty intersection
%with probability $1 - n^{-c}$ (for $c > 2$), thus ensuring that there
%is a common landmark on the shortest path between $(x, y)$.  This
%simple procedure generates a 2-hop cover where each node stores
%$\tilde{O}(\sqrt n)$ landmarks.  When $2 < \beta < 3$, instead of
%stopping when $\sqrt n$ vertices are explored, we stop before reaching
%the layer containing the highest degree vertex.  Intuitively, choosing
%the set of vertices that we already discovered together with this
%special vertex as the landmark set gives a $(+1)$-stretch scheme with
%at most $\tilde{O}(n^{\sss (\beta - 2)/(\beta-1)})$ landmarks per node.
%There are several technical difficulties that we have to overcome to
%make the proof formal, and to extend the analysis to all the other
%cases described earlier; details are in section~\ref{sec_alg}. Details of
%our lower bounds are in section~\ref{sec_lb}.

%The pre-processing phase of our algorithm can be quite expensive (as
%can earlier landmark based algorithms).
The parameter choice suggested by our theoretical analysis can be quite expensive to implement (as can earlier landmark based algorithms).
We apply a simple but principled parameter tuning procedure to
%We present a simple but that is a minor modification to the
our algorithm that substantially improves the preprocessing time and
generates a smaller set of landmarks at essentially no loss of
accuracy. We conduct experiments on several real world graphs, both
directed and undirected.  First, compared to the pruned labeling
algorithm, we find that our algorithm reduces the number of landmarks
stored by 1.5-2.5x; the preprocessing time is reduced significantly as
well.  Next, we compare our algorithm to the distance
oracle of Thorup and Zwick~\cite{TZ05}, which is believed to be theoretically optimal
for worst-case graphs, as well as the distance sketch of Das Sarma et
al~\cite{DGNP10} which has been found to be both efficient and useful
in prior work~\cite{BG12}. For each graph, our algorithm substantially
outperforms these two benchmarks.  Details are in
Section~\ref{sec:experiment}. It is important to note that the three
algorithms we compare to also work much better on these real-world
graphs than their theoretical guarantee, and we spend considerable
effort tuning their parameters as well. Hence, the performance
improvement given by our algorithm is particularly noteworthy.

It is worth mentioning that our technical tools only rely on bounding
the growth rate of the breadth-first search. Hence we expect that our
results can be extended to the related configuration model~\cite{D07}
as well.
One limitation of our work is that the analysis does not apply directly to preferential attachment graphs, which correspond to another family of well known power law graphs.
But we believe that similar results can be obtained there by adapting our analysis to that setting as well.
This is left for future work.
%Furthermore, our results interpolate with the notion of
%betweenness centrality as we bound the number of vertex pairs that
%cross a fixed vertex on their shortest path.  It is an interesting
%question to investigate the distribution of betweenness centrality
%scores on random power law graphs~\cite{B04}.

\smallskip
\noindent{\bf Organizations:} The rest of the paper is organized as follows.
Section \ref{sec:prelim} introduces the basics of random graphs, reviews the pruned labeling algorithm and related work.
Section \ref{sec_alg} introduces our approach.
Section \ref{sec_powerlaw} analyzes our algorithm on random power law graphs.
Then we present experiments in Section \ref{sec:experiment}.
We show the lower bounds in Section \ref{sec_lb}.
We conclude in Section \ref{sec:discuss}.
The Appendix contains missing proofs from the main body.
\section{Preliminaries and Related Work}\label{sec:prelim}
\newcommand{\algPrune}{{\sc PrunedLabeling}}
\newcommand{\bfsFw}{{\sc forwardBfs}}
\newcommand{\bfsBk}{{\sc backwardBfs}}

\subsection{Notations}
Let $G = (V, E)$ be a directed graph with $n = \abs{V}$ vertices and $m = \abs{E}$ edges.
For a vertex $x \in V$, denote by $d_{out}(x)$ the outdegree of $x$ and $d_{in}(x)$ the indegree of $x$.
Let $\cN_{out}(x)$ denote the set of its out neighbors.
Let $\dist_G(x,y)$ denote the distance of $x$ and $y$ in $G$,
or $\dist(x,y)$ for simplicity.
When $G$ is undirected, then the outdegrees and indegrees are equal.
Hence we simply denote by $d_x$ the degree of every vertex $x \in V$.
For an integer $l$ and $x \in V$, denote by $\Gamma_l(x) = \set{y: \dist(x, y) = l}$ the set of vertices at distance $l$ from $x$.
Denote by $N_l(x)$ the set of vertices at distance at most $l$ from $x$.

We use notation $a \lesssim b$ to indicate that there exists an absolute constant $C > 0$ such that $a \le C b$.
The notations $\tilde{O}(\cdot)$ and $\tilde{\Omega}(\cdot)$ hide poly-logarithmic factors.

\subsection{Landmark based Labelings}

In a landmark based labeling \cite{DGSW14}, each vertex $x$ is assigned a set of forward landmarks $L_F(x) $ and backward landmarks $L_B(x)$.
Each landmark set is a hash table, whose keys are vertices and values are distances.
For example, if $y \in L_F(x)$, then the value associated with $y$ would be $\dist(x, y)$, which is the ``forward'' distance from $x$ to $y$.
Given the landmark sets $L_F(\cdot)$ and $L_B(\cdot)$, we estimate the distances as follows:
\begin{align}\label{eq_query}
	\min_{z \in L_F(x) \cap L_B(y)} \dist(x, z) + \dist(z, y), \forall x, y \in V.
\end{align}
If no common vertex is found between $L_F(x)$ and $L_B(y)$,
then $y$ is not reachable from $x$.
In the worst case, computing set intersection takes $\Omega(\min(\abs{L_F(x)}, \abs{L_B(y)}))$ time.

Denote the output of equation \eqref{eq_query} by $\hat{d}$.
Clearly, we have $\hat{d} \ge \dist(x, y)$ when $y$ is reachable from $x$.
The {\it additive stretch} of $\hat{d}$ is given by $\hat{d} - \dist(x, y)$, and the {\it multiplicative stretch} is given by $\hat{d} / \dist(x, y)$.
When there are no errors for any pairs of vertices, such landmark sets are called {\it 2-hop covers} \cite{C97}.
%Clearly, each query takes no more than $O(\size{L_F(x)} + \size{L_B(y)})$ time.
%If $x$ and $y$ are chosen uniformly at random from $V$, then the query time is at most \[ \bigo{\frac 1 n \sum_{x \in V} \left(\abs{L_F(x)} + \abs{L_B(x)}\right)}.\]

There is a more general family of data structures known as labeling schemes \cite{GPPR01}, which associates a vector $\cL: V\rightarrow \{0,1\}^*$ for every vertex.
When answering a query for a pair of vertices $x, y \in V$, only the two labels $\cL(x)$ and $\cL(y)$ are required without accessing the graph.
The total length of $\cL$ is given by $\sum_{x \in V} \size {\cL(x)}$.
It is clear from equation \eqref{eq_query} that landmark sketches fall in the query model of labeling schemes.
%The notion of labeling schemes will be useful when we describe our lower bound results.

\subsection{The Pruned Labeling Algorithm}\label{sec_prune}

We review the pruned labeling algorithm \cite{AIY13} for readers who are not familiar.
The algorithm starts with an ordering of the vertices, $\set{x_1, x_2, \dots, x_n}$.
First for $x_1$, a breadth first search (BFS) is performed over the entire graph.
During the BFS, $x_1$ is added to the landmark set of every vertex.%
\footnote{For directed graphs, there will be a forward BFS which looks at $x_1$'s outgoing edges and its descendants, as well as a backward BFS which looks at $x_1$'s incoming edges and its predecessors.}
Next for $x_2$, in addition to running BFS, a {\it pruning} step is performed before adding $x_2$ as a landmark.
For example, suppose that a path of length $l$ is found from $x_2$ to $y$.
If $x_1$ lies on the shortest path fom $x_2$ to $y$, then by checking their landmark sets, we can find the common landmark $x_1$ to certify that $\dist(x_2, y) = \dist(x_2, x_1) + \dist(x_1, y) \le l$.
In this case, $x_2$ is not added to $y$'s landmark set, and the neighbors of $y$ are pruned away.
The above procedure is repeated on $x_3$, $x_4$, etc.

For completeness, we describe the pseudocode in Algorithm \ref{alg_prune}.
Note that the backward BFS procedure can be derived similarly.
It has been shown that the pruned labeling algorithm is guaranteed to return exact distances \cite{AIY13}.%

\begin{algorithm}[!h]
	\small
	\caption{\algPrune~(Akiba et al. \cite{AIY13})}
	\label{alg_prune}
	\begin{algorithmic}[1]
		\Input A directed graph $G = (V, E)$;
		An ordering of $V$, $\set{x_1, x_2, \dots, x_n}$.
		\State Let $\cO = \varnothing$, and $L_{F}(x) = L_{B}(x) = \varnothing$, for all $x \in V$
		\For {$i = 1, \dots, n$}
			\State \bfsFw($x_i$)
			\State \bfsBk($x_i$)
			\State $\cO = \cO \cup \set{x_i}$
		\EndFor
		\\
		\Procedure{\bfsFw}{$x_i$}
			\State Let $Q$ be a priority queue and $S$ be a hash set
			\State Set the priority of $x_i$ to be zero
			\While {$Q \neq \varnothing$}
				\State Let $l$ be the minimum priority of $Q$
				\State Let $u$ be the corresponding vertex
				\State $S = S \cup \set{u}$
				\State Let $\tilde{d} = \min_{y \in L_F(x_i) \cap L_B(u)} \dist(x_i, y) + \dist(y, u)$ %\hfill{(the best estimate of $\dist(x_i, y)$ currently)}
				\If {$l < \tilde d$} \hfill{(otherwise $u$'s neighbors are pruned)}
					\State $L_B(u) = L_B(u) \cup \set{x_i \rightarrow l}$ %\hfill{(add $x_i$ to $u$'s backward landmark set)}
					\For {$y \in \cN_{out}(u)$ such that $y \notin \cO$}
						\State Let $q$ be the priority of $y$
						\If {$y \notin S$ and $l+1 < q$}
							\State Decrease $y$'s priority to $l+1$
						\EndIf
					\EndFor
				\EndIf
			\EndWhile
		\EndProcedure
	\end{algorithmic}
\end{algorithm}%
\vspace{-0.1in}

\subsection{Random Graphs}

We review the basics of \ER~random graphs.
Let $G = G(n, p)$ be an undirected graph where every edge appears with probability $p$.
It is well known that when $p \ge 2 (\log n) / n$, $G$ has only one connected component with high probability.
Moreover, the neighborhood growth rate (i.e. $\abs{\Gamma_{i+1}(x)} / \abs{\Gamma_i(x)}$) is highly concentrated around its expectation, which is $np$.
Formally, the following facts are well-known.
\begin{fact}[Bollob{\'a}s \cite{B98}]\label{fact_er}
	Let $G = (V, E)$ be an undirected graph where every edge is sampled with probability $p = 2(\log n) / n$.
	Let $D = \ceil{\frac{\log n} {\log (np)}}$.
	Then the following are true with high probability:
	\begin{itemize}
		\item[a)] The diameter of $G$ is at most $D+1$;
		\item[b)] For any $x, y \in V$ and $l \le D$, $\Pr(\dist(x, y) \le l) \le \frac{(np)^{l+1}} {n(np - 1)}$;
		\item[c)] For any $x\in V$ and $l < D$, we have $\frac 1 2 \le \frac{\abs{\Gamma_l(x)}} {(np)^l} \le 2$.
	\end{itemize}
\end{fact}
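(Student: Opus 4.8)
All three statements are standard concentration facts for $G(n,p)$ in the connectivity regime $np=2\log n$. I would prove (b) by a first-moment bound over short paths, (c) by a level-by-level analysis of the breadth-first search from a vertex, and then deduce (a) from (c) together with an edge-existence argument. For (b): fix $x\neq y$; for each $1\le k\le l$ the number of length-$k$ $x$--$y$ paths is at most $(n-2)(n-3)\cdots(n-k)\le n^{k-1}$, and each is present in $G$ with probability $p^k$, so $\Pr[\dist(x,y)=k]\le n^{k-1}p^k=(np)^k/n$. Summing the geometric series (valid since $np=2\log n>1$) gives $\Pr[\dist(x,y)\le l]\le\frac1n\sum_{k=1}^{l}(np)^k\le\frac{(np)^{l+1}}{n(np-1)}$, which is (b).

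\textbf{Part (c).} Run BFS from $x$ and write $Z_i=\abs{\Gamma_i(x)}$ and $W_i=\abs{N_i(x)}$. Exposing edges level by level, once $\Gamma_0,\dots,\Gamma_i$ are revealed every vertex outside $N_i(x)$ is joined to $\Gamma_i(x)$ independently with probability $1-(1-p)^{Z_i}$, so $Z_{i+1}\sim\mathrm{Bin}\!\big(n-W_i,\;1-(1-p)^{Z_i}\big)$. As long as $(np)^i=o(n)$ we have $W_i=o(n)$ and $1-(1-p)^{Z_i}=pZ_i\bigl(1-O(pZ_i)\bigr)$, so the conditional mean of $Z_{i+1}$ is $np\,Z_i\bigl(1\pm o(1)\bigr)$; a Chernoff bound then gives $Z_{i+1}=np\,Z_i\bigl(1\pm\eta_i\bigr)$ off an event of probability at most $n^{-3}$, with $\eta_i=\Theta\!\bigl(\sqrt{\log n/(np)^{i+1}}\bigr)$. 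A union bound over the $O(\log n/\log\log n)$ levels (and over $x$), followed by multiplying the per-level factors, gives $Z_l/(np)^l=\prod_{i<l}\bigl(1\pm o(1)\bigr)\bigl(1\pm\eta_i\bigr)$, which stays in $[\frac12,2]$ because $\sum_i\eta_i$ is controlled by its first, slowly decaying terms. The top levels --- possible only within $O(1)$ of $l=D-1$, since $(np)^{D-1}<n$ --- are handled by keeping the Poisson form $1-e^{-pZ_i}$: the growth rate falls below $np$, but only by a bounded factor, so $Z_l\ge\frac12(np)^l$ still holds, while $Z_l\le n\le 2(np)^l$ is then automatic.

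\textbf{Part (a).} By (c), a BFS from any vertex reaches a ball of size at least $\frac12(np)^r=\Omega(\sqrt{n\log n})$ after $r=\ceil{D/2}$ steps. For $x,y$, condition on $A=N_r(x)$ and $B=N_r(y)$: if they already intersect then $\dist(x,y)\le 2r$, and otherwise the probability that no edge of $G$ joins $A$ to $B$ is at most $(1-p)^{\abs A\cdot\abs B}\le\exp\bigl(-\Omega(\log^2 n)\bigr)$, which survives a union bound over all $\binom n2$ pairs; hence $\dist(x,y)\le 2r+1$. Splitting the depth budget between the two radii more carefully (using the lower bound on both balls together with $(np)^D\ge n$) tightens this to $\dist(x,y)\le D+1$.

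\textbf{Main obstacle.} The concentration step in (c) is the hard part, and it is delicate at both ends. Near the top ($i$ close to $D$) the branching heuristic $Z_{i+1}\approx np\,Z_i$ fails because the exposed set is a constant fraction of $V$, so one must track $W_i$ and use $1-e^{-pZ_i}$ in place of $pZ_i$. Near the bottom ($i$ small) the per-level error $\eta_i$ is only a constant, not $o(1)$: for $i=0$, $\eta_0=\Theta\bigl(\sqrt{\log n/np}\bigr)=\Theta(1)$ is forced by the union bound over the extreme-degree vertices (whose degrees differ from $np$ by a constant factor). This is precisely why (c) carries the constants $\frac12$ and $2$ rather than $1\pm o(1)$, and why the uniform-over-$x$ reading at $l=1$ needs care --- for each fixed $x$ it holds with high probability for all $l<D$, and the full uniform treatment is the one carried out in Bollob\'{a}s~\cite{B98}. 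Given (c), parts (a) and (b) are routine.
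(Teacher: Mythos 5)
The paper does not prove this statement: it is imported as a black-box Fact from Bollob\'as~\cite{B98}, so there is no internal proof to compare yours against. On its own terms, your outline is the canonical one (first-moment bound over short paths for (b), level-by-level edge exposure with Chernoff for (c), and an edge between two grown balls for (a)), and part (b) as written is complete and correct.

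Two caveats on the other parts. In part (a), the claimed ball size $\Omega(\sqrt{n\log n})$ after $\lceil D/2\rceil$ steps, and hence the $\exp(-\Omega(\log^2 n))$ failure probability, only materialize when $D$ is odd or when $(np)^D\ge n\cdot np$; in the boundary case where $\log n/\log(np)$ is (close to) an even integer you only get $(np)^{\lceil D/2\rceil}=\Theta(\sqrt n)$, so $p\abs{A}\abs{B}=\Theta(\log n)$ with a constant that does not survive the union bound over $\binom n2$ pairs, and the ``split the depth budget so that $r_1+r_2=D$'' variant has the same problem since $(np)^D$ may be as small as $n$, leaving a per-pair failure probability of only $n^{-1/2+o(1)}$. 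Closing this boundary case requires the finer two-point diameter argument that Bollob\'as actually carries out, so part (a) as sketched has a genuine (if classical and marginal) gap. Second, your own caveat about $l=1$ in (c) is well taken and is in fact sharper than you state: at $p=2\log n/n$ there are with high probability polynomially many vertices of degree below $np/2=\log n$ (the lower-tail probability for a single vertex is only $n^{-c}$ with $c<1$), so the uniform-over-$x$ reading of (c) at $l=1$ is false as literally written; the per-fixed-$x$ reading, or restricting to $l\ge 2$, is the defensible one, and deferring the uniform statement to the source is the right call. These are boundary issues with a statement the paper itself only cites; the substance of your argument is sound.
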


\noindent{\bf The Chung-Lu model:} Let $p_x > 0$ denote a weight value for every vertex $x \in V$.
Given the weight vector $\mathbf{p}$, the Chung-Lu model generalizes \ER~graphs such that each edge is chosen independently with probability
\[ \Pr[x \sim y] = \min \left\{\frac {\w x \cdot \w y} {\vol{V}}, 1\right\}, \forall x, y \in V \]
where $\vol{V} = \sum_{x \in V} p_x$ denote the volume of $V$.
%Thus, $p_x$ is the expected degree of $x$, and $\vol{V}$ is twice the expected number of edges.

\smallskip
\noindent{\bf Random power law graphs:}
Let $f: [\xmin, \infty) \rightarrow \mathbb R$
denote the probability density function of a power law distribution
with exponent $\plexp > 1$, i.e. $f(x) = Z x^{-\plexp}$,
where $Z = (\plexp - 1) \cdot \xmin^{\plexp - 1}$ \cite{CSN09}.
The expectation of $f(\cdot)$ exists when $\beta>2$.
%:$\avgdeg = \xmin {(\plexp - 1)} / {(\plexp - 2)}.$
The second moment is finite when $\beta>3$. %, which is equal to $\secdeg	= \xmin^2 {(\plexp - 1)} / {(\plexp - 3)}.$
When $\beta < 3$, the expectation is finite, but the empirical second moment grows polynomially in the number of samples with high probability.
If $\beta < 2$, then even the expectation becomes unbounded as $n$ grows.

In a random power law graph, the weight of each vertex is drawn independently from the power law distribution.
Given the weight vector $\seqdeg$, a random graph is sampled
according to the Chung-Lu model.
If the average degree $\avgdeg > 1$, then it is known that almost surely the graph has a unique giant component \cite{CL06}.
%In this paper, we focus on the case when the average degree $\avgdeg > 1$.
%\footnote{If $\avgdeg < 1$, almost surely all connected components have at most $O(\log n)$ vertices.}

\subsection{Related Work}

\noindent{\bf Landmark based labelings:}
There is a rich history of study on how to preprocess a graph to answer shortest path queries \cite{AG11, ADKM15, ADDJ93,C97,TZ05,BCT17}.
It is beyond our scope to give a comprehensive review of the literature and we refer the reader to survey \cite{S14} for references.

In general, it is NP-hard to compute the optimal landmark based labeling (or 2-hop cover).
Based on an LP relaxation, a $\log n$ factor approximation can be obtained via a greedy algorithm \cite{CHKZ03}.
See also the references \cite{AMO17, BGKSW15, DGSW14, GRS13, B17} for a line of followup work.
The current state of the art is achieved based on the pruned labeling algorithm \cite{AIY13, DGPW14}.
Apart from the basic version that we have already presented, bit-parallel optimizations have been used to speed up proprocessing \cite{AIY13}.
Variants which can be executed when the graph does not fit in memory have also been studied \cite{J14}.
It is conceivable that such techniques can be added on top of the algorithms that we study as well.
For the purpose of this work, we will focus on the basic version of the pruned labeling algorithm.
Compared to classical approaches such as distance oracles, the novelty of the pruned labeling algorithm is using the idea of \textit{pruning} to reduce redundancy in the BFS tree.

\smallskip
\noindent{\bf Network models:}
%Existing models for social and information networks
%build on random graphs with a fixed degree distribution \cite{D07,CL06,V09}.
%Informally, we assume that the degree sequence of our graph is given, and we draw a ``uniform'' sample from all graphs with a ``similar'' degree sequence.
Earlier work on random graphs focus on modeling the small world phenomenon \cite{CL06}, and show that the average distance grows logarithmically in the number of vertices.
Recent work have enriched random graph models with more realistic features, e.g. community structures \cite{KPPS14}, shrinking diameters in temporal graphs \cite{LCKF10}.

Other existing mathematical models on special families of graphs related to distance queries include road networks \cite{AFGW10}, planar graphs and graphs with doubling dimension.
However none of them can capture the expansion properties that have been observed on sub-networks
of real-world social networks \cite{LLDM08}.
%It has been empirically observed that many social and information networks have a heavy-tailed degree
%distribution \cite{CSN09, EG17} --- concretely, the number of
%vertices whose degree is $x$, is proportional to $x^{-\beta}$.
%For real world graphs, the coefficient $\beta$ vary from $2$ to $10$, and are often greater than $3$ \cite{EG17}.

Previous work of Chen et al. \cite{CSTW09} presented a 3-approximate labeling scheme requiring storage $\tilde{O}(n^{(\plexp-2) /(2\plexp-3)})$ per vertex, on random power law graphs with $2 < \beta < 3$.
Our (+2)-stretch result improves upon this scheme in the amount of storage needed per vertex for $2 < \beta < 2.5$, with a much better stretch guarantee.
Another related line of work considers compact routing schemes on random graphs.
Enachescu et al.~\cite{EWG08} presented a 2-approximate compact routing scheme using space $O(n^{\sss 1.75})$ on \ER{} random graphs,
and Gavoille et al. \cite{GGHI15} obtained a 5-approximate compact routing scheme on random power law graphs.

%Apart from the Chung-Lu model and the configuration model
%that we have mentioned, the preferential attachment graph
%is also a fundamental random graph model \cite{D07}.
%The Kronecker graph allows a richer set of features by extending previous random graph models \cite{LCKF10}.%, though its mathematical properties are not as well-understood as the other three models.

\section{Our Approach}\label{sec_alg}

In order to motivate the idea behind our approach, we begin with an analysis of the pruned labeling algorithm on \ER~random graphs.
While the structures of real world graphs are far from \ER~graphs, the intuition obtained from the analysis will be useful.
Below we describe a simple proposition which states that for sparse \ER~graphs, the pruned labeling algorithm outputs $\tilde{\Omega}(n^2)$ landmarks.

\begin{proposition}\label{prop_prune_ER}
	Let $G = (V, E)$ be an undirected \ER~graph where every edge appears with probability $p = 2(\log n) / n$.
	For any ordering of the vertices $V = \set{x_1, x_2, \dots, x_n}$, with high probability over the randomness of $G$, the total number of landmarks produced by Algorithm \ref{alg_prune} is at least $\tilde{\Omega}(n^2)$.
\end{proposition}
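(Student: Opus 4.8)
The plan is to lower-bound $\sum_{x\in V}\size{L_B(x)}$ on its own, by showing that each of the first $\approx n/\polylog{n}$ vertices of the ordering survives pruning at all but $o(n)$ vertices. Fix the ordering $x_1,\dots,x_n$ and write $\cO_i=\set{x_1,\dots,x_{i-1}}$. Call $u$ \emph{$i$-bad} if some $z\in\cO_i$ lies on a shortest $x_i$--$u$ path, and let $B_i$ be the set of $i$-bad vertices. The first (and, I expect, the delicate) step is the structural claim: \emph{if $u$ is not $i$-bad, then the forward BFS from $x_i$ inserts $x_i$ into $L_B(u)$.} Granting this, the total number of landmarks is at least $\sum_{x}\size{L_B(x)}=\sum_{i=1}^{n}\#\set{u:x_i\in L_B(u)}\ge\sum_{i=1}^{n}\bigl(n-\size{B_i}\bigr)$.

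To prove the structural claim, note first that at the time the forward BFS from $x_i$ runs, $L_F(x_i)$ has been modified only by the backward BFS calls of $x_1,\dots,x_{i-1}$, so $L_F(x_i)\subseteq\cO_i$ throughout. Assume $u$ is not $i$-bad and argue by induction on $\dist(x_i,\cdot)$ that every vertex $w$ lying on some shortest $x_i$--$u$ path is popped by the pruned BFS at priority exactly $\dist(x_i,w)$ and is not pruned there. Indeed, every shortest $x_i$--$w$ path is a prefix of a shortest $x_i$--$u$ path, so its internal vertices also lie on shortest $x_i$--$u$ paths; by the induction hypothesis they are reached at their true priority and expanded, hence $w$ is popped at priority $\dist(x_i,w)$. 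If $w$ were then pruned, the quantity $\tilde d$ computed at $w$ would be at most $\dist(x_i,w)$; since $\tilde d\ge\dist(x_i,w)$ always (each stored value is the length of an actual path, hence at least the corresponding distance), the minimizer $y\in L_F(x_i)\cap L_B(w)\subseteq\cO_i$ would satisfy $\dist(x_i,y)+\dist(y,w)=\dist(x_i,w)$, placing $y\in\cO_i$ on a shortest $x_i$--$u$ path, a contradiction. (The base case $w=x_i$ holds because at that moment $L_F(x_i)\cap L_B(x_i)\subseteq\set{x_1,\dots,x_{i-1}}$ and every candidate $y=x_j$ there contributes value $2\dist(x_i,x_j)>0$.) Applying this with $w=u$, the priority of $u$ is $\dist(x_i,u)<\tilde d$, so $x_i$ is inserted into $L_B(u)$.

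Next I would bound $\E[\size{B_i}]$ over the randomness of $G$. Since $B_i=\bigcup_{z\in\cO_i}\set{u:z\text{ lies on some shortest }x_i\text{--}u\text{ path}}$, it suffices to show that for any fixed distinct $x,z$ one has $\E\bigl[\#\set{u:z\text{ lies on some shortest }x\text{--}u\text{ path}}\bigr]=\polylog{n}$. On the high-probability event of Fact~\ref{fact_er}, the diameter is at most $D+1$ and $\size{N_l(z)}\lesssim(np)^l$ for all $l<D$; hence, writing $a=\dist(x,z)\ge 1$, any such $u$ lies in $N_{D+1-a}(z)$, which has size $\lesssim(np)^{D+1-a}$ when $a\ge2$ and at most $n$ when $a=1$. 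Summing over $a$ and using Fact~\ref{fact_er}(b) — namely $\pr{\dist(x,z)\le a}\le(np)^{a+1}/(n(np-1))$ — the $a=1$ term contributes $n\cdot\pr{x\adj z}=np$, and the $a\ge2$ terms contribute $\lesssim\sum_{a\ge2}\frac{(np)^a}{n}\cdot(np)^{D+1-a}=\frac{D\,(np)^{D+1}}{n}\lesssim D\,(np)^2$, since $(np)^{D+1}\lesssim n\,(np)^2$ by the definition of $D$; both are $\polylog{n}$, and the rare failure of Fact~\ref{fact_er} contributes a negligible term because the count never exceeds $n$. Therefore $\E[\size{B_i}]\le\size{\cO_i}\cdot\polylog{n}$.

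Finally, put $i_0=\floor{n/\log^{c}n}$ with $c$ a sufficiently large constant, so that $\E\bigl[\sum_{i\le i_0}\size{B_i}\bigr]\le\polylog{n}\cdot i_0^2\le\tfrac12\,i_0\,n/\log n$; by Markov's inequality, with probability $1-o(1)$ we have $\sum_{i\le i_0}\size{B_i}\le\tfrac12\,i_0\,n$, and hence the total number of landmarks is at least $\sum_{i\le i_0}\bigl(n-\size{B_i}\bigr)\ge\tfrac12\,i_0\,n=\tilde{\Omega}(n^2)$. The main obstacle is the structural claim of the second paragraph: one must reason about the internal dynamics of Algorithm~\ref{alg_prune} — how $L_F$, $L_B$, and the pruned exploration interact — rather than merely invoking its correctness, and verify that pruning never fires at a vertex all of whose shortest paths from $x_i$ avoid $\cO_i$. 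The probabilistic estimate is comparatively routine, given the near-extremal diameter of sparse \ER{} random graphs recorded in Fact~\ref{fact_er}.
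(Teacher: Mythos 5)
Your proposal is correct and follows essentially the same strategy as the paper's proof sketch: identify the pairs $(x_i,u)$ for which no earlier vertex lies on a shortest $x_i$--$u$ path, argue that the pruning test cannot fire for such pairs so $x_i$ must be stored at $u$, and then bound the expected number of ``covered'' pairs using the growth and distance estimates of Fact~\ref{fact_er}. The only differences are cosmetic {\textemdash} you count covered vertices over all levels via the diameter bound and finish with Markov, where the paper restricts attention to the level set $\Gamma_D(x_i)$ {\textemdash} and your careful induction establishing the structural claim fills in a step the paper asserts without proof.
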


\begin{proof}[Proof sketch]
	We first introduce a few notations.
	Let $r=np$ denote the growth rate of $G$.
	Let $\epsilon=1/\log n$.
	Consider a vertex $x_i$ where $1 \le i \leq \epsilon n$.
	Denote by $X_{-i} = \{x_1,\ldots,x_{i-1}\}$.
	Consider any $u \in V$, if none of the shortest paths from $x_i$ to $u$ intersect with $X_{-i}$, then $(x_i, u)$ is called a {\it bad} pair.
	Note that $x_i$ must be added to $u$'s landmark set by Algorithm \ref{alg_prune}, because during the BFS from $x_i$, all estimates through $X_{-i}$ will be strictly larger than $\dist(x_i, u)$.
	Hence, to lower bound the total landmark sizes, it suffices to count the number of {\it bad} pairs.
	In the following, we show that in expectation for every $x_i$ where $1 \le i \leq\epsilon n$, there are at least $n/(\log n)^3$ vertices $u$ such that $(x_i, u)$ are {\it bad}.
	It follows that Algorithm \ref{alg_prune} requires at least $\epsilon n^2/(\log n)^3\geq \tilde{\Omega}(n^2)$ in expectation.

	Let $D = \lfloor\log_r n-2\rfloor$.
	Consider $\Gamma_D(x_i)$, the set of vertices at distance equal to $D$ from $x_i$.
	We count the number of {\it bad} vertices in $\Gamma_D(x_i)$ at follows.
	For each $1 \le k \le D$, consider the intersection $\Gamma_k(x_i) \cap X_{-i}$ and their subtree down to $\Gamma_D(x_i)$.

	Starting from any $y \in \Gamma_k(x_i) \cap X_{-i}$, the subtree of $y$ would result in {\it good} vertices in $\Gamma_D(x_i)$, whose distance from $x_i$ can be correctly estimated (c.f. line 15-16 in Algorithm \ref{alg_prune}).
	In expectation, the size of the intersection is $r^k \varepsilon$, because the probability that any two vertex has distance $k$ on $G$ is equal to $r^k / n$, and there are at most $\varepsilon n$ vertices in $X_{-i}$.
	Next, each $y$ results in $r^{D- k}$ vertices in its $(D-k)$-th level neighborhood.
	Combined together, the total number of {\it good} vertices which are covered by $\Gamma_k(x_{i}) \cap X_{-i}$ is at most
	$r^k \varepsilon \times r^{D-k} = \varepsilon r^D$.
	By summing over all $k \le D$, we obtain that the total number of {\it good} vertices in $\Gamma_D(x_i)$ is at most $D \varepsilon r^D$.

	On the other hand, the size of $\Gamma_D(x_i)$ is $r^D$.
	Hence the total number of {\it bad} vertices is at least $(1 - D\varepsilon) r^D \ge n / \log^3 n$.
	To show that the the proposition holds with high probability, it suffices to apply concentration results on neighborhood growth in the arguments above.
	We omit the details.
%	By standard concentration bounds in \ER{} random graphs\todo{cite?}, for all $1\leq k<\log_r n-2$, the number of vertices $u$ such that $d(x_i, u)=k$ is $(1\pm o(1))r^k$ with high probability and in expectation.
%	Hence, the expected number of vertices $u\in X_{<i}$ and $d(x_i,u)=k$ is $O(\epsilon r^k)$.
%	Now consider all vertices $u$ such that $d(x_i, u)=D$ for $D=\lfloor\log_r n-2\rfloor$.
%	On the one hand, there are $(1\pm o(1))r^D$ such vertices with high probability.
%	On the other hand, in expectation, at most
%	\[
%		\sum_{k=1}^D O(\epsilon r^k)\cdot (1+o(1))r^{D-k}=O(\epsilon D r^D)
%	\]
%	of them has a vertex in $X_{<i}$ on the shortest path to $x_i$, since there are $O(\epsilon r^k)$ vertices in $X_{<i}$ at distance $k$ and each of them can only have $(1+o(1))r^{D-k}$ neighbors within distance $D-k$ in expectation.

%	By Markov's inequality, with probability at least $1-O(\epsilon D)=1-O(\log\log n)$, there are at least
%	\[
%		\frac{1}{2} r^D=\Omega(n/r^3)
%	\]
%	vertices $u$ with no vertex in $X_{<i}$ on the shortest path to $x_i$.
%	This argument applies to all $x_i$ for $i\leq \epsilon n$.
%	By Markov's inequality again, with probability at least $1-O(\log\log n)$, at least half of $x_i$'s have this property.
%	This proves the lemma.
\end{proof}

The interesting point from the above analysis is that $\Theta(n)$ landmarks are added throughout the first $\varepsilon n$ vertices.
The reason is that there are no high degree vertices in \ER~graphs, hence the landmarks we have added in the beginning do not cover the shortest paths for many vertex pairs later.%
%\footnote{We will make the point more precise in our lower bound result in Theorem \ref{thm_sp3}.}
Secondly, a large number of distant vertices are added in the landmark sets, which do not lie on the shortest paths of many pairs of vertices.

%Recall that the analysis of Proposition \ref{prop_prune_ER} shows that the performance of the pruned label algorithm critically depends on the graph structure.
%Specifically, the algorithm  if we try to .
Motivated by the observation, we introduce our approach as follows.
We start with an ordering of the vertices.
For the top $H$ vertices in the ordering, we perform the same BFS procedure with pruning.
For the rest of the vertices, we simply grow a local ball up to a desired radius.
Concretely, only the vertices from the local ball will be used as a landmark.
Algorithm \ref{alg_approx} describes our approach in full.\footnote{Here we have omitted the details of the local backward BFS procedure, which can be derived similar to the local forward BFS procedure.}
As a remark, when the input graph $G$ is undirected, it suffices to run one of the forward or backward BFS procedures, and for each vertex, its forward and backward landmark sets can be combined to a single landmark set.

\newcommand{\algApprox}{{\sc ApproximatePruning}}
\newcommand{\localBfsFw}{{\sc localForwardBfs}}
\newcommand{\localBfsBk}{{\sc localBackwardBfs}}
\begin{algorithm}[!t]
	\small
	\caption{\algApprox}
	\label{alg_approx}
	\begin{algorithmic}[1]
		\Input A directed graph $G = (V, E)$;
		An ordering of $V$ $\set{x_1, x_2, \dots, x_n}$;
		The number of global landmarks $H$;
		The set of radiuses $\set{l_i}_{i=H+1}^n$.
		\State Let $\cO = \varnothing$, and $L_F(x) = L_B(x) = \varnothing$, for any $x \in V$.
		\For {$i = 1, \dots, H$}
			\State \bfsFw($x_i$)
			\State \bfsBk($x_i$)
			\State $\cO = \cO \cup \set{x_i}$
		\EndFor
		\For {$i = H+1, \dots, n$}
			\State \localBfsFw($x_i, l_i$)
			\State \localBfsBk($x_i, l_i$)
		\EndFor

		\Procedure{\localBfsFw}{$x_i, l_i$}
			\For {$y$ such that $\dist(x_i, y) \le l_{i} - 1$}
				\State $L_F(x_i) = L_F(x_i) \cup (y \rightarrow \dist(x_i, y))$
			\EndFor
			\For {$y$ such that $\dist(x_i, y) = l_i$ and $\exists z$ s.t. $\dist(x, z) = l_i - 1, (z, y) \in E, d_{out}(z) \leq d_{out}(y)$}
				\State $L_F(x_i) = L_F(x_i) \cup (y \rightarrow \dist(x_i, y))$ %\hfill (add $y$ only if it has a higher degree)
			\EndFor
		\EndProcedure
	\end{algorithmic}
\end{algorithm}

Recall that the backward and forward BFS procedures add a pruning step before enqueing a vertex (c.f. Algorithm \ref{alg_prune}).
%Hence Algorithm \ref{alg_approx} performs the exact same procedures as Algorithm \ref{alg_prune} for the first $K$ vertices.
%The difference is that Algorithm \ref{alg_approx} switches to lightweight local breadth first search procedures for the rest of the vertices.
 For each $x_i$ with $i > H$, the parameter $l_i$ controls the depth of the local ball we grow from $x_i$.
Furthermore, at the bottom layer, we only add vertices whose outdegree is higher than any of its predecessor to $x_i$'s landmark set.
The intuition is that vertices with higher outdegrees are more likely to be used as landmarks.

%In the next subsection, we examine the landmark set sizes on random power law graphs by appropriately setting the parameters $K$ and $\set{l_i}$.
%We show how to exact distance landmark sets with at most $\tilde{O}(\sqrt n)$ landmarks per vertex.
%We also obtain $(+1)$-stretch landmark sets with at most $\tilde{O}(n^{{(\plexp - 2)}/{(\plexp - 1)}})$ landmarks per vertex, and $(+2)$-stretch landmark sets with at most $\tilde{O}(n^{(\plexp - 2) / 2})$ landmarks per vertex.

We begin by analyzing Algorithm \ref{alg_approx} for \ER~graphs, as a comparison to Proposition \ref{prop_prune_ER}.
The following proposition shows that without using global landmarks, local balls of suitable radius suffice to cover all the desired distances.
The proof is by observing that for each vertex, if
we add the closest $\sqrt n$ vertices to the landmark set of every vertex, then the landmark sets of every pair of vertices will intersect with high probability, i.e. we have obtained a 2-hop cover.

\begin{proposition}\label{prop_approx}
	Let $G = (V, E)$ be an undirected random graph where each edge is sampled with probability $p = 2 (\log n) / n$.
	By setting $H = 0$ and $l_i = l = \ceil{\frac {\log n} {2 \log {np}}} + 1$ for all $1 \le i \le n$, we have that Algorithm \ref{alg_approx} outputs a 2-hop cover with at most $\tilde{O}(n^{3/2})$ landmarks with high probability.
\end{proposition}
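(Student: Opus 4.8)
The plan is to establish two facts on the high-probability event of Fact~\ref{fact_er} (together with connectivity of $G$ and the fact that the maximum degree of $G$ is $O(\log n)$, both of which also hold with high probability): first, that for \emph{every} ordered pair $x,y\in V$ the sets produced by Algorithm~\ref{alg_approx} contain a common vertex that lies on a shortest $x$--$y$ path, so the estimator \eqref{eq_query} is exact and we have a $2$-hop cover; second, that each vertex stores only $\tilde{O}(\sqrt n)$ landmarks, giving total size $\tilde{O}(n^{3/2})$. Since $G$ is undirected, \localBfsFw{} and \localBfsBk{} from the same vertex produce the same set, so I may write $L(x):=L_F(x)=L_B(x)$; by construction $N_{l-1}(x)\subseteq L(x)$, and the only extra elements of $L(x)$ lie in $\Gamma_l(x)$ and satisfy the outermost-layer degree rule.

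For the size bound, $|L(x)|\le |N_l(x)|=\sum_{k=0}^{l}|\Gamma_k(x)|$. The choice $l-1=\lceil \log n/(2\log np)\rceil$ gives $(np)^{l-1}\le (np)\cdot\sqrt n$, because $(np)^{\log n/\log np}=n$. For $n$ large one has $l<D$, so Fact~\ref{fact_er}(c) bounds each level, and the sum is dominated by the top level: $|N_l(x)|=O(\sqrt n\,\log^2 n)=\tilde{O}(\sqrt n)$. Summing over the $n$ vertices yields $\sum_x|L(x)|=\tilde{O}(n^{3/2})$.

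The core of the argument is exactness. Fix $x,y$, let $d=\dist(x,y)\le D+1$, and fix a shortest path $x=v_0,v_1,\dots,v_d=y$. A vertex $v_j$ is in $L(x)$ whenever $j\le l-1$ (it lies in the interior ball $N_{l-1}(x)$), and in $L(y)$ whenever $d-j\le l-1$. From the definitions, $2(l-1)=2\lceil a/2\rceil\ge \lceil a\rceil=D$ where $a=\log n/\log(np)$. If $d\le 2(l-1)$, the interval $[\,d-(l-1),\,l-1\,]$ is nonempty, so any index $j$ in it gives $v_j\in L(x)\cap L(y)$ with $v_j$ on a shortest path, and \eqref{eq_query} returns exactly $d$. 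The only remaining case is $d=D+1$ together with $2(l-1)=D$; here I use the outermost-layer rule. Then $\dist(x,v_{l-1})=l-1$, $\dist(x,v_l)=l$, $\dist(y,v_{l-1})=l$, $\dist(y,v_l)=l-1$. Of the adjacent path vertices $v_{l-1},v_l$, one has degree at most the other's: if $d_{v_{l-1}}\le d_{v_l}$, then $v_l\in L(x)$ (witnessed by $z=v_{l-1}$ in the bottom-layer condition) and $v_l\in L(y)$ (interior ball of $y$); otherwise $v_{l-1}\in L(y)$ (witnessed by $z=v_l$) and $v_{l-1}\in L(x)$ (interior ball of $x$). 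Either way the common landmark lies on the shortest path, so the estimate is exact, and $L_F,L_B$ form a $2$-hop cover.

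The step I expect to require the most care is the last one: the ceiling in the definition of $l$ makes the naive ``the midpoint of a shortest path lies in both radius-$(l-1)$ balls'' argument fall exactly one hop short of the diameter in a borderline case, and closing this gap is precisely the purpose of the degree comparison at the outermost layer of the local ball. Verifying that along any shortest path the degree condition is met from at least one side of the critical edge is immediate, but one must be careful to check that this borderline case is the \emph{only} gap (which follows from $2(l-1)\ge D$) and that $\dist(y,v_l)=l-1$ there (which uses $D=2l-2$). The remaining ingredients --- concentration of neighborhood sizes, the diameter bound, connectivity, and the $O(\log n)$ maximum-degree bound --- are standard and quoted from or immediate consequences of Fact~\ref{fact_er}.
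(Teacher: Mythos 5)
Your proof is correct and follows essentially the same approach as the paper's: bound $|L(x)|$ by $|N_l(x)|$ via Fact~\ref{fact_er}(c), and for exactness use the diameter bound $D+1\le 2l-1$, a midpoint argument when $\dist(x,y)\le 2(l-1)$, and the outermost-layer degree comparison for the borderline case $\dist(x,y)=2l-1$. If anything, your version is slightly more careful than the paper's, since you explicitly verify that the common landmark lies on a shortest path (so the query returns the exact distance), whereas the paper only argues that the landmark sets intersect.
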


\begin{proof}
	Denote by $L(x)$ the landmark set obtained by Algorithm \ref{alg_approx}, for every $x \in V$.
	We will show that with high probability:
	\begin{itemize}
		\item[a)] For all $x_i, x_j \in V$, $L(x_i) \cap L(x_j) \neq \varnothing$. This implies that $L(\cdot)$ is a 2-hop cover.
		\item[b)] The size of $L(x_i)$ is less than $\tilde{O}(\sqrt n)$, for all $x_i \in V$.
	\end{itemize}
	Claim a) follows because the diameter of $G$ is at most $2l - 1$ with high probability by Fact \ref{fact_er}.
	Note that $L(x_i)$ contains $N_{l-1}(x_i)$, the set of vertices with distance at most $l - 1$. %$ = \ceil{\frac {\log n} {2\log (np)}}$ from $x_i$.
	If $\dist(x_i, x_j)\leq (l-1)+(l-1)$, $N_{l-1}(x_i)$ and $N_{l-1}(x_j)$ already intersect.
	Otherwise, since the diameter is at most $2l-1$, these two neighborhoods must be connected by an edge $e$.
	Suppose between $e$'s two endpoints, the one with a lower degree is on $x_i$'s side, then the local BFS from $x_i$ must add the other endpoint to $L(x_i)$, and vice versa.
	Therefore, $L(x_i)$ must intersect with $L(x_j)$.

	Claim b) is because $L(x_i)$ is a subset of $N_{l}(x_i)$. % , the set of vertices with distance $l_i$ from $x_i$.
	By Fact \ref{fact_er}, the size of $N_{l}(x_i)$ is at most $4 (np)^{l} \lesssim \tilde{O}(\sqrt n)$.
	Hence, the proof is complete.
\end{proof}

\section{Random Power Law Graphs}\label{sec_powerlaw}

In this section we analyze our algorithm on random power law graphs.
We begin with the simple case of $\beta > 3$, which generalizes the result on \ER~graphs.
Because the technical intuition is the same with Proposition \ref{prop_approx}, we describe the result below and omit the proof.

\begin{proposition}\label{prop_beta3}
	Let $G = (V, E)$ be a random power law graph with average degree $\avgdeg > 1$ and power law exponent $\beta > 3$.
	For each $x_i \in V$, let $l_i$ be the smallest integer such that the number of edges between $N_{l_i}(x_i)$ and $V \backslash N_{l_i}(x_i)$ is at least $\delta \sqrt{n}$, where $\delta = 5 \sqrt{\avgdeg \log n}$.

	By setting $H = 0$ and $\set{l_i}_{i=1}^n$, Algorithm \ref{alg_approx} outputs a 2-hop cover with high probability.
	Moreover, each vertex uses at most ${O}(\sqrt n \log^2 n)$ landmarks.
\end{proposition}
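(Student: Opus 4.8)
The plan is to run the proof of Proposition~\ref{prop_approx} essentially verbatim, replacing the \ER{} facts of Fact~\ref{fact_er} by their Chung--Lu analogues valid for $\plexp>3$. The one structural input I need is a neighbourhood growth lemma: because the second moment of the weight distribution is finite when $\plexp>3$, the breadth-first search from any vertex of the giant component expands, level by level, by a branching factor concentrated around the constant $\tilde\avgdeg:=\E[p^2]/\E[p]$ (which exceeds $1$ in the supercritical regime), as long as the explored ball has volume $o(n)$; moreover $\vol{V}=\Theta(n)$ since the weights are i.i.d.\ with finite mean, and --- crucially for $\plexp>3$ --- the maximum weight is $\Theta(n^{1/(\plexp-1)})=o(\sqrt n)$. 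All of these are standard for the Chung--Lu model and can be quoted (e.g.\ \cite{CL06}) or re-derived by the same first/second moment computations behind Fact~\ref{fact_er}; I would state them once, uniformly over all starting vertices, via a union bound.

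Given the lemma, the size bound is the easy half. Since $L(x_i)\subseteq N_{l_i}(x_i)$ it suffices to show $|N_{l_i}(x_i)|=O(\sqrt{n\log n})$. By minimality of $l_i$ the number of edges leaving $N_{l_i-1}(x_i)$ is below $\delta\sqrt n$; as every vertex of $\Gamma_{l_i}(x_i)$ is entered by at least one such boundary edge, $|\Gamma_{l_i}(x_i)|<\delta\sqrt n$, and since the ball occupies an $o(1)$ fraction of $\vol{V}$ a constant fraction of the edges incident to $N_{l_i-1}(x_i)$ are boundary edges, so the geometric growth yields $\vol{N_{l_i-1}(x_i)}\lesssim\delta\sqrt n$ and hence $|N_{l_i-1}(x_i)|\lesssim\delta\sqrt n$. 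Thus $|L(x_i)|=O(\sqrt{n\log n})$, comfortably inside the claimed $O(\sqrt n\log^2 n)$, and a union bound over $i$ finishes this part. (A vertex whose component is too small for its edge boundary ever to reach $\delta\sqrt n$ simply gets $L(x_i)$ equal to its whole component, of size $O(\log n)$ whp, which is harmless here and below.)

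For the 2-hop property, fix a pair $x_i,x_j$ in the giant component. If $N_{l_i-1}(x_i)\cap N_{l_j-1}(x_j)\neq\varnothing$ we are done; otherwise it suffices to exhibit an edge between these two balls, after which the case analysis on the degree-comparison rule of Algorithm~\ref{alg_approx} --- identical to the one in the proof of Proposition~\ref{prop_approx} --- places a common vertex in $L(x_i)\cap L(x_j)$ (the lower-degree endpoint's owner adds the other endpoint, which already lies in the other ball). In the Chung--Lu model two disjoint sets $A,B$ have no edge between them with probability at most $\exp(-\vol{A}\vol{B}/\vol{V})$, so it is enough that $\vol{N_{l_i-1}(x_i)}\cdot\vol{N_{l_j-1}(x_j)}\gtrsim\vol{V}\log n$, i.e.\ that each ball has volume $\Omega(\sqrt{n\log n})$; the choice $\delta=5\sqrt{\avgdeg\log n}$ is calibrated precisely so that $(\delta\sqrt n)^2/\vol{V}=\Theta(\log n)$ with a large enough constant to survive the union bound over the $\leq n^2$ pairs. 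Everything therefore reduces to the single estimate $\vol{N_{l_i-1}(x_i)}=\Omega(\sqrt{n\log n})$ for every $i$.

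That estimate is where I expect the real work to be, because the definition of $l_i$ constrains the edge boundary of $N_{l_i}(x_i)$, not of $N_{l_i-1}(x_i)$, so a priori the BFS could make one large jump at level $l_i$ --- driven by a handful of high-weight hubs --- while leaving $N_{l_i-1}(x_i)$ of small volume. The finite-second-moment regime is exactly what rules this out: the volume newly discovered at level $l_i$ is at most the number of edges entering it (which is $<\delta\sqrt n$) times the maximum weight $O(n^{1/(\plexp-1)})=o(\sqrt n)$, so in order to push the boundary up to $\delta\sqrt n=\Theta(\sqrt{n\log n})$ the boundary of $N_{l_i-1}(x_i)$ must already be polynomially large; at that scale no single vertex contributes more than an $o(1)$ fraction of the volume, the branching factor concentrates at the constant $\tilde\avgdeg$, and one gets $\vol{N_{l_i-1}(x_i)}\gtrsim|\partial N_{l_i-1}(x_i)|=\Omega(\delta\sqrt n)$ (the last step since the edges leaving $\Gamma_{l_i}(x_i)$ number at least $\delta\sqrt n$ but are at most $\tilde\avgdeg$ times the edges entering it). Assembling the growth lemma, this boundary estimate, the edge-existence bound, the combinatorial case analysis, and the two union bounds --- together with the trivial handling of small components --- completes the proof.
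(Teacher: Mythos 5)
The paper never writes out a proof of Proposition~\ref{prop_beta3} --- it explicitly defers to the ``same intuition'' as Proposition~\ref{prop_approx} --- so there is no line-by-line comparison to make; your proposal is correct and follows exactly the route the paper intends (adapt Proposition~\ref{prop_approx}, replacing Fact~\ref{fact_er} by Proposition~\ref{prop:connect} together with the growth estimates of Proposition~\ref{lem_expand}, and reuse the degree-comparison case analysis of Proposition~\ref{lem:acc2}). The one place where you do genuine extra work, and rightly so, is the off-by-one in the statement: here $l_i$ is defined by the edge boundary of $N_{l_i}(x_i)$, whereas the 2-hop argument needs $\vol{N_{l_i-1}(x_i)} = \Omega(\sqrt{n\log n})$, which does not follow from the definition alone. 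Your bootstrap --- the last level cannot be reached in one ``jump'' because the maximum weight is $O(n^{1/(\plexp-1)}) = o(\sqrt n)$ for $\plexp>3$, hence the boundary of $N_{l_i-1}(x_i)$ is already polynomially large, at which point concentration of the branching factor $r=\vols{V}/\vol{V}$ gives $\vol{\Gamma_{l_i-1}(x_i)}\gtrsim \delta\sqrt n/r$ --- is the right fix and is exactly the step the paper's omission glosses over. Two minor caveats: (i) after losing the factor $r$ the per-pair failure probability is $\exp(-25\log n/r^2)$, which survives the union bound over $n^2$ pairs only for small $r$, so one should either take $\delta = C\sqrt{\avgdeg\log n}$ with $C$ depending on $r$ or read the definition of $l_i$ with the $N_{l_i-1}$ convention of Theorem~\ref{thm_sp2_ub}, under which no factor is lost --- a constant-level issue, not a gap; (ii) the edge-existence bound must be applied conditionally on the BFS exploration up to the relevant level (as in Proposition~\ref{lem:acc2}), since the balls are random sets; you gesture at this but it deserves a sentence in a full write-up.
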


\noindent{\bf Remark:} The high level intuition behind our algorithmic result is that as long as the breadth-first search process of the graph grows neither too fast nor too slow, but rather at a proper rate, then an efficient distance labeling scheme can be obtained.
Proposition \ref{prop_beta3} can be easily extended to configuration models with bounded degree variance.
It would be interesting to see if our results extend to
preferential attachment graphs and Kronecker graphs.

\bigskip
\noindent{\bf The case of $2 < \beta < 3$:} Next we describe the more interesting case with power law exponent $2 < \beta < 3$.
Here the graph contains a large number of high degree vertices.
By utilizing the high degree vertices, we show how to obtain exact distance landmark schemes, (+1)-stretch schemes and (+2)-stretch schemes.
The number of landmarks used varies depending on the value of $\beta$.
%See Figure \ref{fig:result} for an illustration of our results.
We now state our main result as follows.

%\begin{figure}[!bt]
%	\centering
%	\includegraphics[width=2.5in]{result.eps}
%	\caption{Illustrating the results of Theorem \ref{thm_sp2_ub}.
%	The $x$-axis is the exponent of the power law degree distribution.
%	The $y$-axis is the exponent of the number of landmarks used, with base $n$.}
%	\label{fig:result}
%\end{figure}

\begin{theorem}\label{thm_sp2_ub}
	Let $G = (V, E)$ be a random power law graph with average degree
	$\avgdeg > 1$ and exponent $2< \beta < 3$.
	Let
	\begin{align*}
		K = \begin{cases}
				\sqrt n, \mbox{for }~2.5 \le \plexp \le 3 \\
				n^{\threshold}, \mbox{for }~2 < \plexp < 2.5.
		\end{cases}
	\end{align*}
	Let $H$ be the number of vertices whose degree is at least $K$ in $G$.
	Let $\pi = \set{x_i}_{i=1}^n$ be any ordering of vertices $V$ by their degrees in a non-increasing order.
	For each vertex $x_i \in V$, let $l_i$ be the smallest integer such that the number of edges between $N_{l_i - 1}(x_i)$ and $V \backslash N_{l_i -1}(x_i)$ is at least $\delta n^{(\plexp-2) / (\plexp-1)}$, where $\delta = 4 \avgdeg\cdot \log^2 n$.

	With ordering $\pi$, parameters $H$ and $\set{l_i}_{i=H+1}^n$,
	Algorithm \ref{alg_approx} outputs a 2-hop cover with high probability.
	Moreover, the maximum number of landmarks used by any vertex is at most
	\[ O\left(\max\left(n^{\frac{\beta - 2}{\beta - 1}}, n^{\frac{3 - \beta} {4 - \beta}}\right) \log^3 n\right). \]
\end{theorem}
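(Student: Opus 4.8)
The plan is to prove the two assertions separately --- that the output $L(\cdot)$ is a $2$-hop cover, and that $|L(v)|$ obeys the stated bound for every $v$ --- and both rest on a high-probability \emph{regularity event} for $G$ that controls the breadth-first growth of every vertex, analogous to Fact~\ref{fact_er} but for the Chung--Lu power-law model. Concretely I would first establish (by a union bound over vertices and levels together with concentration for sums of independent Chung--Lu edge indicators) that, with high probability and simultaneously for all $x\in V$: the number of vertices of weight $\ge w$ is $\Theta(n(\xmin/w)^{\beta-1})$ for all relevant $w$ and degrees agree with weights up to polylog factors; $\diam{G}=O(\log\log n)$; for every $l$, $\vol{\Gamma_l(x)}$ and $|\Gamma_l(x)|$ are within polylog factors of the values predicted by the Chung--Lu branching recursion; the edge-boundary $e(N_{k}(x),V\backslash N_{k}(x))$ is nondecreasing in $k$ until $N_k(x)$ saturates (super-expansion); and $|\Gamma_k(x)|\le e(N_{k-1}(x),V\backslash N_{k-1}(x))$. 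All subsequent claims are deterministic consequences of this event.

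For the $2$-hop cover property, fix $x,y$ and a shortest $x$--$y$ path $P$. \textbf{Case A: $P$ passes through a vertex of degree $\ge K$ (an $H$-vertex).} During the first loop of Algorithm~\ref{alg_approx} we run \bfsFw/\bfsBk\ with pruning on exactly the $H$ top-degree vertices in the order $\pi$, and every landmark present in any $L_F,L_B$ at that stage is itself an $H$-vertex; hence the pruned-labeling correctness of \cite{AIY13}, applied with landmark universe $\{x_1,\dots,x_H\}$, implies that some $H$-vertex lying on a shortest $x$--$y$ path is inserted into both $L(x)$ and $L(y)$, so $L(x)\cap L(y)\ne\varnothing$ and the query~\eqref{eq_query} returns $\dist(x,y)$. \textbf{Case B: every vertex of $P$ has degree $<K$.} In particular $x,y\notin H$, so both have local balls of radii $l_x,l_y$. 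I would first show $\dist(x,y)\le l_x+l_y-1$ using a lemma on the effective diameter of the subgraph induced by degree-$<K$ vertices (that subgraph has constant average degree, so an \ER-type argument applies, exploiting that each of $N_{l_x-1}(x),N_{l_y-1}(y)$ already has edge-boundary $\ge\delta n^{(\beta-2)/(\beta-1)}$). Given this, either $N_{l_x-1}(x)\cap N_{l_y-1}(y)\ne\varnothing$ (when $\dist(x,y)\le l_x+l_y-2$), or $\dist(x,y)=l_x+l_y-1$ and the $P$-edge $e=(u,v)$ with $u\in\Gamma_{l_x-1}(x)$, $v\in\Gamma_{l_y-1}(y)$ witnesses the intersection: if $d_u\le d_v$ the frontier rule of \localBfsFw\ adds $v$ to $L(x)$ while $v\in N_{l_y-1}(y)\subseteq L(y)$; if $d_u>d_v$ it adds $u$ to $L(y)$ while $u\in N_{l_x-1}(x)\subseteq L(x)$. (This is exactly the mechanism already used in Proposition~\ref{prop_approx}.)

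For the size bound, fix $v$. If $v\in H$ then $L(v)\subseteq\{x_1,\dots,x_H\}$, so $|L(v)|\le H$; otherwise $v=x_i$ with $i>H$ and $L(v)\subseteq\{x_1,\dots,x_H\}\cup N_{l_v-1}(v)\cup F_v$, where $F_v\subseteq\Gamma_{l_v}(v)$ is the set of frontier vertices admitted by the degree test in \localBfsFw. I would bound the three parts as follows. (a) $H=|\{u:d_u\ge K\}|$: by the weight-tail estimate $H=\Theta(n(\xmin/K)^{\beta-1})$ up to polylog, which is $\tilde{O}(n^{(3-\beta)/2})\le\tilde{O}(n^{(\beta-2)/(\beta-1)})$ for $2.5\le\beta\le 3$ (using $K=\sqrt n$) and $\tilde{O}(n^{(3-\beta)/(4-\beta)})$ for $2<\beta<2.5$ (using $K=n^{1/((4-\beta)(\beta-1))}$). (b) $|N_{l_v-1}(v)|\le\sum_{k=0}^{l_v-1}|\Gamma_k(v)|\le\sum_{k=0}^{l_v-1}e(N_{k-1}(v),V\backslash N_{k-1}(v))$; by minimality of $l_v$ and monotonicity of the edge-boundary each summand is $<\delta n^{(\beta-2)/(\beta-1)}$, and $l_v=O(\log\log n)$, so $|N_{l_v-1}(v)|=\tilde{O}(n^{(\beta-2)/(\beta-1)})$. (c) $|F_v|$: vertices of $F_v$ of degree $\ge K$ are $H$-vertices, so at most $H$ of them; a vertex $y\in F_v$ with $d_y<K$ has a predecessor $z\in\Gamma_{l_v-1}(v)$ with $d_z\le d_y<K$, so it lies in $\cN(\{z\in\Gamma_{l_v-1}(v):d_z<K\})$, and, working in dyadic bands $d_y\in[T,2T)$ for $T<K$, the number of such $y$ is at most the minimum of the global count of degree-$\approx T$ vertices, $\tilde{O}(nT^{-(\beta-1)})$, and the Chung--Lu estimate for the number of edges between the degree-$<2T$ part of $\Gamma_{l_v-1}(v)$ and the degree-$\approx T$ class; summing over $T$ must yield $\tilde{O}\bigl(\max(n^{(\beta-2)/(\beta-1)},n^{(3-\beta)/(4-\beta)})\bigr)$, and this is exactly where the two-regime choice of $K$ enters.

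The main obstacle is part (c). The crude estimates --- ``$|\Gamma_{l_v-1}(v)|\le\delta n^{(\beta-2)/(\beta-1)}$ with each vertex contributing $\le K^{3-\beta}$ up-neighbors'', or ``$|F_v|\le e(N_{l_v-1}(v),V\backslash N_{l_v-1}(v))$'' --- overshoot the target by a polynomial factor, so one must use simultaneously that (i) the BFS from $v$ exhausts the highest-degree vertices within its first few levels, which caps the residual second moment of $\Gamma_{l_v-1}(v)$ and hence the volume $\vol{\Gamma_{l_v-1}(v)}$ carried by each degree band, and (ii) the degree test prevents any low-degree predecessor from charging too many distinct frontier vertices. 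I expect this to come down to a structural lemma, proved by a union bound over vertices plus concentration for Chung--Lu edge indicators, stating that for every $v$, every $l\le l_v$, and every threshold $w$, both the number of degree-$\ge w$ vertices inside $N_l(v)$ and the degree-$<w$ part of $\vol{N_l(v)}$ are $\tilde{O}$ of their Chung--Lu expectations; the remaining work is then the (routine but lengthy) dyadic optimization over $w$, $T$, and $l$, matched against the claimed exponent.
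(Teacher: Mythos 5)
Your overall architecture for the size bound (global landmarks of degree $\ge K$, interior layers, boundary layer) matches the paper's Proposition~\ref{lem:size2}, and your parts (a) and (b) are essentially correct. But there are two genuine gaps. The more serious one is in Case B of your 2-hop-cover argument. The distance bound $\dist(x,y)\le l_x+l_y-1$ cannot be obtained from an \ER-type diameter argument on the subgraph induced by degree-$<K$ vertices: the radius $l_i$ is chosen precisely so that $\vol{\level_{l_i-1}(x_i)}\approx \delta n^{(\beta-2)/(\beta-1)}$, and two such frontiers made of low-degree vertices are joined by an edge with probability about $1-\exp(-\tilde{\Theta}(n^{2(\beta-2)/(\beta-1)-1}))=1-\exp(-\tilde{\Theta}(n^{(\beta-3)/(\beta-1)}))=o(1)$ since $\beta<3$; they do \emph{not} interconnect directly, nor within the low-degree subgraph in the required number of hops. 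The mechanism the paper uses --- and which your proposal never invokes --- is the maximum-weight vertex $\starnode$ with $\w{\starnode}=\Omega(n^{1/(\beta-1)})$: the boundary threshold is calibrated exactly so that $\vol{\level_{l(x)-1}(x)}\cdot \w{\starnode}/\vol{V}=\Omega(\delta)$ (note $n^{(\beta-2)/(\beta-1)}\cdot n^{1/(\beta-1)}=n$), whence $\dist(x,\starnode)\le l(x)$ for every $x$ with high probability (Proposition~\ref{lem:acc2}). That single fact yields both the needed distance bound $\dist(x,y)\le l(x)+l(y)$ and, in the extremal case $\dist(x,y)=l(x)+l(y)$, the common landmark $\starnode$ itself (a global landmark since its degree exceeds $K$). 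Without it, your Case B rests on a claim that is false as stated.

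Second, your part (c) --- the boundary layer --- is a plan rather than a proof. You correctly identify that the crude bounds overshoot and that one must combine a per-vertex cap on admitted up-neighbors with control of the frontier's degree profile, but you leave the decisive exponent computation (``the routine but lengthy dyadic optimization \dots matched against the claimed exponent'') unverified, and this is exactly where the theorem's bound and the two-regime choice of $K$ are decided. The paper's route is: a frontier vertex of weight $\w y\le 2K$ admits at most $\max(c_1\w y^{3-\beta},c_2\log n)$ neighbors of higher degree and degree at most $K$ (Proposition~\ref{lem:skewdeg}), and $\sum_{y\in\level_{l(x)-1}(x)}\w y^{3-\beta}$ is controlled via $\vol{\level_{l(x)-2}(x)}\le 4\delta n^{(\beta-2)/(\beta-1)}$ together with the truncated moment $\sum_z \w z^{4-\beta}\istrue{\w z\le 2K}$, whose behavior ($\Theta(nK^{5-2\beta})$ for $2<\beta<2.5$ versus $\Theta(n)$ for $\beta\ge 2.5$) is what splits the two regimes. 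Until that calculation is carried out, the claimed exponent $\max\bigl(n^{(\beta-2)/(\beta-1)},n^{(3-\beta)/(4-\beta)}\bigr)$ is not established.
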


The above theorem says that in Algorithm $\ref{alg_approx}$, first we use vertices whose degrees are at least $K$ as global landmarks.
Then for the other vertices $x_i$, we grow a local ball of radius $l_i$, whose size is (right) above $n^{ (\plexp-2) / (\plexp-1)}$.
The two steps together lead to a 2-hop cover.
We now build up the intuition for the proof.

\bigskip
\noindent{\bf Building up a $+1$-stretch scheme:}
First, it is not hard to show that $G$ contains a heavy vertex
whose degree is $n^{1/(\beta - 1)}$, by analyzing the power law distribution.
Note that $K \le n^{1 / (\beta - 1)}$, hence we have added all such high degree vertices as global landmarks.
This part, together with the local balls, already gives us a $(+1)$-stretch landmark scheme.

To see why, consider two vertices $x_i, x_j$.
If their local balls (of size $n^{(\beta-2) / (\beta-1)}$) already intersect, then we can already compute their distances correctly from their landmark sets.
Otherwise, since the bottom layers of $x_i$ and $x_j$ already have weight/degree $n^{(\beta - 2) / (\beta - 1)}$,
they are at most two hops apart, by connecting to the heavy vertex with degree $n^{1 / (\beta - 1)}$.
Recall that the heavy vertex is added to the landmark sets of every vertex.
Hence, the estimated distance is at most off by one.
As a remark, to get the $(+1)$-stretch landmark scheme, the number of landmarks needed per vertex is on the order of $n^{(\beta - 2) / (\beta - 1)}$.
This is because we only need to use vertices whose degrees are at least $n^{1 / (\beta - 1)}$ as global landmarks (there are only $\log n$ of them), as opposed to $H$ in Theorem \ref{thm_sp2_ub}.

\bigskip
\noindent{\bf Fixing the $+1$-stretch:} To obtain exact distances, for each vertex on the boundary of radius $l_i - 1$, we add all of its neighbors with a higher degree to the landmark set (c.f. line 15-17 in Algorithm \ref{alg_approx}).
Whenever there is an edge connecting the two boundaries, the side with a lower degree will add the other endpoint as a landmark, which resolves the (+1)-stretch issue.
For the size of landmark sets, it turns out that fixing the $(+1)$-stretch for the case $2 < \beta < 2.5$ significantly increases the number of landmarks needed.
Specifically, the costs are $n^{(5 - \beta) / (4 - \beta)}$ landmarks per node.
%The proofs are left to the full paper.

\bigskip
\noindent{\bf Intuition for the $+2$-stretch scheme:} As an additional remark, one can also obtain a $(+2)$-stretch landmark sketch by setting $l_i$ in Algorithm \ref{alg_approx} in a way such that every vertex stores the closest $\tilde{\Theta}(n^{(\beta / 2) - 1})$ vertices in its landmark set.
This modification leads to a $(+2)$-stretch scheme, because for two vertices $x, y$, once the bottom layers of $x, y$ have size at least $\tilde{\Theta}(n^{(\beta/2) -1})$, they are at most three hops away from each other.
The reason is that with high probability, the bottom layer will connect to a vertex with weight $\Omega(\sqrt n)$ in the next layer (which will all be connected), as it is not hard to verify that the volume of all vertices with weight $\sqrt n$ is $\Omega(n^{(4 - \beta) / 2})$.
By a similar proof to Theorem \ref{thm_sp2_ub}, the maximum number of landmarks used per vertex is at most $\tilde{O}(n^{(\beta - 2) / 2})$.

\bigskip
We refer the reader to Appendix \ref{sec:pf2} for details of the full proof.
The technical components involve carefully controlling the growth of the neighborhood sets by using concentration inequalities.
%come up with a deterministic graph model when the degree distribution has high variance, and then characterize the distance labeling on such a graph model.

\section{Experiments}\label{sec:experiment}

In this section, we substantiate our results with experiments on a diverse collection of network datasets.
A summary of the findings are as follows.
We first compare Algorithm \ref{alg_approx} with the pruned labeling algorithm \cite{AIY13}.
Recall that our approach differs from the pruned labeling algorithm by only performing a thorough BFS for a small set of vertices, while running a lightweight local ball growing procedure for most vertices.
We found that this simple modification leads to 1.5-2.5x reduction in number of landmarks stored.
The preprocessing time is reduced by 2-15x as well.
While our algorithm does not always output the exact distance like the pruned labeling algorithm, we observe that the stretch is at most $1\%$, relative to the average distance.
%\todo{result for the top-k test}

Next we compare our approach to two approximate distance sketches with strong theoretical guarantees, Das Sarma et al. sketch \cite{BG12,DGNP10} and a variant of Thorup-Zwick's 3-approximate distance oracle \cite{TZ05}, which uses high degree vertices as global landmarks \cite{CSTW09}.
We observe that our approach incurs lower stretch and requires less space compared to Das Sarma et al. sketch.
The accuracy of Thorup-Zwick sketch is comparable to ours, but we require much fewer landmarks.

\subsection{Experimental Setup}

To ensure the robustness of our results, we measure performances on a diverse collection of directed and undirected graphs, with the datasets coming from different domains, as described by Table \ref{table:stat}.
Stanford, Google and BerkStan are all Web graphs in which edges are directed.
DBLP (collaboration network) and Youtube (friendship network) are both undirected graphs where there is one connected component for the whole graph.
Twitter is a directed social network graph with about $84\%$ vertices inside the largest strongly connected component.
All the datasets are downloaded from the Stanford Large Network Dataset Collection \cite{snapnets}.

\begin{table}[!hbt]
	\centering
	{\small\begin{tabular}{| l | l | l | c | c | c |}
		\hline
		graph & \# nodes & \# edges & category & type \\
		\hline
		DBLP & 317K & 1.0M & Collaboration & Undirected \\
		\hline
		Twitter & 81K & 1.8M & Social & Directed \\
		\hline
		Stanford & 282K & 2.3M & Web & Directed \\
		\hline
		Youtube & 1.1M & 3.0M & Social & Undirected \\
		\hline
		Google & 876K  & 5.1M & Web & Directed \\
		\hline
		BerkStan & 685K & 7.6M & Web & Directed \\
%		\hline
%		Amazon & 410,236 & 3,356,824 & Co-Purchase & 7.6 & ? \\
		\hline
	\end{tabular}}
	\caption{Datasets used in experiments.}
	\label{table:stat}
\end{table}

\vspace{-0.15in}
\noindent{\bf Implementation details:}
We implemented all four algorithms in Scala, based on a publicly available graph library.\footnote{\url{https://github.com/teapot-co/tempest}}
The experiments are conducted on a 2.30GHz 64-core Intel(R) Xeon(R) CPU E5-2698 v3 processor, 40MB cache, 756 GB of RAM.
Each experiment is run on a single core and loads the graph into memory before beginning any timings.
The RAM used by the experiment is largely dominated by the storage needed for the landmark sets.

\begin{figure*}[!hbt]
	\centering
	\begin{subfigure}{.333\textwidth}
		\centering
		\includegraphics[width=1.1\linewidth]{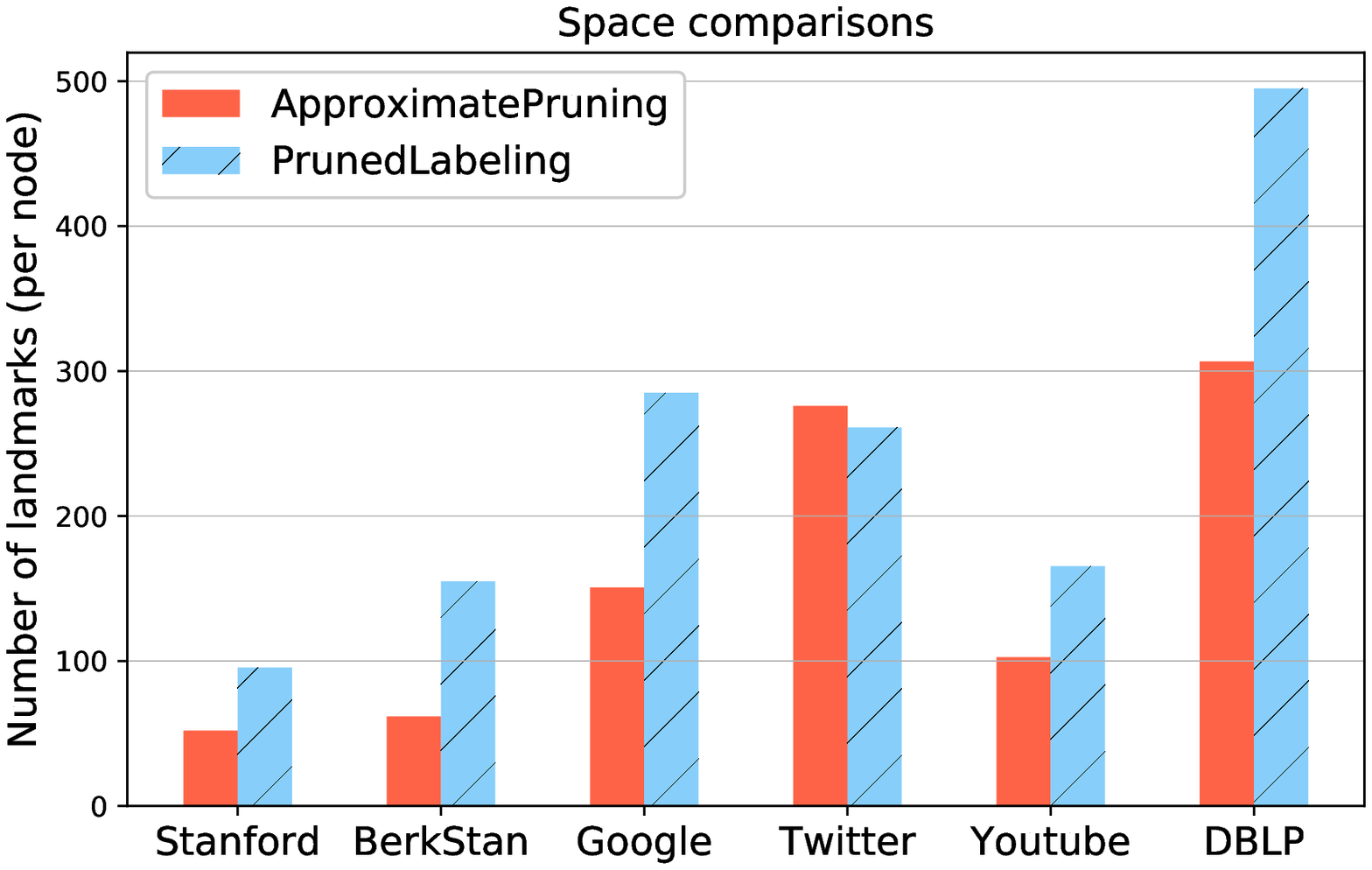}
	\end{subfigure}%
	\begin{subfigure}{.333\textwidth}
		\centering
		\includegraphics[width=1.1\linewidth]{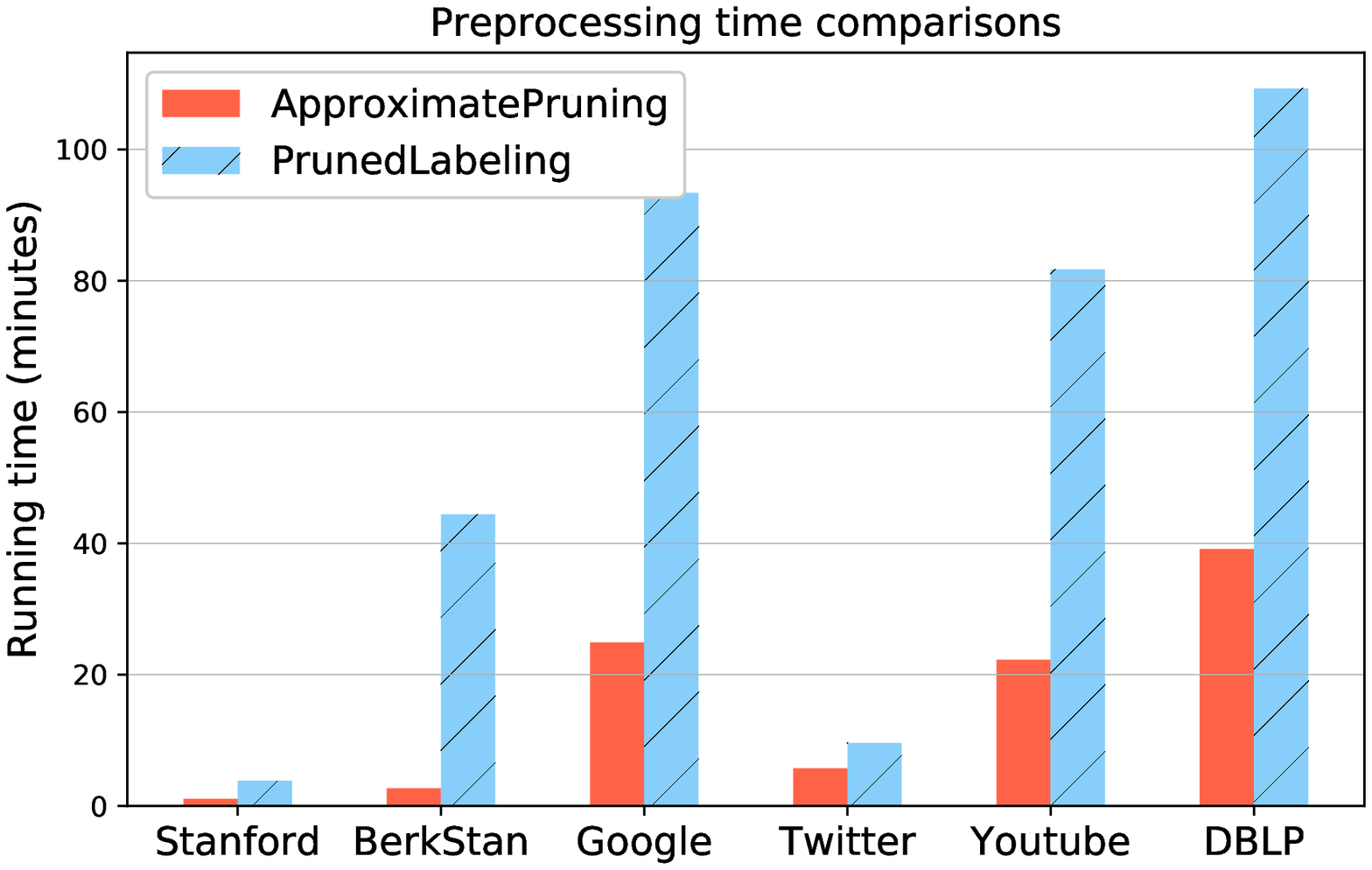}
	\end{subfigure}%
	\begin{subfigure}{.333\textwidth}
		\centering
		\includegraphics[width=1.1\linewidth]{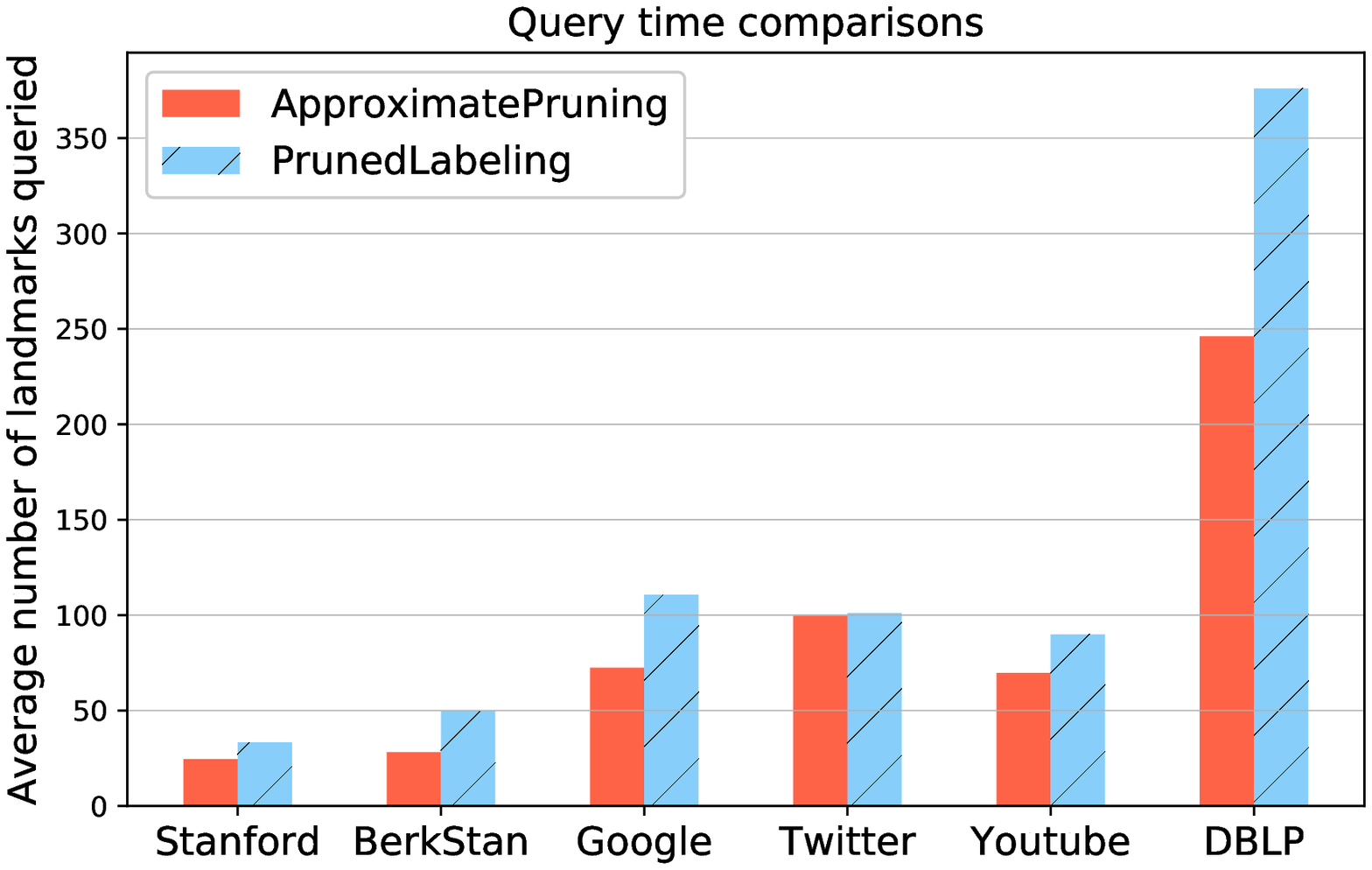}
	\end{subfigure}
	\vspace{-0.2in}
	\caption{Comparing the efficiency of our approach to the pruned labeling algorithm.}
	\label{fig_exact}
\end{figure*}

\begin{table*}[!hbt]
	\centering
	\begin{tabular}{| l | l | l | l | l | l | l |}
		\hline
		& Stanford & BerkStan & Google & Twitter & Youtube & DBLP \\
		\hline
		Relative Average Stretch & 0.37\% &
		0.20\%  &
		0.51\%  &
		0.29\%  &
		0.33\%  &
		1.1\%   \\
		\hline
		Maximum Relative Stretch & 21/10 &
		10/7 &
		8/5 &
		4/3 &
		4/3 &
		7/5 \\
		\hline
		Average Additive Stretch &  0.046 &
		0.030  &
		0.060  &
		0.014  &
		0.018  &
		0.075  \\
		\hline
		Maximum Additive Stretch & 11 &
		3 &
		3 &
		1 &
		2 &
		2 \\
		\hline
		Average Distance & 12.3 &
		14.6 &
		11.7 &
		4.9 &
		5.3 &
		6.8 \\
		\hline
	\end{tabular}
	\caption{Measuring the accuracy of our approach.}
	%	For each dataset, we sample 2000 source and destination pairs uniformly at random.
	%	The relative average stretch is obtained from dividing the average additive stretch by the average distance.}
	\label{table_exact}
	\vspace{-0.10in}
\end{table*}

\medskip
\noindent{\bf Parameters:}
In the comparison between the pruned labeling algorithm and our approach, we order the vertices in decreasing order by the indegree plus outdegree of each vertex.%
\footnote{There are more sophisticated techniques such as ordering vertices using their betweenness centrality scores \cite{DGPW14}.
It is conceivable that our algorithm can be combined with such techniques.}
Recall that there are two input parameters used in our approach, the number of global landmarks $H$ and the radiuses of local balls $\set{l_i}_{i = H+1}^n$.
To tune $H$, we start with 100, then keep doubling $H$ to be 200, 400, etc.
The radiuses $\set{l_i}_{i \ge H}$ are set to be $2$ for all graphs.%
\footnote{It follows from our theoretical analysis that the radiuses should be less than half of the average distance.
As a rule of thumb, setting the radius as 2 works based on our experiments.}
%For the Twitter dataset, we tried both $\set{l_i}_{i \ge K}$ equal to 1 and 0 and found that 0 works better, so we set it to be 0.
%\footnote{In general, $l_i$ should be less than half of the average distance from our theoretical analysis in Theorem \ref{thm_sp2_ub}. Different values result in different stretch guarantees. The average distance of the Twitter dataset is 4.9.
%Setting $l_i$ to be zero corresponds to }
%, hence the value of $l_i$ should be less than}
%We observe that the average distance of the Twitter network is between 4 and 5.
%If we set the value of $l_i$ to be $1$, then each vertex needs to store most of its 2-hop neighbors.
%This would consume lots of landmarks, because the edge expansion rate is very large.}

\medskip
\noindent{\bf Benchmarks:}
For the Thorup-Zwick sketch, in the first step, $H = \sqrt {n}$ vertices are sampled uniformly at random as global landmarks.
In the second step, every other vertex grows a local ball as its landmark set until it hits any of the $\sqrt {n}$ vertices.
All vertices within the ball are used as landmarks.
This method uses $O(n^{3/2})$ landmarks and achieves $3$-stretch in worst case.
In the follow up work of Chen et al. \cite{CSTW09}, the authors show a variant which uses high degree vertices as global landmarks and observe better performance.
We implement Chen et al.'s variant in our experiment, and use the $H$ vertices with the highest indegree plus outdegree as global landmarks.
In the experiment, we start with $H$ equal to $\sqrt{n}$.
Then we report results for $\sqrt n$ multiplied by $\set{2, 1/2, 1/4, 1/8}$.

For the Das Sarma et al. sketch, first, $\log n$ sets $S_i$ of different sizes are sampled uniformly from the set of vertices $V$, for $0 \le i < \log n$, where the size of $S_i$ is $2^i$.
Then a breadth first search is performed from $S_i$, so that every vertex $x \notin S_i$ finds its closest vertex inside $S_i$ in graph distance.
This closest vertex is then used as a landmark for $x$.
The number of landmarks used in Dar Sarma's sketch is $n \log n$, and the worst case multiplicative stretch is $\log n$.
If more accurate estimation is desired, one can repeat the same procedure multiple times and union the landmark sets together.
We begin with 5 repetitions, then keep doubling it to be 10, 20 etc.

Our approach differs from the above two methods by using the idea of pruning while running BFS.
This dramatically enhances performance in practice, as we shall see in our experiments.

\medskip
\noindent{\bf Metrics:} We measure the stretch of the estimated distances, and compute aggregated statistics over a large number of queries.
For a query $(x, y)$, if $y$ is reachable from $x$, but the algorithm reports no common landmark between the landmark sets of $x$ and $y$, then we count such a mistake as a ``False disconnect error.''
On the other hand, if $y$ is not reachable from $x$, then it is not hard to see that our algorithm always reports correctly that $y$ is not reachable from $x$.
In the experiments, we compute $\dist(x, y)$ using Dijkstra's algorithm.

To measure space usage, we report the number of landmarks per node used in each algorithm as a proxy.
Since the landmark sets are stored in Int to Float hash maps, the actual space usage would be eight bytes times the landmark sizes in runtime, with a constant factor overhead.

For the query time, recall that for each pair of vertices $(x, y)$, we estimate their distance by looking at the intersection of $L_F(x)$ and $L_B(y)$ and compute the minimum interconnecting distance (c.f. equation \ref{eq_query}).
To find the minimum, we iterate through the smaller landmark set.
Hence the running time is $\min(\abs{L_F(x)}, \abs{L_B(y)})$ multiplied by the time for a hash map lookup, which is a small fixed value in runtime.
A special case is when $y \in L_F(x)$ or $x \in L_B(y)$, where only one hash map lookup is needed.
We will report the number of hash map lookups as a proxy for the query time.%
\footnote{It is conceivable that more sophisticated techniques may be devised to speedup set intersection.
We leave the question for future work.}

\begin{figure*}
	\centering
	\begin{subfigure}{.333\textwidth}
		\centering
		\includegraphics[width=1.1\linewidth]{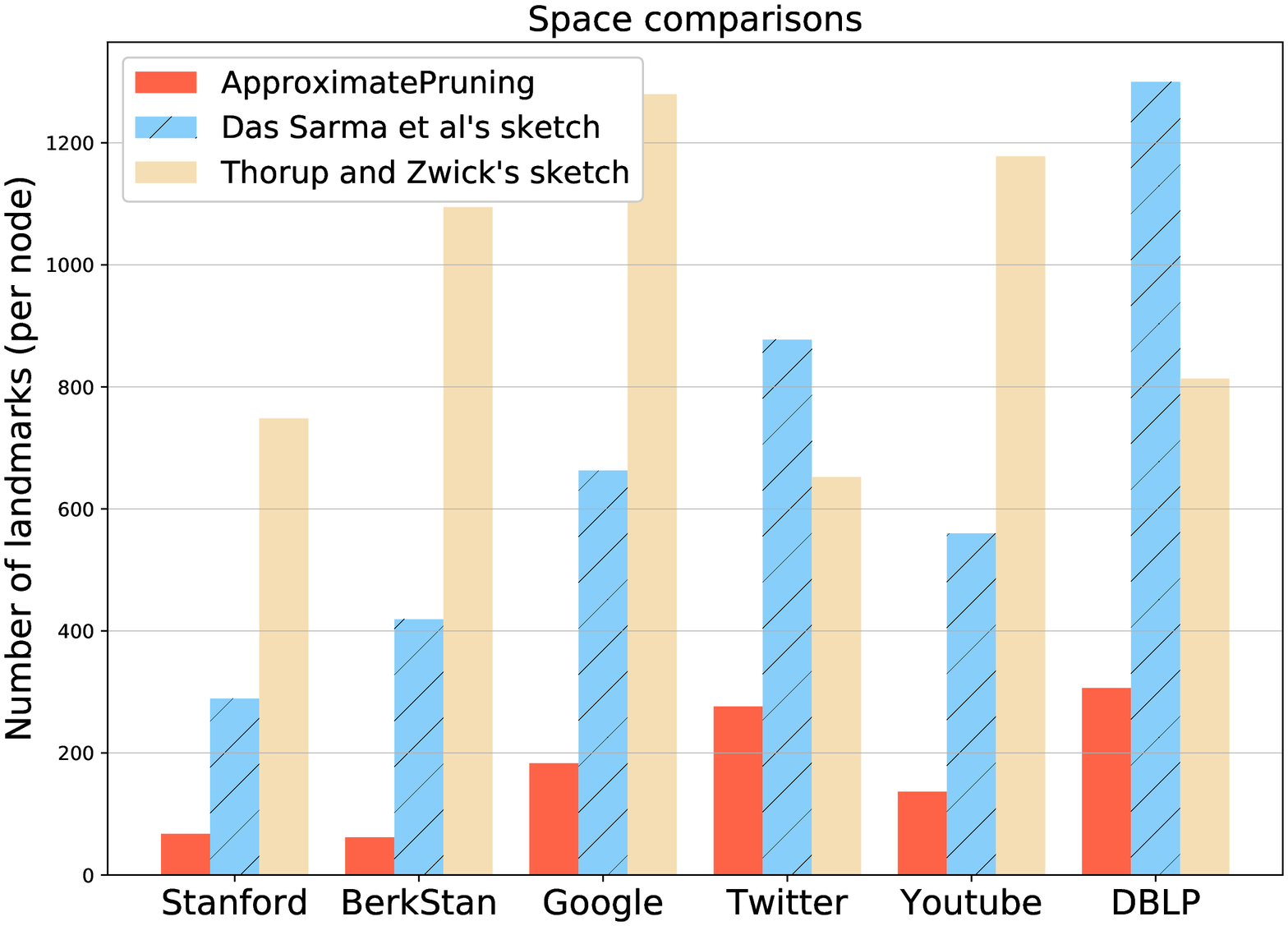}
	\end{subfigure}%
	\begin{subfigure}{.333\textwidth}
		\centering
		\includegraphics[width=1.1\linewidth]{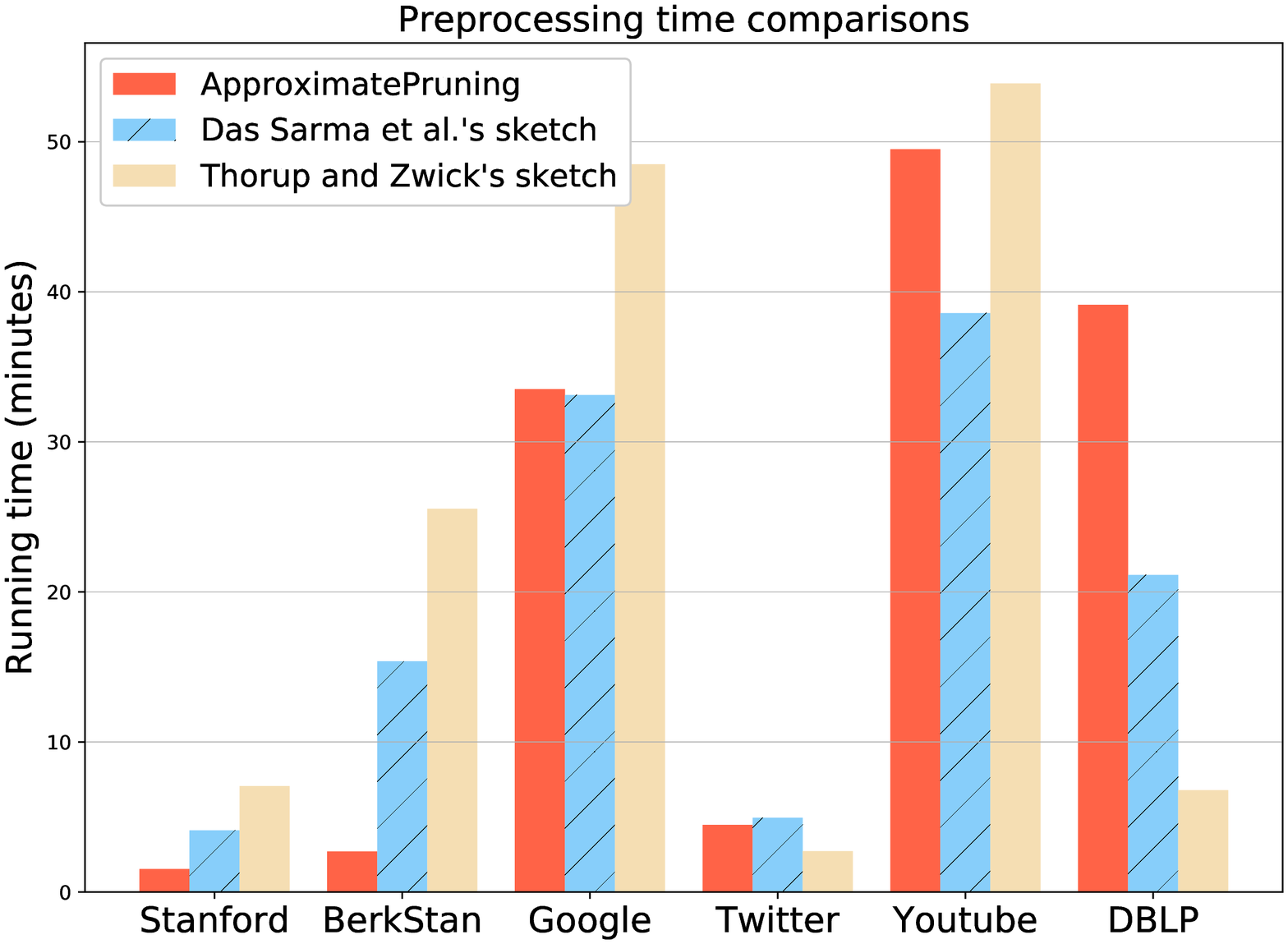}
	\end{subfigure}%
	\begin{subfigure}{.333\textwidth}
		\centering
	\includegraphics[width=1.1\linewidth]{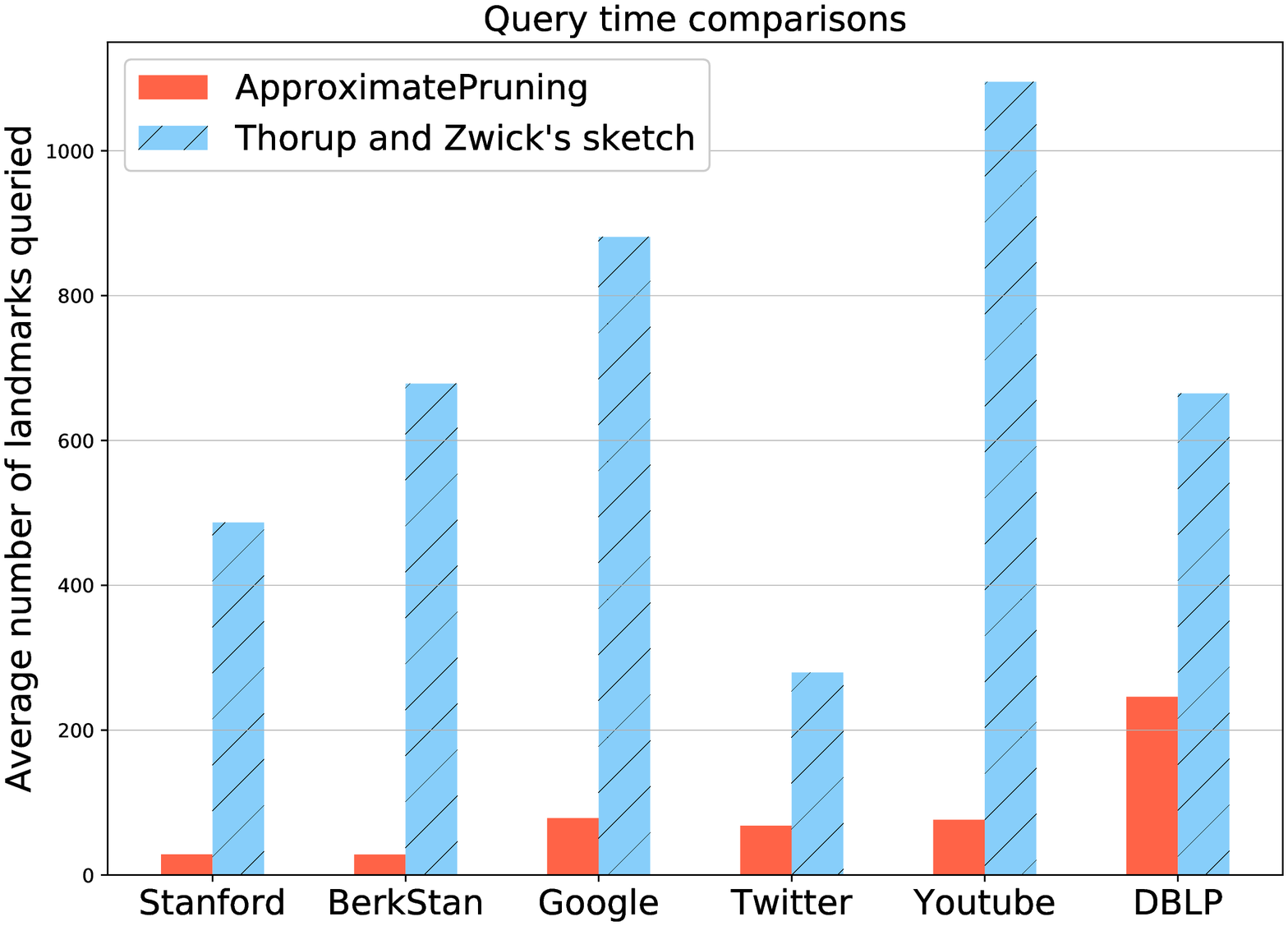}
	\end{subfigure}
	\vspace{-0.2in}
	\caption{Comparing the efficiency of our approach to two well known distance sketches with strong theoretical guarantees.}
	\label{fig_approx}
\end{figure*}

\begin{table*}[!hbt]
	\begin{minipage}{0.64\textwidth}
	\centering
	\begin{tabular}{| l | l | l | l | l | l | l |}
	\hline
			& Stanford & BerkStan & Google & Twitter & Youtube & DBLP \\
	\hline
	DS et al. sketch & 20.9\% &
	17.2\% &
	21.1\% &
	11.1\% &
	13.1\% &
	5.4\% \\
	\hline
	TZ sketch & 0.30\% &
	0.21\% &
	0.36\% &
	1.65\% &
	0.03\% &
	2.15\% \\
	\hline
	Our approach & 0.16\% &
	 0.20\%  &
	 0.22\%  &
	 0.29\%  &
	 0.04\%  &
	 1.1\%   \\
	\hline
	\end{tabular}
	\caption{Measuring the stretch for all three methods.}
%	The results are aggregated over 2000 pairs of vertices sampled randomly at random.}
	\label{table_approx}
	\end{minipage}%
	\begin{minipage}{0.36\textwidth}
		\begin{tabular}{| l | l | l | l | l | l |}
			\hline
			Youtube	&  ${\sqrt n} / 2$ & ${\sqrt n} / 4$ & ${\sqrt n} / 8$ & Ours \\
			\hline
			Stretch & 0.04\% & 0.11\% & 0.07\% & 0.04\% \\
			\hline
			\# Landmarks & 731 & 648 & 811 & 137 \\
			\hline
			Preprocessing & 37m & 31m & 35m & 50m \\
			\hline
		\end{tabular}
		\caption{Varying $H$ in TZ sketch.}
		\label{table_youtube}
	\end{minipage}
	\vspace{-0.10in}
\end{table*}

\subsection{Comparisons to Exact Methods}

We report the results comparing our approach to the pruned labeling algorithm.
The pruned labeling algorithm is exact.
%For our algorithm, recall that for each pair of vertices, we estimate their distance by looking at the intersection of their landmark sets and computing the minimum interconnecting distances (c.f. Equation \ref{eq_query}).
To measure the accuracy of our approach, we randomly sample  $2000$ pairs of source and destination vertices.
The number of global landmarks is set to be 400 for the Stanford dataset, 1600 for the DBLP dataset, and 800 for the rest of the datasets.

Figure \ref{fig_exact} shows the preprocessing time, the number of landmarks and average query time used by both algorithms.
%Figure \ref{fig_exact_query} shows the average query time.
We see that our approach reduces the number of landmarks used by 1.5-2.5x, except on the Twitter dataset.%
\footnote{By setting the radiuses $\set{l_i}$ to be 1, we  incur $0.72\%$ relative additive stretch by using 173 landmarks per node, which improves over the pruned labeling algorithm by 1.5x.}
Our approach performs favorably in terms of preprocessing time and query time as well.

The accuracy of our computed estimate is shown in Table \ref{table_exact}.
We have also measured the median additive stretch, which turns out to be zero in all the experiments.
To get a more concrete sense of the accuracy measures, consider the Google dataset as an example.
Since the average additive stretch is $0.06$ and there are 2000 pairs of vertices, the total additive stretch is at most 120 summing over all 2000 pairs!
Specifically, there can be at most 120 queries with non-zero additive stretch and for all the other queries, our approach returns the exact answer.
Meanwhile, among all the datasets, we observed only one ``False disconnect error'' in total.
It appeared in the Stanford Web graph experiment, where the true distance is 80.

\subsection{Comparisons to Approximate Methods}

Next we compare our approach to Das Sarma et al.'s sketch (or DS et al. sketch in short) and the variant of Thorup and Zwick's sketch (or TZ sketch in short).
Similar to the previous experiment, we sample 2000 source and destination vertices uniformly at random to measure the accuracy.

We start by setting the number of global landmarks to $\sqrt{n}$ in Thorup-Zwick sketch.
To allow for a fair comparison, we tune our approach so that the relative average stretch is comparable or lower.
Specifically, the Stanford, BerkStan and Twitter datasets use $H = 800$, the Google and DBLP datasets use $H = 1600$ and the Youtube dataset uses $H = 3200$.

Figure \ref{fig_approx} shows the number of landmarks needed in each algorithm as well as the amount of preprocessing time consumed.
Overall, our approach uses much fewer landmarks than the other two algorithms.
In terms of preprocessing time, our approach is comparable or faster on all datasets, except on the DBLP network.
We suspect that this may be because the degree distribution of the DBLP network is flatter than the others.
Hence performing the pruning procedures on a small subset of high degree vertices are less effective in such a scenario.

\begin{figure*}
	\centering
	\begin{minipage}{.666\textwidth}
		\vspace{-0.045in}
		\centering
		\begin{subfigure}{0.5\linewidth}
			\includegraphics[width=0.95\linewidth]{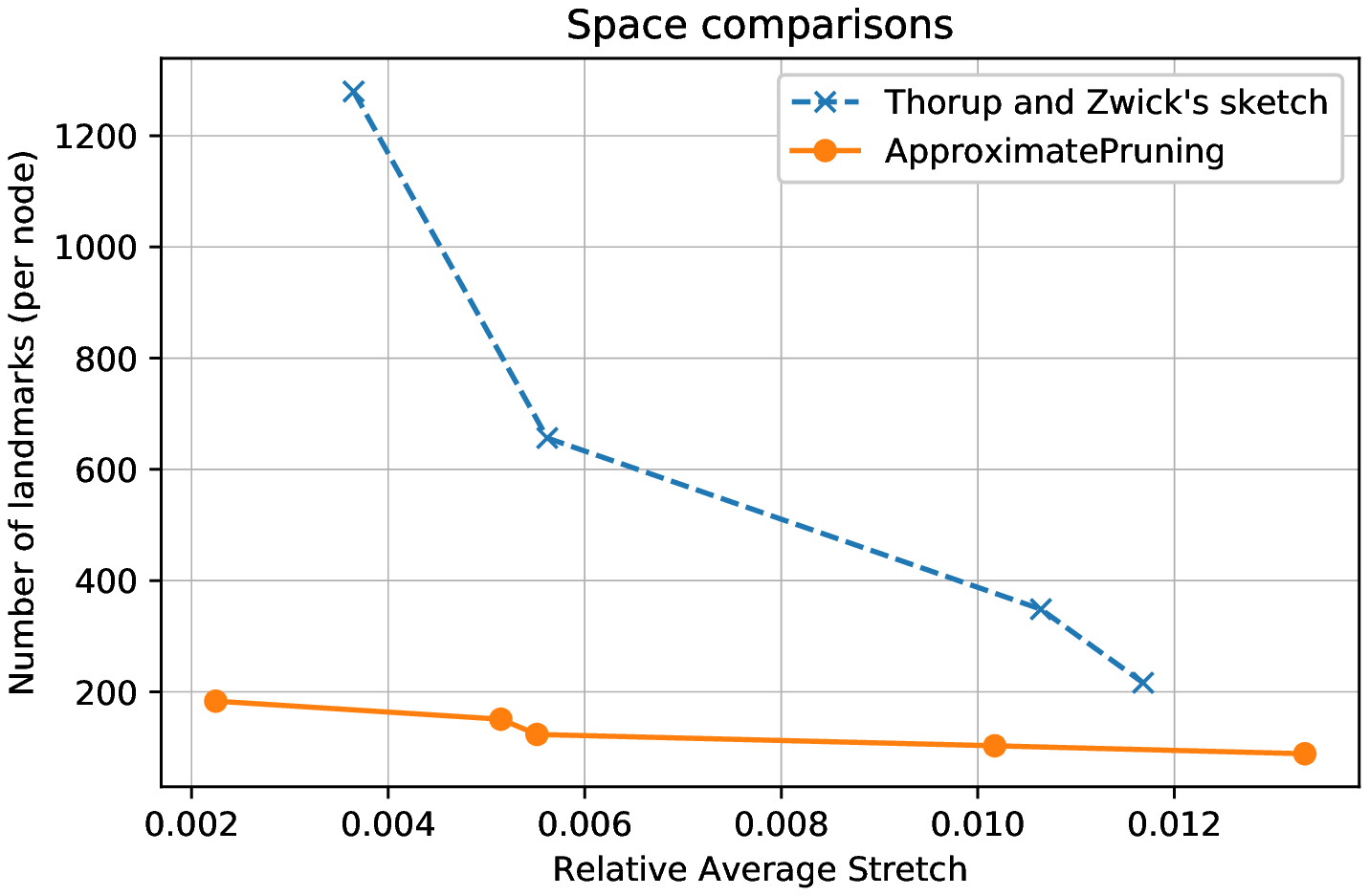}
		\end{subfigure}%
		\begin{subfigure}{0.5\linewidth}
			\includegraphics[width=0.95\linewidth]{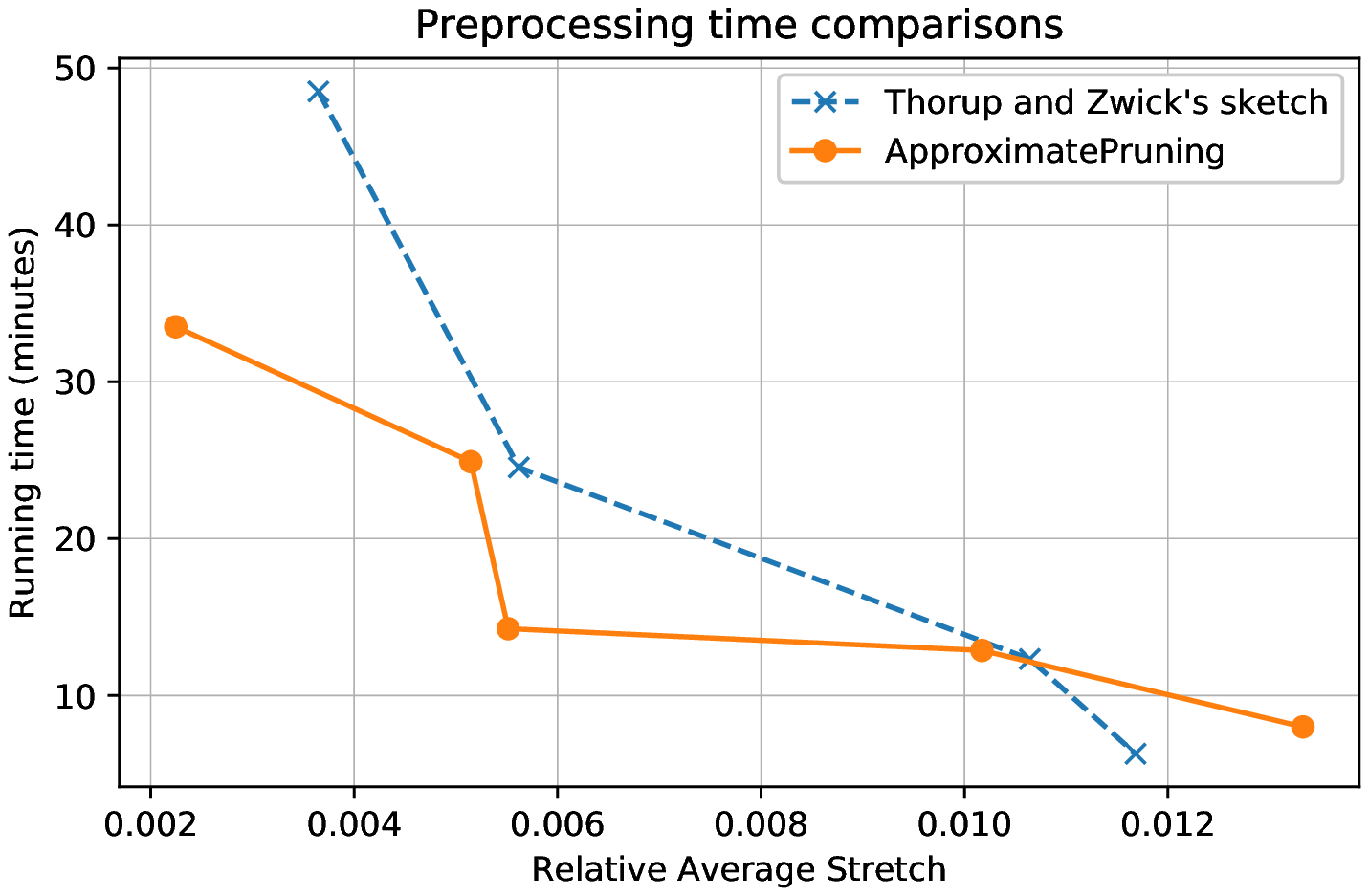}
		\end{subfigure}
		\vspace{-0.165in}
		\caption{Varying $H$ in TZ sketch and our approach, on the Google dataset.}
		\label{fig_tune}
	\end{minipage}%
	\begin{minipage}{0.333\textwidth}
		\centering
		%\vspace{-0.05in}
		\includegraphics[width=0.95\linewidth]{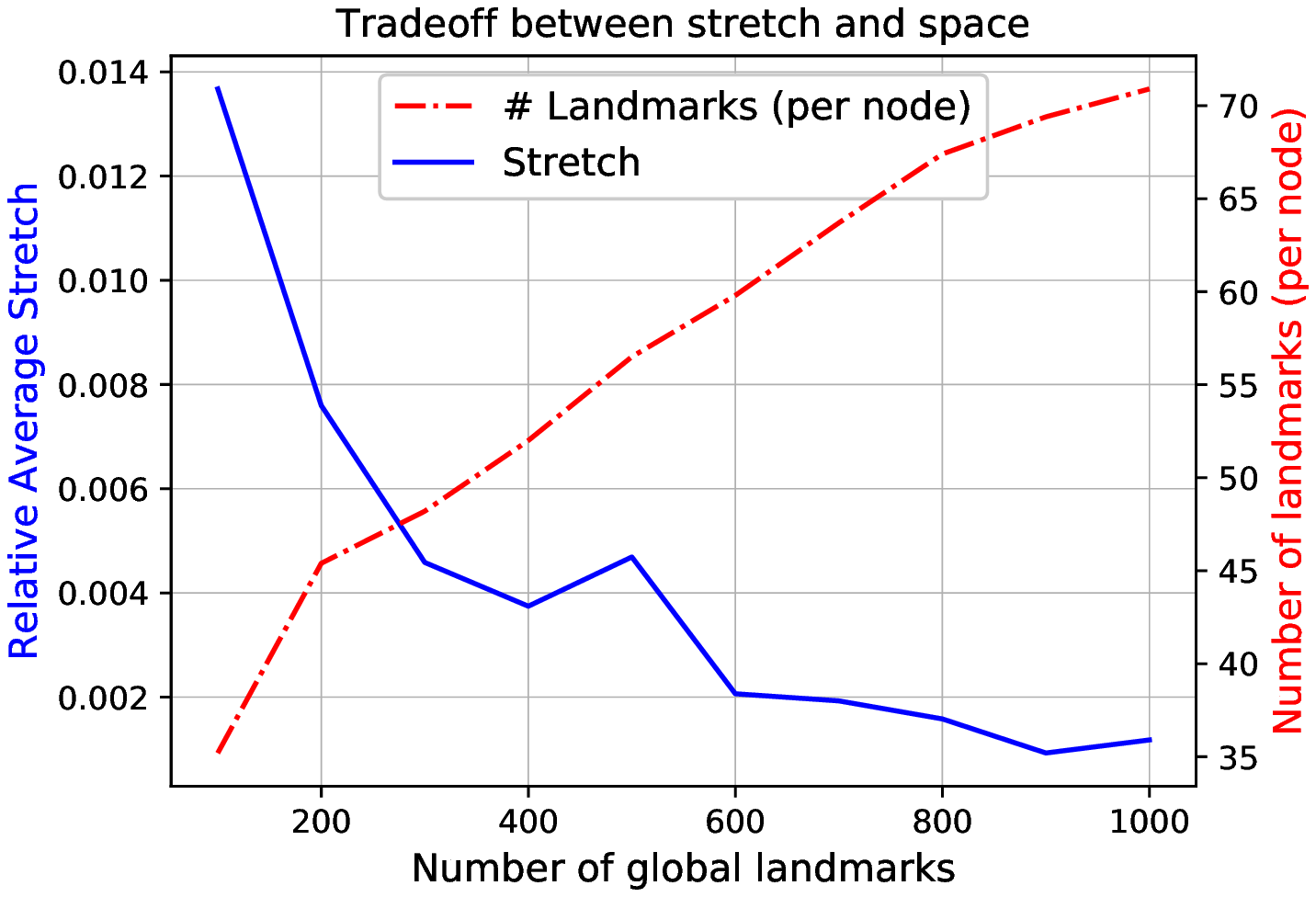}
		\vspace{-0.165in}
		\caption{Tradeoff on the Stanford dataset.}
		\label{fig_tradeoff}
		%\vspace{-0.065in}
	\end{minipage}
\end{figure*}

We next report the relative average stretch for all three methods.
As can be seen in Table \ref{table_approx}, our approach is comparable to or slightly better than Thorup and Zwick's sketch, but much more accurate than Das Sarma et al's sketch.
Note that the latter performed significantly worse than the other two approaches.
We suspect that this may be because the sketch does not utilize the high degree vertices efficiently.
Lastly, our approach performs favorably in the query time comparison as well.
The query time of Das Sarma et al.'s sketch are not reported because of the worse accuracy.

\medskip
\noindent{\bf Effect of parameter choice:} Note that in the above experiment, for Thorup and Zwick's sketch, we have set the number of global landmarks $H$ to be $\sqrt n$.
In the next experiment, we vary the value of $H$ to $\sqrt n$ multiplied by $\set{2, 1/2, 1/4, 1/8}$.

First, we report a detailed comparison on the Google dataset in Figure \ref{fig_tune}.
Note that when $H = 2\sqrt n$, the Thorup and Zwick's sketch requires over 2000 landmarks per node which is significantly larger than the other values.
Hence, we dropped the data point from the plot.
For our approach, we double $H$ from 100 up to 1600.
Overall, we can see that our approach requires fewer landmarks across different stretch levels.

Next, we report brief results on the Youtube dataset in Table \ref{table_youtube} since the results are similar.
The conclusions obtained from other datasets are qualitatively similar, and hence omitted.

\subsection{More Experimental Observations}

By varying the number of global landmarks used Algorithm \ref{alg_prune}, it is possible to obtain a smooth tradeoff between stretch and number of landmarks used.
As an example, we present the tradeoff curve for the Stanford Web dataset in Figure \ref{fig_tradeoff}.
Here we vary the number of global landmarks used from 100 to 1000.
As one would expect, the relative average stretch decreases while the number of landmarks stored increases.

\section{Fundamental Limits of Landmark Sketches}\label{sec_lb}

This section complements our algorithm with lower bounds.
We begin by showing a matching lower bound for \ER~graphs, saying that any 2-hop cover needs to store at least $\tilde{\Omega}(n^{3/2})$ landmarks.
%Then we present a matching lower bound for random power law graphs when $\beta$ is close to 2, showing that close to $n^{3/2}$ landmarks is needed for the setting as well.
The results imply that the parameter dependence on $n$ of our algorithm is tight for \ER~graphs and random power law graphs with power law exponent $\beta > 3$.
It is worth mentioning that the results not only apply to landmark sketches, but also work for the family of labeling schemes.
Recall that labeling schemes associate a labeling vector for each vertex.
To answer a query for a pair of vertices $(x, y)$, only the labeling vectors of $x, y$ are accessed.
We first state the lower bound for \ER~graphs.

\begin{theorem}\label{thm_sp3}
Let $G = (V, E)$ be an \ER~graph where every edge is sampled with probability $p = 2\log n / n$.
With high probability over the randomness of $G$, any labelings which can recover all pairs distances exactly have total length at least ${\Omega}(n^{3/2} / \log^4 n)$.

In particular, any 2-hop cover needs to store at least ${\Omega}(n^{3/2} / \log^4 n)$ many landmarks with high probability.
\end{theorem}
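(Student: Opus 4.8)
The plan is to use a counting (encoding) argument: I will show that if there were a labeling scheme of small total length, then we could encode a large amount of information about the random graph $G$ using too few bits, contradicting an entropy lower bound. The key structural fact to exploit is Fact \ref{fact_er}: at the ``middle'' radius $l_0 = \ceil{\frac{1}{2}\log_{np} n}$, each ball $N_{l_0}(x)$ has size $\Theta(\sqrt{n})$ (up to polylog), and crucially, for two vertices $x,y$ at distance exactly $2l_0$ or so, the distance $\dist(x,y)$ is determined by whether their middle-radius boundaries $\Gamma_{l_0}(x)$ and $\Gamma_{l_0}(y)$ share an edge. The randomness of which cross-pairs are connected is what forces the labels to be long.

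First I would fix a typical realization of the ``skeleton'' of $G$ up to radius $l_0 - 1$ from a suitable set $S$ of roughly $\sqrt{n}$ well-separated source vertices (possible since each ball has size $\tilde O(\sqrt n)$, so $\tilde\Theta(\sqrt n)$ disjoint balls fit). Condition on everything except the edges between the boundary sets $\Gamma_{l_0}(x)$ for $x\in S$ and the rest of the graph. Next, for each pair $x,y \in S$, the indicator of whether $\dist(x,y)$ equals its ``generic'' value or is one smaller is (by Fact \ref{fact_er}(b)–(c) and independence of edges in the Chung–Lu/\ER{} model) a near-fair coin flip, and these $\binom{|S|}{2} = \tilde\Theta(n)$ bits are mutually independent conditioned on the skeleton. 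Then, given the labels $\cL(x)$ for all $x\in S$ together with the conditioned-upon part of $G$, the query algorithm recovers all $\tilde\Theta(n)$ of these bits; hence $\sum_{x\in S}|\cL(x)| \gtrsim n/\polylog(n)$, which forces some label (indeed the average over $S$) to have length $\gtrsim \sqrt n/\polylog(n)$, and summing over all $n$ vertices — or more carefully, over a family of $\tilde\Theta(\sqrt n)$ disjoint such sets $S$ tiling $V$ — gives total length $\tilde\Omega(n^{3/2})$. The final sentence about 2-hop covers follows since a landmark entry $(z \to \dist(x,z))$ costs $\Omega(\log n)$ bits, so a 2-hop cover is in particular a labeling scheme of total length $\Omega(\log n)$ times its number of landmarks.

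The main obstacle I anticipate is making the independence and ``fair coin'' claims rigorous: the events ``$\dist(x,y) = $ generic value'' for different pairs $(x,y)$ are not literally independent because the balls $N_{l_0}(x)$ are not perfectly disjoint and the graph exploration has subtle conditioning dependencies. The standard fix is to (i) restrict to a subset of $S$ whose middle-radius balls are genuinely vertex-disjoint — feasible by a greedy selection since the balls have size only $\tilde O(\sqrt n)$ and there are $n$ vertices — and (ii) use that, conditioned on the disjoint boundary sets, each cross-connection edge is an independent coin of the model, so each pair's connectivity indicator is a fresh independent bit. One must also check that $\Pr[\dist(x,y) = 2l_0] $ and $\Pr[\dist(x,y) = 2l_0 - 1]$ are both bounded away from $0$ and $1$ given the skeleton, which is exactly the content of Fact \ref{fact_er}(b)–(c) with the boundary sizes plugged in: the expected number of cross-edges between two $\Theta(\sqrt n)$-sized boundary sets is $\Theta(\sqrt n \cdot \sqrt n \cdot (\log n)/n) = \Theta(\log n)$, so the ``no edge'' probability is $n^{-\Theta(1)}$ — which would make the bit too biased. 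This suggests the right radius to use is slightly below $\frac12\log_{np}n$, where boundary sizes are $\Theta(\sqrt{n}/\polylog)$ chosen so the expected cross-edge count is $\Theta(1)$, making the connectivity indicator a genuine constant-bias coin; balancing this radius choice against the resulting $\polylog$ losses in $|S|$ is the delicate part, and accounts for the $\log^4 n$ in the denominator.
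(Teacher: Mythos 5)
Your proposal is correct and follows essentially the same route as the paper's proof: partition $V$ into $\sqrt{n}$ groups, grow vertex-disjoint balls of radius slightly below $\tfrac12\log_{np} n$ (the paper takes $d=\lfloor \log n/(2\log np)\rfloor - c$), use the conditional independence of cross-edges between the boundary sets to show that any fixed short labeling is correct with probability at most $\exp(-\tilde\Omega(n))$, and conclude by a union bound over the at most $2^{\tilde O(n)}$ short labelings. The only cosmetic differences are that the paper handles ball disjointness by a sequential-exploration martingale argument rather than greedy selection, and it lets each pair's connectivity bit carry only $\Theta(1/\mathrm{polylog}\,n)$ entropy rather than tuning the radius for constant bias — both of which you correctly flagged as the delicate points.
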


For a quick overview, we divide $V$ into $\sqrt {n}$ sets of size $\sqrt n$ each.
We will show that the total labeling length for each set of $\sqrt n$ vertices has to be at least $\tilde{\Omega}(n)$.
By union bound over all the $\sqrt{n}$ sets, we obtain the desired conclusion.
%The proof is left to Appendix \ref{sec_pf_lb3}.
We now go into the proof details.

\begin{proof}%[Proof of Theorem \ref{thm_sp3}]
	Denote by $r = np$.
	Let $d = \floor{\frac {\log n} {2 \log (np)}} - c$, where $c$ is a fixed constant (e.g. $c = 2$ suffices).
	Divide $V$ into groups of size $\sqrt n$.
	Clearly, there are $\sqrt n$ disjoint groups --
	let $S$ be one of them.
	Denote by $c_1$ a fixed constant which will be defined later.
	We argue that
	\begin{align}\label{eq_er_group}
		\Pr[\text{The total label length of $S$} \le c_1 \cdot r^{1 - 2c} n] \lesssim r^{1- 2c}.
	\end{align}
	Hence by Markov's inequality, with high probability except for $(\log n) r^{1- 2c} \sqrt n$ groups, all the other groups will have label length at least $c_1 \cdot r^{1-2c} n \gtrsim \tilde{\Omega}(n)$, because $r \le 2\log n$.
	Hence we obtain the desired conclusion.
	For the rest of the proof, we focus on proving equation \eqref{eq_er_group} for the group $S$.

	Let $\set{x_1, x_2, \dots, x_{\abs{S}}}$ be an arbitrary ordering of $S$.
	We grow the neighborhood of each vertex in $S$ one by one, up to level $d$.
	Denote by $G_1 = (V_1, E_1)$, where $V_1 = V$ and $E_1 = E$.
	For any $i \ge 1$, if $x_i \in V_i$, then we define
	define $T(x_i)$ to be the set of of vertices in $V_i$ whose distance is at most $d$ from $x_i$.
	Define $L(x_i) \subseteq T(x_i)$ to be the set of vertices in $G_i$ whose distance is equal to $d$ from $x_i$.
	On the other hand if $x_i \notin V_i$, then $T(x_i)$ and $L(x_i)$ are both empty.
	More formally,
	\begin{align*}
		&T(x_i) := \begin{cases}
		           \set{y : \dist_{G_i}(x_i, y) \le d}, &\mbox{if } x_i \in V_i \\
		           \varnothing, &\mbox{otherwise.}
		\end{cases} \\
			&L(x_i) := \set{y \in T(x_i) : \dist_{G_i}(x_i, y) = d}
	\end{align*}

	We then define $F_i = \cup_{j=1}^i T(x_j)$.
	Denote by $G_{i+1}$ to be the induced subgraph of $G_i$ on the remaining vertices $V_{i+1} = V \backslash F_i$.
	We show that with high probability, a constant fraction of vertices $x_i \in S$ satisfy that $\abs{L(x_i)} \geq \Omega((np)^d)$.

	\begin{lemma}[Martingale inequality]\label{lem_martingale_er}
		In the setting of Theorem \ref{thm_sp3}, with high probability, at least $\abs{S} / 2$ vertices $x_i \in S$ satisfy that $\abs{L(x_i)} \ge r^{d} / 6$.
	\end{lemma}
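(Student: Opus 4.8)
The plan is to track the process of growing the depth-$d$ neighborhoods of the vertices $x_1, x_2, \dots, x_{|S|}$ of $S$ one at a time inside the shrinking graphs $G_i$, and to show that the event ``$|L(x_i)| \ge r^d/6$'' fails for at most half the indices. First I would fix a threshold on how much of the vertex set can be ``used up'': as long as $|F_{i-1}| = |\cup_{j<i} T(x_j)|$ is at most, say, $n/10$ (which it surely is, since each $T(x_j)$ has size $O(r^d) = O(\sqrt n / \mathrm{poly}\log n)$ and $|S| = \sqrt n$, so $|F_{|S|}| = O(n/\mathrm{poly}\log n) \ll n/10$), the residual graph $G_i$ is still an \ER~graph on $|V_i| \ge 9n/10$ vertices with the same edge probability $p$. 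Crucially, conditioning on $F_{i-1}$ only conditions on edges incident to $F_{i-1}$, so edges inside $V_i$ are still independent with probability $p$; hence $G_i \mid F_{i-1}$ is distributed as $G(|V_i|, p)$. I can then invoke the neighborhood-growth concentration underlying Fact \ref{fact_er}(c), applied to $G_i$, to conclude that for each $i$ with $x_i \in V_i$, conditionally on the past, $\Pr[|L(x_i)| \ge r^d/6] \ge 1 - o(1/n)$ — the only loss relative to the ideal $\tfrac12 (np)^d \le |\Gamma_d| \le 2(np)^d$ comes from the slightly smaller vertex count $(9/10)n$ and from the constant $c$ in the definition of $d$, and $r^d/6$ is a comfortably loose lower bound chosen to absorb these.

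Next I would handle the indices with $x_i \notin V_i$, i.e. vertices of $S$ that were swallowed by an earlier neighborhood. Since $|F_{|S|}| = O(\sqrt n/\mathrm{poly}\log n)$ and $S$ is an arbitrary fixed set of size $\sqrt n$, the expected number of elements of $S$ lying in $F_{|S|}$ is $O(|S|^2/n \cdot \mathrm{poly}\log^{-1} n) = O(1/\mathrm{poly}\log n)$; actually more carefully, each $x_i$ is absorbed only if it lands in some $T(x_j)$, $j<i$, and there are $O(r^d)$ such vertices out of $n$, so a union bound gives that with high probability \emph{no} vertex of $S$ is absorbed, or at worst $o(|S|)$ of them are. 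Either way this contributes a negligible fraction.

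Finally I would assemble the pieces: let $Y_i$ be the indicator that $x_i \in V_i$ and $|L(x_i)| \ge r^d/6$. By the above, $\Pr[Y_i = 0 \mid \mathcal F_{i-1}] = o(1/n)$ uniformly, where $\mathcal F_{i-1}$ is the filtration generated by the first $i-1$ growth steps. Hence $\sum_i (1 - Y_i)$ is stochastically dominated by a sum of conditionally tiny Bernoullis; a union bound (or Markov on the conditional expectations) gives $\Pr[\sum_i (1-Y_i) \ge 1] \le |S| \cdot o(1/n) = o(1)$, so with high probability \emph{every} $x_i \in S$ has $|L(x_i)| \ge r^d/6$, which is far stronger than the claimed $|S|/2$. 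I would state the weaker ``$|S|/2$'' version since that is all the proof of Theorem \ref{thm_sp3} needs, and it leaves slack in case the concentration bound one cites only yields a constant-probability guarantee per vertex rather than a high-probability one.

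The main obstacle is the conditioning argument in the first step: one must argue cleanly that revealing $T(x_1), \dots, T(x_{i-1})$ — which involves exposing a BFS worth of edges — leaves the induced graph on the untouched vertices still a fresh \ER~graph, so that the concentration of $|\Gamma_d|$ can be applied afresh at each step. The subtlety is that the BFS from $x_j$ may query (and find absent) edges from $T(x_j)$ to vertices that later end up in $V_i$; one resolves this by noting that BFS to depth $d$ from $x_j$ only exposes edges with at least one endpoint in $T(x_j) \subseteq F_{i-1}$, hence never exposes an edge with both endpoints in $V_i$ — so those edges are genuinely untouched and independent. Making this ``edge-exposure'' bookkeeping precise, together with checking that the cumulative removal $|F_{i-1}| \le n/10$ never invalidates the \ER~growth estimates, is the crux; the rest is routine union bounds.
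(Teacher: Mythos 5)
Your overall strategy is the same as the paper's: grow the depth-$d$ balls of the vertices of $S$ one at a time, use an edge-exposure argument to see that the residual graph $G_i$ is a fresh \ER{} graph (your bookkeeping here --- that the BFS from $x_j$ only reveals edges with an endpoint in $T(x_j)$, so edges inside $V_i$ are untouched --- is exactly right, and is spelled out more carefully than in the paper), bound the number of absorbed vertices by counting pairs of $S$ at distance at most $d$, and assemble. The paper runs the assembly through Azuma--Hoeffding on indicator variables rather than a per-step union bound, but that difference is cosmetic.

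The genuine gap is your choice of threshold $\abs{F_{i-1}} \le n/10$. Deleting a constant fraction of the vertices reduces the branching factor of the residual \ER{} graph from $np$ to $0.9\,np$, and this loss compounds over all $d$ levels: you get $\abs{L(x_i)} \gtrsim (0.9\,r)^d = 0.9^d\, r^d$ with $d = \Theta(\log n/\log\log n)$, so $0.9^d = n^{-\Theta(1/\log\log n)} = o(1)$ and the bound is $o(r^d)$, not $r^d/6$. A fixed constant like $1/6$ cannot absorb a per-level constant-factor loss incurred $d$ times. The fix is the bound you already have in hand: since $\abs{F_{i-1}} \le 2\abs{S} r^d = O(n/\log n)$ with high probability, the residual graph has $n(1-O(1/\log n))$ vertices and the compounded degradation is $(1-O(1/\log n))^d = 1 - o(1)$; this is exactly the condition the paper imposes ($\abs{F_{i-1}} \le \abs{S} r^d \log n$). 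Two smaller slips that do not affect the $\abs{S}/2$ conclusion but should be corrected: the expected number of pairs of $S$ at distance at most $d$ is $\Theta(\abs{S}^2 r^d/n) = \Theta(\sqrt n/\log^2 n) \to \infty$, so it is false that with high probability \emph{no} vertex of $S$ is absorbed, and consequently your closing claim that \emph{every} $x_i \in S$ satisfies $\abs{L(x_i)} \ge r^d/6$ is also false (absorbed vertices have $L(x_i) = \emptyset$). Only the hedged ``all but $o(\abs{S})$'' version survives, which is all the lemma needs.
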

%\iffalse
	\begin{proof}
		For any $1 \le i \le \abs{S}$, consider
	\[ X_i := \begin{cases}
		          1 & \textrm{if } x_i \notin V_i, \textrm{ or } \abs{F_{i-1}} > \abs{S} \cdot r^d \log n,
		          \textrm{or } \abs{L(x_i)} \geq r^{d} / 6 \\
		          0 & \textrm{otherwise.}
	\end{cases} \]
		We claim that $\Pr[X_i=1\mid X_1,\ldots,X_{i-1}]$ with high probability.
		It suffices to consider the case $x_i \in V_i$ and $\abs{F_{i-1}} \le \abs{S} r^d \log n$.
		It is not hard to verify that $\abs{F_{i-1}} \le n / \log n$ by our setting of $d$.
		Hence the size of $V_i$ is at least $n(1 - 1 / \log n)$.
		Note that the subgraph $G_i$ is still an \ER~random graph, and the number of vertices is at least $n(1 - 1 / \log n)$.
		By Fact \ref{fact_er}c), the size of $L(x_i)$ is at least
		\[ \frac 1 2 r^d (1 - \log^{-1} n)^d \ge r^d / 6, \]
		since $d \le \log n$.

		Thus by Azuma-Hoeffding inequality,
	$\sum_{i=1}^{\abs S} X_i \geq 0.99 \abs{S}$ with high probability.
		We will show below that the contributions to $\sum_{i=1}^{\abs S} X_i$ from $x_i \notin V_i$ and $\abs{F_{i-1}} > \abs{S} r^d \log n$ is less than $0.02 \abs{S}$.
		Hence by taking union bound, we obtain the desired conclusion.

		First, we show that the number of $x_i$ such that $x_i \notin V_i$ are at most $0.01 |S|$ with high probability.
		Note that $x_i \notin V_i$ implies that there exists some vertex $x_j$ with $j < i$ such that $\dist(x_i, x_j) \le d$.
		On the other hand, by Fact \ref{fact_er},
		\[ \Pr[\dist(x, y) \le d] \le \frac {3 r^d} n, \forall x, y \in S. \]
		Hence, it is not hard to verify that the expected number of vertex pairs in $S$ whose distance is at most $d$,	is $O(\abs{S}^2 r^{2d} / n) \lesssim \abs{S} / \log n$, by the setting of $d$.
	By Markov's inequality, with probability $1 - 1 / \log n$ only $0.01 \abs{S}$	vertex pairs have distance at most $d$ in $S$.
	Hence there exists at most $0.01 \abs{S}$ $i$'s such that $x_i \notin V_i$.

		Secondly for all $1 \le i \le \abs{S}$, the set of vertices $T_i$ is a subset of $N_{d}(x_i)$, the set of vertices within distance $d$ to $x_i$ on $G$.
		By Fact \ref{fact_er}c), the size of $N_d(x_i)$ is at most $2 r^d$.
		Hence we have $\abs{F_i} \le 2 \abs{S} r^d$ for all $1 \le i \le \abs{S}$ with high probability.
		This proves the Lemma.
	\end{proof}
%\fi
%	The proof is deferred to the full paper.
	Now we are ready to finish the proof.
	Given the labels of $S$, we can recover all pairwise distances in $S$.
	Let $\dist_S: S\times S \rightarrow \mathbb{N}$ denote the distance function restricted to $S$.
	Consider the following:
	\begin{enumerate}
		\item[a)]	$\exists \abs{S}^2 / 9$ pairs $(x_i,x_j)$ such that $\dist_S(x_i,x_j)\leq 2d+1$.
		We know by Fact \ref{fact_er} that $\Pr[\dist(x_i, x_j) \le 2d+1] \le 2 r^{2d+1} / n$, for any $x_i, x_j \in S$.
		Hence the expected number of pairs with distance at most $2d+1$ in $S$,
		is at most $2 \abs{S}^2 \cdot r^{2d+1}/n \lesssim r^{1-2c} n$.
		By Markov's inequality, the probability that a random graph induces any such distance function is $r^{1-2c} n / (\abs{S}^2 / 8) \lesssim r^{1 - 2c}$.

		\item[b)] The number of pairs such that	$\dist_S(x_i,x_j)\leq 2d+1$ is at most $\abs{S}^2 / 8$ in $S$.
		Let $A$ denote \[ \left\{(x, y) \in S \times S \mid \dist(x, y) > 2d+1, \mbox{ and } \abs{L(x)}, \abs{L(y)} \ge r^{d}/ 6\right\}. \]
		By Lemma \ref{lem_martingale_er} and our assumption for case b), the size of $A$ is at least
		$\binom{\abs{S} / 2}{2} - \abs{S}^2 / 8 \ge \abs{S}^2 / 9.$
		For any $(x, y) \in A$, $L(x)$ and $L(y)$ are clearly disjoint.
		Note that the event whether there exists an edge between $L(x)$ and $L(y)$ is independent, conditional on revealing the subgraph for all $x \in S$ up to distance $d$. Hence
		\begin{align*}
			&\Pr\left[\dist_S(x, y) > 2d+1,~\forall (x, y) \in A\right]\\
			&\leq \prod_{(x, y) \in A} \Pr\left[\mbox{there is no edge between } L(x) \mbox{ and } L(y) \right] \\
			&\leq \prod_{(x, y) \in A} (1 - p)^{\abs{L(x)} \times \abs{L(y)}} \\
			&\le \exp\left(-p \times \abs{A} \times r^{2d} / 72 \right) \tag{because $\abs{L(x)}, \abs{L(y)} \ge r^d / 6$} \\
			&\leq \exp(-c_1 r^{1 - 2c} n). \tag{because $\abs{A} \ge  n / 9$}
		\end{align*}
		where $c_1 = 72 \times 9$ in the last line.
		Denote by $\kappa = c_1 \cdot r^{1-2c} n$.
		Note that the number of labelings of length (or number of bits) less than $\kappa$ is at most $2^{\kappa}$.
		For each labeling, the probability that it correctly gives all pairs distances is at most $\exp(-\kappa)$ by our argument above.
		Therefore by union bound, the probability that the total labeling length of $\abs S$ is at most $\kappa$ is at most $2^{\kappa} \cdot \exp(\kappa) \le r^{1- 2c}$ for large enough $n$.
	\end{enumerate}
	To recap, by combining case a) and b), we have shown that equation \eqref{eq_er_group} is true.
	Hence the proof is complete.
\end{proof}

\smallskip
\noindent{\bf Extensions to $\beta > 3$:}
It is worth mentioning that the lower bound on \ER~graphs can be extended to random power law graphs with $\beta > 3$.
The proof structure is similar because the degree distribution has finite variance, hence the number of high degree vertices is small.
The difference corresponds to technical modifications which deal with the neighborhood growth of random graphs with constant average degree.
We state the result below and leave the proof to Appendix \ref{sec_lb3}.
%Next we consider lower bounds for random power law graphs.
%We show that when $\beta>3$, any labeling scheme needs to use at least $\tilde{\Omega}(n^{3/2})$ bits.
%Hence, our algorithm is optimal on total size of the landmark sets.
\begin{theorem}\label{thm_lb3}
Let $G = (V, E)$ a random power law graph with average degree $\avgdeg > 1$ and exponent $\beta > 3$.
	With high probability over the randomness of $G$, any labelings which can recover all pairs distances exactly have total length at least $\tilde{\Omega}(n^{3/2})$.

	In particular, any 2-hop cover needs to store at least $\tilde{\Omega}(n^{3 / 2})$ many landmarks with high probability.
\end{theorem}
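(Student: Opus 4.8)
The plan is to mirror, essentially line for line, the proof of Theorem~\ref{thm_sp3}, replacing the clean \ER~neighborhood estimates of Fact~\ref{fact_er} by their counterparts for Chung--Lu graphs with $\beta>3$. As there, partition $V$ into $\sqrt n$ disjoint groups of size $\sqrt n$; it then suffices to show that for a single group $S$, the total label length of $S$ is $\tilde\Omega(n)$ with probability at least $1-O(\log^{-2}n)$, since Markov over the $\sqrt n$ groups leaves all but an $o(1)$ fraction of them with label length $\tilde\Omega(n)$, for a total of $\tilde\Omega(n^{3/2})$. The \emph{only genuinely new ingredient} is a version of Fact~\ref{fact_er} valid for $\beta>3$: letting $\r>1$ denote the (finite, since $\beta>3$) branching rate of the local neighborhood-growth process, with high probability the diameter is $O(\log n)$, $\Pr[\dist(x,y)\le\ell]\lesssim\r^{\ell}\log^{O(1)}n/n$ for all $x,y$ and all $\ell=O(\log n)$, and $|\Gamma_\ell(x)|=\r^{\ell}$ up to $\log^{O(1)}n$ factors for a $1-o(1)$ fraction of vertices; these are exactly the estimates underlying Proposition~\ref{prop_beta3}, obtained by tracking the BFS with concentration inequalities.

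Granting this, here is the per-group argument. Choose $d$ to be the largest integer with $\r^{d}\le\sqrt n/\log^{c}n$ for a suitable constant $c$, so that $\r^{d}=\tilde\Theta(\sqrt n)$ while $\r^{2d+1}/n=o(1)$. Order $S=\{x_1,\dots,x_{\sqrt n}\}$ arbitrarily and process it one vertex at a time: at step $i$ grow the radius-$d$ BFS ball $T(x_i)$ inside the residual graph $G_i$ obtained by deleting all previously discovered vertices, and let $L(x_i)=\Gamma_d^{G_i}(x_i)$ be its boundary sphere. As in the \ER~case, the sets $L(x_i)$ are automatically pairwise disjoint and the edges between distinct spheres are left unexposed. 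An Azuma--Hoeffding argument identical in form to Lemma~\ref{lem_martingale_er} (after checking $\sum_i|T(x_i)|\le n/\log^{O(1)}n$, so each $G_i$ is still essentially the original Chung--Lu graph) shows that at least $|S|/2$ of the $x_i$ satisfy $|L(x_i)|\ge\r^{d}/\log^{O(1)}n$. I would then pass to the subsphere $L'(x_i)\subseteq L(x_i)$ of vertices whose weight lies in $[\xmin,W]$ for a large constant $W$ with $\Pr_f[p\le W]\ge\tfrac12$; concentration of the weight composition of a BFS sphere gives $|L'(x_i)|\ge\r^{d}/\log^{O(1)}n$ as well, hence $\vol{L'(x_i)}=\tilde\Theta(\sqrt n)$ (the cap $W$ making the upper bound deterministic).

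The counting step then goes through verbatim. From the labels of $S$ we recover $\dist_S$. By $\Pr[\dist(x,y)\le 2d+1]\lesssim\r^{2d+1}\log^{O(1)}n/n=o(1)$ and Markov, at most a small constant fraction of the pairs of $S$ are within distance $2d+1$ in the residual graph, so the set $A$ of pairs $(x,y)$ with $\dist(x,y)>2d+1$ and $|L'(x)|,|L'(y)|\ge\r^{d}/\log^{O(1)}n$ has $|A|=\Omega(|S|^2)=\Omega(n)$. For each such pair the inter-sphere edges are fresh, and in the Chung--Lu model
\[
\Pr[\text{no edge between }L'(x)\text{ and }L'(y)]=\!\!\prod_{u\in L'(x),\,v\in L'(y)}\!\!\Bigl(1-\tfrac{p_u p_v}{\vol V}\Bigr)\le\exp\!\Bigl(-\tfrac{\vol{L'(x)}\,\vol{L'(y)}}{\vol V}\Bigr),
\]
and since $\vol{L'(x)}\vol{L'(y)}=\tilde\Theta(n)$ while $\vol V=\Theta(n)$, each factor is $1-\Theta(\log^{-O(1)}n)$ (bounded away from $1$). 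Multiplying over the $\Omega(n)$ disjoint pairs of $A$, the probability that a fixed short labeling of total length $\le\kappa$ is consistent with the realized distances on $A$ is at most $\exp(-\tilde\Theta(n))$; choosing $\kappa=\tilde\Theta(n)$ just below that exponent and union-bounding over the at most $2^{\kappa}$ labelings of length $\le\kappa$ shows $S$ has no correct labeling of length $\le\kappa$, except on the $O(\log^{-2}n)$-probability events from the Markov step on close pairs and from the neighborhood deviations. This is the per-group bound, and summing over groups gives $\tilde\Omega(n^{3/2})$.

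The main obstacle is establishing the power-law replacement for Fact~\ref{fact_er}, and in particular neutralizing the $\log^{O(1)}n$ hubs of weight up to $n^{1/(\beta-1)}$. There are two dangers. First, such hubs could shortcut distances within $S$, breaking $\Pr[\dist(x,y)\le 2d+1]=o(1)$; this is where $\beta>3$ is essential, since finite second moment makes the size-biased branching process have finite variance, and $\sum_{v:\,p_v\ge T_0}p_v^{2}$ converges, so routing a path through vertices of weight $\ge T_0$ contributes only a $T_0^{-(\beta-3)}\log^{O(1)}n$ fraction of the main term — negligible once $T_0$ is a large enough power of $\log n$, leaving the clean bound intact. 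Second, deleting the $\tilde\Theta(n)$ vertices of $\bigcup_i T(x_i)$ could distort the residual weight profile; but a $\log^{-O(1)}n$ fraction of the vertices cannot remove more than a like fraction of the weight at any scale (again by concentration of neighborhood volumes for $\beta>3$), so every $G_i$ stays Chung--Lu-like with essentially the original weight sequence. Once these estimates are recorded — they are the same ones needed for Proposition~\ref{prop_beta3} — the remainder is a transcription of the \ER~proof with the polylogarithmic slack absorbed into the $\tilde\Omega$.
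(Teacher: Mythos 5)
Your proposal follows essentially the same route as the paper's proof: the same partition into $\Theta(\sqrt n)$ groups of constant-weight vertices, the same iterative disjoint ball-growing with an Azuma--Hoeffding step (the paper's Lemma~\ref{lem_martingale2}), the same counting/union-bound over short labelings, and the same key technical input of neighborhood growth rates for $\beta>3$ (formalized in the paper as Proposition~\ref{lem_expand}, including the $(2+\gamma)$-moment condition that plays the role of your hub-truncation at $T_0$). The only cosmetic difference is that the paper tracks the \emph{volume} $\vol{L(x_i)}$ of each boundary sphere directly and feeds it into Proposition~\ref{prop:connect}, whereas you track cardinalities and then pass to the constant-weight subsphere $L'(x_i)$ to recover the volume bound; both are valid.
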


%The proof is similar to the \ER{} random graphs, and is left to the full paper.

\noindent{\bf Lower bounds for $\beta$ close to 2:}
Next we show that the parameter dependence of our algorithm is tight when $\beta$ is close to 2.
Specifically, any 2-hop cover needs to store at least $\Omega(n^{3/2 - \varepsilon})$ many landmarks when $\beta = 2 + \varepsilon$.
Hence it is not possible to improve over our algorithm when $\beta$ is close to 2.
Furthermore, the lower bound holds for the general family of labeling schemes as well.
%As a remark, one can see that Lemma \ref{lem_prune} (the lower bound for the pruned labeling algorithm ) follows as a corollary from Theorem \ref{thm_lb_2}.

\newcommand{\slow}{S_{\mathrm{low}}}
\newcommand{\shigh}{S_{\mathrm{high}}}
\begin{theorem}\label{thm_lb_2}
	Let $G = (V, E)$ a random power law graph with average degree $\avgdeg > 1$ and exponent $\beta = 2 + \varepsilon$ for $\varepsilon < 1/2$.
	With high probability over the randomness of $G$, any labelings which can recover all pairs distances exactly have total length at least $\Omega(n^{3/2 - \varepsilon})$.

	In particular, any 2-hop cover needs to store at least $\Omega(n^{3 / 2 - \varepsilon})$ many landmarks with high probability.
\end{theorem}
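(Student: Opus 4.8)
The plan is to mimic the structure of the proof of Theorem~\ref{thm_sp3}, but now exploiting the fact that when $\beta = 2+\varepsilon$ the degree distribution is very heavy-tailed, so we must choose the block size and the truncation radius to account for the fast neighborhood growth caused by high-degree vertices. First I would partition $V$ (or a suitable subset of it, excluding the $\mathrm{poly}(n)$ vertices of unusually large weight) into disjoint groups $S$ of size roughly $n^{1/2-\varepsilon}$, so that there are about $n^{1/2+\varepsilon}$ groups; I then argue that with high probability all but a $o(1)$ fraction of these groups require total label length $\tilde\Omega(n^{1-2\varepsilon})$, and a union bound over the groups yields the claimed $\Omega(n^{3/2-\varepsilon})$ bound. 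As in Theorem~\ref{thm_sp3}, within a fixed group $S = \{x_1,\dots,x_{|S|}\}$ I reveal the graph by growing, one vertex at a time, the ball $T(x_i)$ up to a carefully chosen radius $d$ in the graph $G_i$ obtained after deleting the previously explored vertices $F_{i-1}$, and I set $L(x_i)$ to be the sphere of radius exactly $d$ around $x_i$ in $G_i$.

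Second, I would establish the analogue of Lemma~\ref{lem_martingale_er}: with high probability a constant fraction of the vertices $x_i\in S$ have $|L(x_i)|$ at least some target value $M$, with $M$ chosen just large enough that the spheres are not too large (so that $\sum_i |T(x_i)| = |S|\cdot\tilde O(M)$ stays below $n/\mathrm{polylog}\, n$, keeping $G_i$ a Chung--Lu graph on $\approx n$ vertices) and just small enough that $|S|^2$ disjoint pairs of such spheres are present. This requires the concentration results on neighborhood growth for random power law graphs that underlie Proposition~\ref{prop_beta3} and Theorem~\ref{thm_sp2_ub}: the key point is that for $\beta = 2+\varepsilon$, balls of moderate radius grow at a super-polynomial-in-radius but still controllable rate, and with the right $d$ one can guarantee $|L(x_i)| \ge M$ for a $(1-o(1))$-fraction of $x_i$, while the probability that any two vertices of $S$ are within distance $2d+1$ of each other is $o(1)$ (here I would use part~(b)-type tail bounds on $\Pr[\dist(x,y)\le 2d+1]$, now with the power-law growth rate in place of $np$).

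Third, with the good spheres in hand, I would run the counterexample-counting argument verbatim from case~(a)/(b) of Theorem~\ref{thm_sp3}: conditioned on the revealed radius-$d$ balls of all $x\in S$, the events ``there is an edge between $L(x)$ and $L(y)$'' over the $\Omega(|S|^2)$ disjoint pairs $(x,y)$ with $|L(x)|,|L(y)|\ge M$ are mutually independent, each failing with probability at most $(1-p_{xy})^{|L(x)||L(y)|}$; since the relevant connection probability in the Chung--Lu model for typical vertices is $\approx \avgdeg/n$, this product is at most $\exp(-\Omega(|S|^2 M^2/n))$, and choosing $|S|^2 M^2/n = \Theta(n^{1-2\varepsilon})$ gives a failure probability of $\exp(-\Omega(\kappa))$ with $\kappa = \Theta(n^{1-2\varepsilon})$ per labeling. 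A union bound over the at most $2^\kappa$ short labelings, combined with Markov's inequality over the $o(n^{1/2+\varepsilon})$ bad groups, finishes the proof.

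The main obstacle I anticipate is the second step: controlling the neighborhood growth cleanly when $\beta = 2+\varepsilon$. Unlike the $\beta>3$ or \ER{} cases, the presence of many moderately-high-degree vertices means a ball can suddenly expand a lot once it touches one of them, so the sphere sizes $|L(x_i)|$ are not as sharply concentrated; I would need to choose $d$ and $M$ (and the parameter $\varepsilon$-dependent exponents) so that with high probability the \emph{majority} of vertices in $S$ still see a sphere of size $\ge M$, which likely requires either conditioning on the vertices of $S$ having typical (not-too-large) weights, or a more careful two-sided control of the growth that tracks the volume rather than the cardinality of the frontier. Getting the exponent in $M$ exactly right so that the final bound is $\Omega(n^{3/2-\varepsilon})$ rather than something weaker is where the analysis will be most delicate.
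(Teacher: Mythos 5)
Your proposal ports the sphere-growing and labeling-counting machinery of Theorem~\ref{thm_sp3} to the regime $\beta=2+\varepsilon$; the paper does something structurally different, and the difference is forced by a gap in your step~2. For your counting argument to extract $\Omega(|S|^2)$ constant-entropy bits per group you need, for a constant fraction of vertices, a radius $d$ at which the sphere $L(x)$ has \emph{volume} $\Theta(\sqrt n)$ up to constant factors (volume $\ll\sqrt n$ makes $\Pr[L(x)\sim L(y)]=o(1)$ and the per-pair entropy vanishes; volume $\gg\sqrt n$ makes the connection probability $1-o(1)$, again killing the entropy, and also destroys disjointness of the balls and drags in the weight-$n^{1/(\beta-1)}$ core). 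But for $2<\beta<3$ the frontier volume grows doubly exponentially, roughly $v_{k+1}\approx v_k^{1/(\beta-2)}=v_k^{1/\varepsilon}$, so between consecutive radii the volume jumps by a polynomial-in-$n$ factor: there is in general \emph{no} integer radius at which a typical vertex's sphere lands in a constant-factor window around $\sqrt n$. In the \ER{} proof this is exactly the role of the slack parameter $c$ in $d=\lfloor\log_r n/2\rfloor-c$, which costs only a factor $r^{O(c)}=\polylog{n}$; here backing off by even one level changes the volume from $n^{1/2}$ to $n^{\varepsilon/2}$. You flag this as ``the most delicate'' part, but it is not a matter of tuning exponents: the uniform-radius sphere construction does not produce the objects your step~3 needs, so the key lemma of step~2 fails as stated. (A secondary issue: with volume-$\Theta(\sqrt n)$ spheres the close-pair probability is $\Theta(1)$ rather than $o(1)$, so your case-(a) analogue cannot be dispatched by Markov's inequality alone.)

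The paper's proof avoids ball-growing entirely. It fixes $\shigh$, the $\Theta(n^{(3-\beta)/2})$ vertices of weight $\Theta(\sqrt n)$; crucially, edges \emph{within} $\shigh$ are the only edges of the graph with constant entropy ($p_xp_y/\vol{V}=\Theta(1)$), and there are $\Theta(n^{3-\beta})=\Theta(n^{1-\varepsilon})$ of them. For each group $\slow$ of $\sqrt n$ constant-weight vertices, it reveals all edges except those inside $\shigh$ and exhibits $\Omega(n^{1-\varepsilon})$ hub pairs $(x_i,\hat x_i)$ together with degree-one witnesses $y_i,\hat y_i\in\slow$ pendant to them, so that $\dist(y_i,\hat y_i)=3$ iff $x_i\sim\hat x_i$. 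The labels of $\slow$ therefore encode $\Omega(n^{1-\varepsilon})$ independent constant-entropy bits, giving $\Omega(n^{1-\varepsilon})$ per group and $\Omega(n^{3/2-\varepsilon})$ over the $\sqrt n$ groups. You can view this as the degenerate $d=1$ case of your plan in which $L(y_i)=\{x_i\}$ achieves volume $\Theta(\sqrt n)$ not by choosing the right radius but by selecting the rare witnesses that are pendant to a single hub --- which is precisely the device needed to sidestep the non-existence of a good radius. If you want to salvage your route, you would have to replace ``sphere at radius $d$'' by some such weight-filtered set; as written, the proposal does not go through.
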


The proof is conceptually similar to Theorem \ref{thm_sp3}, so we sketch the outline and leave the proof to Appendix \ref{sec_pf_lb2}.

Let $\shigh$ be the set of vertices whose degrees are on the order of $\sqrt n$.
%It is not hard to show that $\size {\shigh} = \Theta(n^{(3-\plexp) / 2})$.
%One can calculate the probability that a vertex $x \in V$ falls in $\shigh$, and then apply Chernoff bound since the weight of every vertex is sampled independently from the power law distribution.
Let $\slow$ be a set of $\sqrt n$ vertices, where each vertex has weight between $\avgdeg$ and $2\avgdeg$.
Such a set is guaranteed to exist because there are $\Theta(n)$ of them.
%We first observe that	every pair of vertices in $\shigh$ will be connected with probability between $1/9$ and $4/9$.
%Next, a constant fraction of vertices in $\shigh$ are connected to some vertex in $\slow$, because the total volume of $\slow$ is on the order of $\sqrt n$.

We first reveal all edges of $G$ other than the ones between $\shigh$.
We show that at this stage, most vertices in $\slow$ are more than 3 hops away from each other.
If for some pair $(x, y)$ in $\slow$ whose distance is larger than three, and both $x$ and $y$ connect to exactly one (but different) vertex in $\shigh$,
then knowing whether $\dist(x, y) = 3$ will reveal whether their neighbors in $\shigh$ are connected by an edge.

Based on the observation, we show that the total labeling length of $\slow$ is at least $\tilde{\Omega}(n^{3 - \beta})$.
This is because the random bit between a vertex pair in $\shigh$ has entropy $\Omega(n^{2-\plexp})$.
Since there are $\Theta(n)$ pairs of vertices in $\shigh$, the entropy of the labelings of $\slow$ must be $\Omega(n^{3-\beta})$ (hence, its size must also be at least $\Omega(n^{3-\beta})$).
Similar to Theorem \ref{thm_sp3}, this argument is applied to $\sqrt{n}$ disjoint sets of ``$\slow$'', summing up to an overall lower bound of $\Omega(n^{7/2-\beta})=\Omega(n^{3/2-\epsilon})$.
%, we refer the reader to the Appendix \ref{sec_lb2}.

\section{Conclusions and Future Work}\label{sec:discuss}

In this work, we presented a pruning based landmark labeling algorithm.
The algorithm is evaluated on a diverse collection of networks.  It demonstrates improved performances compared to the baseline approaches.
%Compared to previous approaches, our approach either reduces the number of landmarks used or the stretch of the estimated distances.
We also analyzed the algorithm on random power law graphs and \ER~ graphs.
We showed upper and lower bounds on the number of landmarks used for the \ER~ random graphs and random power law graphs.

There are several possible directions for future work.
One direction is to close the gap in our upper and lower bounds for random power law graphs.
We believe that any improved understanding can potentially lead to better algorithms for real world power law graphs as well.
Another direction is to evaluate our approach on transportation networks, which correspond to another important domain in practice.

\bigskip
\noindent{\bf Acknowledgements.} The authors would like to thank Fan Chung Graham, Tim Roughgarden, Amin Saberi and D. Sivakumar for useful feedback and suggestions at various stages of this work.
Also, thanks to the anonymous referees for their constructive reviews.
Hongyang Zhang is supported by NSF grant 1447697.

\bibliographystyle{ACM-Reference-Format}
\balance
\bibliography{rf}

\appendix

\onecolumn
\section{Proof of Theorem~\ref{thm_sp2_ub}: Upper Bounds for $2 < \beta < 3$}
\label{sec:pf2}

In this section, we present the proof of Theorem \ref{thm_sp2_ub}, which analyzes the performance of Algorithm \ref{alg_approx} on random power law graphs.
We show that Algorithm \ref{alg_approx} outputs a 2-hop cover in Proposition \ref{lem:acc2}.
Then we bound the landmark set sizes in Proposition \ref{lem:size2}.

We introduce a few notations first.
For a set of vertices $S \subseteq V$, let $d_S = \sum_{x \in S} d_x$ denote the sum of their degrees.
Denote by $x \sim y$ if there is an edge between $x, y$.
For two disjoint sets $S$ and $T$, denote by $S \sim T$ if there exists an edge between $S$ and $T$, and $S \nsim T$ if there does not exist any edge between $S$ and $T$.
We use $\starnode = \argmax_{x \in V}~p_x$ to denote the maximum weight vertex.
For any integer $1 \le i \le n-1$,
recall that $\level_i(x) = \set{y \in V : \dist(x, y) = i}$
denotes the set of vertices whose distance from $x$ is equal to $i$.
And $\neigh_i(x) = \set{y \in V : \dist(x, y) \le i}$ denotes the set of vertices whose distance from $x$ is at most $i$.
Let $\alpha_i(x)$ denote the number of edges between $N_i(x)$ and $V \backslash N_i(x)$.
Let $\vols{S}:=\sum_{x\in S} p_x^2$ denote the second moment of any $S \subseteq V$.

Throughout the section, we assume that $\seqdeg$ satisfies all the properties in Proposition \ref{prop:deg2} without loss of generality.

\begin{proposition}\label{lem:acc2}
	In the setting of Theorem \ref{thm_sp2_ub}, Algorithm \ref{alg_approx} returns a 2-hop cover $F(\cdot)$ with high probability.
\end{proposition}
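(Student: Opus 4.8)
The plan is to show that for every pair of vertices $x_i, x_j \in V$, their landmark sets $F(x_i)$ and $F(x_j)$ intersect with high probability, and then finish by a union bound over all $O(n^2)$ pairs. Fix such a pair. The first case is easy: if the local balls already overlap, i.e. $N_{l_i - 1}(x_i) \cap N_{l_j - 1}(x_j) \neq \varnothing$, then that common vertex is stored in both landmark sets (by the first \textbf{for} loop of \localBfsFw), and we are done. So we may assume the two balls $N_{l_i-1}(x_i)$ and $N_{l_j-1}(x_j)$ are disjoint. By the stopping rule defining $l_i$, the boundary $\Gamma_{l_i - 1}(x_i)$ (together with all of $N_{l_i-1}(x_i)$) has at least $\delta n^{(\beta-2)/(\beta-1)}$ edges leaving it; in particular the volume $\Vol(N_{l_i-1}(x_i))$ is $\gtrsim \delta n^{(\beta-2)/(\beta-1)}$, and likewise for $x_j$. (I would first argue that $l_i$ is well-defined and that $N_{l_i-1}(x_i)$ is not too large, using the degree-sequence properties of Proposition \ref{prop:deg2} and concentration of neighborhood growth, so that the two balls are genuinely ``small'' subsets of $V$.)

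The heart of the argument is then to show: conditioned on having revealed the two disjoint balls, with high probability there is an edge from one ball-boundary to the other, OR both boundaries connect into a common high-degree vertex already designated as a global landmark. I would split on whether the heaviest vertex $\starnode$ (degree $\asymp n^{1/(\beta-1)} \ge K$, hence a global landmark and in everyone's landmark set since the first loop BFS's it to all vertices) is reached. Case (i): if both $N_{l_i-1}(x_i)$ and $N_{l_j-1}(x_j)$ contain $\starnode$, or more generally any common global landmark, intersection is immediate. Case (ii): otherwise, I use the Chung--Lu edge probabilities. Because each ball-boundary has volume $\gtrsim \delta n^{(\beta-2)/(\beta-1)}$ and $\Vol(V) = \Theta(n)$, a Chernoff/second-moment computation on the sum of independent indicator edges between the two boundaries shows $\Pr[\Gamma_{l_i-1}(x_i) \nsim \Gamma_{l_j-1}(x_j)] \le \exp(-\Omega(\delta^2 n^{2(\beta-2)/(\beta-1)}/n)) = \exp(-\Omega(\delta^2 n^{(\beta-3)/(\beta-1)}))$. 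When $\beta \ge 2.5$ this exponent is $\ge \Omega(\log^2 n \cdot n^{(\beta-3)/(\beta-1)})$ which for the relevant $K = \sqrt n$ regime I need to check actually beats $2\log n$; the cleaner route is to route through a moderately heavy intermediate vertex: the volume of vertices of weight $\asymp K$ is large enough that each boundary hits such a vertex with high probability, and all those vertices (having degree $\ge K$) are global landmarks, so both boundaries connect to the same landmark ``shell.'' Once an edge $(u, v)$ is found with $u \in \Gamma_{l_i-1}(x_i)$, $v \in \Gamma_{l_j-1}(x_j)$, the tie-breaking rule in line 15--17 of Algorithm \ref{alg_approx} guarantees the lower-outdegree endpoint's side stores the other endpoint as a landmark, so $F(x_i) \cap F(x_j) \neq \varnothing$.

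For the bookkeeping I would reveal the balls sequentially, one vertex of the pair at a time, so that the edges between $\Gamma_{l_i-1}(x_i)$ and $\Gamma_{l_j-1}(x_j)$ are still fresh independent coins when I apply the concentration bound — exactly as in the exposure argument used in Theorem \ref{thm_sp3}. The failure probability per pair must be made $o(n^{-2})$ so the final union bound closes; this forces the $\log$ factors in $\delta = 4\avgdeg \cdot \log^2 n$. The main obstacle I anticipate is controlling the neighborhood-growth concentration for $2 < \beta < 3$: unlike the \ER\ case (Fact \ref{fact_er}), the growth rate here is governed by a ratio of moments of the truncated power law and can fluctuate, so I expect the real work is a careful lemma — presumably packaged in Proposition \ref{prop:deg2} and companion neighborhood-growth lemmas — showing that $\Vol(N_{l_i-1}(x_i))$ is sandwiched between $\delta n^{(\beta-2)/(\beta-1)}$ and, say, $n^{(\beta-2)/(\beta-1)} \cdot \polylog(n)$ with the requisite probability, and that $l_i$ is not so large that the ball already swallows a constant fraction of $V$ (in which case intersection is trivial anyway). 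Handling the boundary value $\beta$ near $2.5$, where the two expressions for $K$ meet, and the directed-graph subtlety (forward vs.\ backward balls) are the remaining fiddly points.
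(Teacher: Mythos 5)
There is a genuine gap, and it sits exactly where you flagged discomfort. Your primary probabilistic step --- a Chernoff bound showing an edge exists between the two disjoint boundaries --- is vacuous in this regime: with boundary volumes of order $\delta n^{(\beta-2)/(\beta-1)}$ and $\vol{V}=\Theta(n)$, the exponent $\delta^2 n^{(\beta-3)/(\beta-1)}$ is $o(1)$ for every $2<\beta<3$, so two disjoint boundaries typically have \emph{no} edge between them, and no union bound over pairs can be closed this way. The paper never needs this event probabilistically: when $\dist(x,y)=l(x)+l(y)-1$, a boundary-to-boundary edge exists \emph{deterministically} on a shortest path, and the line 15--17 tie-breaking rule catches it. Your fallback (``both boundaries hit the shell of weight-$\asymp K$ global landmarks'') has a second, more conceptual problem: a 2-hop cover must return $\dist(x,y)$ \emph{exactly}, so finding a common global landmark $z$ is worthless unless $\dist(x,z)+\dist(z,y)=\dist(x,y)$; two boundaries touching \emph{different} weight-$K$ vertices gives a non-empty intersection but no control on the estimate. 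Throughout, your proposal argues for $F(x_i)\cap F(x_j)\neq\varnothing$ rather than for exactness.

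The paper's repair is to use the single heaviest vertex $\starnode$, of weight $\approx n^{1/(\beta-1)}$, rather than the weight-$K$ shell: the probability that a level set of volume $\ge \delta n^{(\beta-2)/(\beta-1)}/3$ is not adjacent to $\starnode$ is $\exp(-\Omega(\delta n^{(\beta-2)/(\beta-1)}\cdot p_{\starnode}/\vol{V}))=\exp(-\Omega(\log^2 n))$, a \emph{per-vertex} event (so the union bound is over $n$ vertices, not $n^2$ pairs) asserting $\dist(\starnode,x)\le l(x)$ for all $x$. On this event one gets the deterministic inequality $\dist(x,y)\le l(x)+l(y)$, which reduces everything to a three-way case split: $\dist(x,y)\le l(x)+l(y)-2$ (balls overlap, shared interior vertex is exact), $\dist(x,y)=l(x)+l(y)-1$ (shortest path crosses the two boundaries by an edge, and either an endpoint has degree $\ge K$ or the degree-comparison rule stores the right endpoint --- exact), and $\dist(x,y)=l(x)+l(y)$ (routing through $\starnode$ gives at most $l(x)+l(y)$, hence exactly $\dist(x,y)$). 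Matching each case to the landmark that certifies it exactly is the content you are missing; without it, your argument establishes at best a 2-hop ``reachability'' cover with uncontrolled additive stretch.
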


\begin{proof}
	Recall that $l_i$ is the radius of the local ball from $x_i$.
	Denote by $l(x_i) = l_i$ for all $i \ge K+1$.
	Let $\Omega_S$ denote the set of graphs that satisfies
		\begin{align*}
			\level_{l(x)}(x) = \emptyset \text{ or } \dist(\starnode, x) \le l(x),
			\forall x \in V.
		\end{align*}
	We argue that Algorithm \ref{alg_approx} finds a 2-hop cover
	for any $G \in \Omega_S$, and
		\begin{align*}
			1 - \pr{\Omega_S} \le 2/n
		\end{align*}
	This would conclude the proof.

	We first argue that Algorithm \ref{alg_approx} is correct if $G \in \Omega_S$.
	Let $x$ and $y$ be two different vertices in $V$.
	If $x$ and $y$ are not reachable from each other, then clearly
	$F(x) \cap F(y) = \emptyset$.
	If $x$ and $y$ are reachable from each other, consider their distance
	$\dist(x, y)$.
	Note that when $\level_{l(x)}(x)$ (or $\level_{l(y)}(y)$) is empty,
	then $F(x)$ (or $F(y)$) includes the entire connected component that contains
	$x$ (or $y$). Therefore, $y \in F(x)$, vice versa.
	When none of them are empty,
	we know that $\dist(x, \starnode) \le l(x)$ and $\dist(y, \starnode) \le l(y)$ since $G \in \Omega_S$.
	We consider three cases:
	\begin{itemize}
		\item If $\dist(x, y) \le l(x) + l(y) - 2$, then there exists a node $z$ such
		that $\dist(x, z) \le l(x) - 1$ and $\dist(y, z) \le l(y) - 1$. By our
		construction, $z$ is in $F(x)$ and $F(y)$.
		\item If $\dist(x, y) = l(x) + l(y) - 1$, then consider the two nodes $z$ and $z'$ on one of the shortest path from $x$ to $y$,
		with $\dist(x, z) = l(x) - 1$ and $\dist(y, z) = l(y)$. If either $d_z$ or
		$d_{z'}$ is	at least $K$, then
		they have been added as a landmark to every node in $V$. Otherwise,
		assume without loss of generality that $d_z \ge d_{z'}$. Then our
		construction adds $z$ into $F(y)$ and clearly $z$ is also in
		$F(x)$, hence $z$ is a common landmark for $x$ and $y$.
		\item If $\dist(x, y) = l(x) + l(y)$, then clearly $\starnode$ is a common landmark for $x$ and $y$.
	\end{itemize}
	We now bound $1 - \pr{\Omega_S}$. Clearly,
	\begin{align*}
		&\ 1 - \pr{\Omega_S} \\
		\le&\ \sum_{x \in V} \pr{\level_{l(x)}(x) \neq
			\emptyset, \dist(\starnode, x) > l(x)} \\
		=&\ \sum_{x \in V}\sum_{k = 0}^{n-1} \pr{l(x) = k+1, \level_{k+1}(x) \neq \emptyset, \dist(\starnode, x) > k+1}
	\end{align*}
	Note that $l(x) = k+1$ and $\level_{k+1}(x) \neq \emptyset$ is the same as the event that:
	\begin{itemize}
		\item $\alpha_i(x) \le \delta n^{1 - \maxdeg}$, for $i = 0,\dots, k-1$;
		\item $\alpha_k(x) > \delta n^{1 - \maxdeg}$.
	\end{itemize}
	Hence,
	\begin{align}
		&\ \pr{l(x) = k+1, \level_{k+1}(x) \neq \emptyset,
				\dist(\starnode, x) > k+1} \nonumber \\
		\le &\ \pr{\alpha_k(x) > \delta n^{1 - \maxdeg},
				\dist(\starnode, x) > k+1} \nonumber \\
		\le &\ \pr{\alpha_k(x) > \delta n^{1 - \maxdeg},
				\vol{\level_k(x)} \le \frac{\delta n^{1 - \maxdeg}} 3}
					\label{eq:pr1} \\
		+		&\ \pr{\vol{\level_k(x)} > \frac{\delta n^{1 - \maxdeg}} 3,
				\dist(\starnode, u) > k+1} \label{eq:pr2}
	\end{align}
	For Equation \eqref{eq:pr1}, consider how $\alpha_k(x)$ is discovered when
	we do the level set expansion from node $x$.
	Conditioned on $a = \vol{\level_k(x)} \le \delta n^{1-\maxdeg} / 3$,
	$\alpha_k(x)$ is the sum of 0-1-2
	independent random variables, with expected value less than
	$\delta n^{1-\maxdeg} / 3$.
	Hence by Chernoff bound,
	Equation \eqref{eq:pr1} is at most
	$\exp(-\delta n^{1 - \maxdeg} / 6) \sim o(n^{-3})$.
	For Equation \eqref{eq:pr2}, conditioned on
	$\vol{\level_k(x)} \ge \delta n^{1 - \maxdeg} / 2$
	and $\starnode \notin \neigh_k(x)$,
	\begin{align*}
		\pr{\starnode \nsim \level_k(x)}
		\le \exp\left(-\frac{\delta n^{1 - \maxdeg} \weight {\starnode}} {2{\vol{V}}}\right)
		\sim o(n^{-3})
	\end{align*}
	The first inequality is because of Proposition \ref{prop:connect}.
	The second inequality is because ${\vol{V}} \sim \avgdeg n \pm o(n)$ by Proposition
	\ref{prop:deg2}.
	In summary, $1 - \pr{\Omega_S} \le 2 / n$.
\end{proof}

Next we consider the landmark set sizes.
There are three parts in each landmark set:
(1) the heavy nodes whose degrees are at least $K$;
(2) all the level sets before the last layer;
(3) the last layer that we carefully constructed.
It's not hard to bound the first part,
since the degree of a node is concentrated near its weight,
and it is not hard to show that the number of vertices whose weight is $\Omega(K)$ is ${\textsc O}(n K^{1-\plexp})$.
The second part can be bounded by the maximum number of layers,
 which is at most $O(\log n)$, the diameter of $G$.
For the third part, the idea is that before adding all the
nodes on the boundary layer, we already have a $(+1)$-stretch
scheme.
Therefore, for a given vertex $x$, it is enough if we only
add neighbors whose degree is bigger than themselves.
This reduces the amount
of vertices from $d_x$ to ${\textsc O}(d_x^{3 - \plexp})$, where $d_x$ denotes the degree of any vertex $x \in V$.
%Hence the amount of saving is significant as $\plexp$
%gets close to $2$.

We first show that the volume of all the level sets
is at most $\textsc{O}(\delta n^{1 - \maxdeg})$
before the boundary layer in Proposition \ref{lem:volIter} and \ref{lem:pfp2}.
For the rest of the section, denote by $\alpha_k = \alpha_k(x)$ for any
$0 \le k \le n-1$, unless there is any ambiguity on the vertex we are considering.
Recall that $\alpha_k$ denotes the number of edges between
$\level_k(x)$ and $V \backslash \neigh_{k-1}(x)$.

\begin{proposition}\label{lem:volIter}
	Let $x$ be a fixed node.
	Let $k \lesssim \log n$.
	Let $\Omega_k$ denote the set of graphs	such that
		\[ \vol{\level_i(x)} < 4\delta n^{1 - \maxdeg},
			\text{ for any } 0 \le i \le k-1, \]
	and \[ \vol{\level_k(x)} > 4\delta n^{1 - \maxdeg}. \]
	Then $\pr{\alpha_k \le \delta n^{1 - \maxdeg} \mid \Omega_k}
		\le n^{-2}$.
\end{proposition}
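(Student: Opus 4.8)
The plan is to reveal the breadth-first exploration from $x$ exactly up to the round that pins down $\level_k(x)$, observe that the conditioning event $\Omega_k$ is measurable with respect to what has been revealed, and then argue that, conditionally, $\alpha_k$ is a sum of independent indicator random variables whose mean comfortably exceeds $\delta n^{1-\maxdeg}$; a multiplicative Chernoff lower-tail bound then finishes it with room to spare.

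Concretely, explore $G$ in rounds, where round $i$ reveals, for every $v\in V\setminus\neigh_i(x)$, the potential edges between $v$ and $\level_i(x)$; this determines $\level_{i+1}(x)$. After rounds $0,1,\dots,k-1$ the sets $\level_0(x),\dots,\level_k(x)$ are fixed, so $\vol{\level_0(x)},\dots,\vol{\level_k(x)}$ and hence $\Omega_k$ are determined; and the only still-unrevealed edges incident to $\level_k(x)$ other than back toward $\level_{k-1}(x)$ are those internal to $\level_k(x)$ and those between $\level_k(x)$ and $V\setminus\neigh_k(x)$, each an independent $\mathrm{Bernoulli}\bigl(\min(p_up_v/\vol V,1)\bigr)$ by the Chung--Lu model. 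Since $\alpha_k$ counts exactly the edges between $\level_k(x)$ and $V\setminus\neigh_{k-1}(x)$ --- i.e.\ precisely those two independent families --- it is conditionally a sum of independent $\{0,1\}$ variables. It therefore suffices to prove the tail bound for an arbitrary fixed outcome of rounds $0,\dots,k-1$ lying in $\Omega_k$, then average (tower property).

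Fix such an outcome. We have $\vol{\neigh_{k-1}(x)}=\sum_{i=0}^{k-1}\vol{\level_i(x)}\le 4k\,\delta n^{1-\maxdeg}=o(n)$ (using $k\lesssim\log n$ and $\delta=4\avgdeg\log^2 n$) and $\vol V=\Theta(n)$ by Proposition \ref{prop:deg2}. I lower bound the conditional mean $\mu$ of $\alpha_k$, splitting on whether $\level_k(x)$ contains a vertex of weight $\Omega(\sqrt n)$. If it does, that vertex alone contributes $\Omega(\sqrt n)$ to $\mu$ (its $\Omega(\sqrt n)$ expected incident edges mostly land outside the $o(n)$-volume set $\neigh_{k-1}(x)$), and $\delta n^{1-\maxdeg}=o(\sqrt n)$ since $\maxdeg>\tfrac12$ for $\beta<3$. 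If not, all weights in $\level_k(x)$ are $O(\sqrt n)$, the Chung--Lu truncation correction to the bipartite/internal edge means is $o(\vol{\level_k(x)})$ by a routine power-law tail estimate (Proposition \ref{prop:deg2}), and since $\vol{\neigh_{k-1}(x)}=o(n)$ one gets $\mu\ge(\tfrac12-o(1))\vol{\level_k(x)}$ when $\vol{\level_k(x)}\le\vol V/2$, and $\mu=\Theta(n)$ (from internal edges of $\level_k(x)$ alone) when $\vol{\level_k(x)}>\vol V/2$. Either way, using $\vol{\level_k(x)}>4\delta n^{1-\maxdeg}$ on $\Omega_k$, we obtain $\mu\ge\tfrac43\,\delta n^{1-\maxdeg}$.

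Finally, the multiplicative Chernoff lower-tail inequality for a sum of independent indicators gives, writing $\mathcal E$ for the revealed exploration,
\[
	\pr{\alpha_k\le\delta n^{1-\maxdeg}\mid\mathcal E}
	\le\pr{\alpha_k\le\tfrac34\mu\mid\mathcal E}
	\le\exp\!\bigl(-\mu/32\bigr)
	\le\exp\!\bigl(-\delta n^{1-\maxdeg}/32\bigr),
\]
which, since $\delta n^{1-\maxdeg}=4\avgdeg(\log^2 n)\,n^{1-\maxdeg}\ge 4\avgdeg\log^2 n$ for $2<\beta<3$, is $\exp(-\Omega(\log^2 n))\le n^{-2}$ for $n$ large; averaging over outcomes in $\Omega_k$ yields the claim. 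The substantive part is not a new idea but the bookkeeping in the previous paragraph: choosing the exploration so that $\alpha_k$ is conditionally a sum of independents with $\Omega_k$ still measurable, and --- the genuinely delicate point --- showing the conditional mean clears the threshold $\delta n^{1-\maxdeg}$ uniformly, despite the Chung--Lu truncation and the possibility that $\level_k(x)$ is itself volume-heavy, both of which are absorbed into the power-law tail and second-moment bounds of Proposition \ref{prop:deg2}.
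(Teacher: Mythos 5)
Your proposal is correct and follows essentially the same route as the paper's proof: condition on the BFS exploration that determines $\level_k(x)$ (so that $\Omega_k$ is measurable and $\alpha_k$ becomes a sum of independent indicators), lower-bound the conditional mean by $(1-o(1))\vol{\level_k(x)} > (4-o(1))\delta n^{1-\maxdeg}$ using the smallness of $\vol{\neigh_{k-1}(x)}$ and the power-law tail bounds of Proposition \ref{prop:deg2}, and finish with a multiplicative Chernoff bound giving $\exp(-\Omega(\delta n^{1-\maxdeg})) \le n^{-2}$. Your treatment is in fact slightly more careful than the paper's, which tacitly assumes all weights in $\level_k(x)$ are $O(\sqrt n)$, whereas you explicitly dispatch the heavy-vertex and volume-heavy cases.
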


\begin{proof}
	Let $a = \vol{\level_k(x)}$ and $b = \vol{\neigh_{k-1}(x)}$.
	Conditioned on $\Omega_k$,
		\[ a > 4\delta n^{1-\maxdeg} \text{ and } b \le 4k\delta n^{1-\maxdeg}. \]
	Clearly, the random variable $\alpha_k$ is the sum of
	independent 0-1 random variables. Let $\mu$ denote its expected value.
	For each $y \in \level_k(x)$, we know that $\w y \le a = O(\sqrt n)$.
	Let $\mu_y$ denote the expected number of edges between $y$ and
	$V \backslash \neigh_{k-1}(x)$, then
		\begin{align*}
			\mu_y &= \sum_{z : z \neq y \land z \notin \neigh_{k-1}(x)}
				\min(\frac {\w y \w z} {\vol{V}}, 1) \istrue{\w z \le \sqrt n}\\
				&\ge \w y (1 - \frac{b + \sum_{z \in V} \w z \istrue{\w z \ge \sqrt n}} {\vol{V}}) = \w y(1 - \kappa(n))
		\end{align*}
	because of Proposition \ref{prop:deg2}.
	And $\mu = \sum_{y \in \level_k(x)} \mu_y = (1 - o(1)) a.$
	Let $c = \frac {\mu} {\delta n^{1-\maxdeg}} \ge 2 - o(1)$.
	By Chernoff bound,
		\begin{align*}
			\pr{\alpha_k \le \delta n^{1-\maxdeg} \mid \Omega_k}
			\le \exp(- \frac {(c - 1)^2 \delta n^{1-\maxdeg}} {4})
			\sim o(n^{-2})
		\end{align*}
\end{proof}

\begin{proposition}\label{lem:pfp2}
	Let $x$ be a fixed vertex.
	Let $0 \le k \lesssim \log n$.
	Denote by $\Omega^*_k$ the set of graphs such that
		\[ \alpha_i \le \delta n^{1-\maxdeg}, \text{ for any } 0 \le i \le k \]
	Then $\pr{\vol{\level_k(x)} > 4\delta n^{1-\maxdeg}, \Omega^*_k}
		\le (k+1) n^{-2} $.
\end{proposition}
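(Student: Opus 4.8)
The plan is to reduce the claim to Proposition \ref{lem:volIter} by decomposing over the first level at which the level-set volume crosses the threshold $4\delta n^{1-\maxdeg}$. On the event $\{\vol{\level_k(x)} > 4\delta n^{1-\maxdeg}\}$ there is a smallest index $j$ with $0 \le j \le k$ for which $\vol{\level_j(x)} > 4\delta n^{1-\maxdeg}$, and I would split the event in question according to the value of this $j$. The key observation is that the $j$-th piece --- on which $\vol{\level_i(x)} \le 4\delta n^{1-\maxdeg}$ for all $i < j$ by minimality, and $\vol{\level_j(x)} > 4\delta n^{1-\maxdeg}$ by definition of $j$ --- is contained in the set $\Omega_j$ of Proposition \ref{lem:volIter} (with the ``$k$'' of that statement set to $j$; the fact that the earlier-level inequalities are ``$\le$'' rather than the ``$<$'' appearing in the definition of $\Omega_j$ is harmless, since the weights come from a continuous distribution and so $\vol{\level_i(x)}$ equals the threshold exactly with probability zero).

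Next I would intersect with $\Omega^*_k$. Since $j \le k$, the constraint $\alpha_j \le \delta n^{1-\maxdeg}$ is already part of $\Omega^*_k$, so the $j$-th piece is contained in $\Omega_j \cap \{\alpha_j \le \delta n^{1-\maxdeg}\}$. By Proposition \ref{lem:volIter} (whose hypothesis $j \lesssim \log n$ holds because $j \le k \lesssim \log n$),
\[
\pr{\Omega_j \cap \{\alpha_j \le \delta n^{1-\maxdeg}\}} = \pr{\alpha_j \le \delta n^{1-\maxdeg} \mid \Omega_j}\cdot \pr{\Omega_j} \le n^{-2}.
\]
Summing over the at most $k+1$ values $j \in \{0,1,\dots,k\}$ then yields the bound $(k+1)n^{-2}$. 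The base case $j=0$ is just Proposition \ref{lem:volIter} with $k=0$: here $\level_0(x) = \{x\}$, the event reads $p_x > 4\delta n^{1-\maxdeg}$ together with $\alpha_0 = d_x \le \delta n^{1-\maxdeg}$, and using that $x$ has weight $O(\sqrt n)$ in every application of this proposition we get $\E[d_x] = (1-o(1))p_x \ge (4-o(1))\delta n^{1-\maxdeg}$, so a Chernoff bound gives probability $o(n^{-2})$, exactly as in the proof of Proposition \ref{lem:volIter}.

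There is no genuinely new obstacle here: the substantive estimates --- concentration of $\alpha_k$ given a large boundary volume, together with the degree-sequence facts of Proposition \ref{prop:deg2} used to control volumes, truncations at weight $\sqrt n$, and the heavy-tail corrections --- are already encapsulated in Proposition \ref{lem:volIter}. The only points requiring care are (i) phrasing the first-crossing decomposition so that each piece literally satisfies the hypotheses of Proposition \ref{lem:volIter}, and (ii) observing that $\Omega^*_k$ does pin down the relevant $\alpha_j$, which it does precisely because the crossing level $j$ is at most $k$.
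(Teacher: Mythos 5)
Your proposal is correct and is essentially the paper's own argument: the paper implements the same first-crossing decomposition as a telescoping sum $\pr{S_i} - \pr{S_{i+1}}$ over nested events $S_i$, where each difference is exactly the event that level $i$ is the first to exceed the volume threshold, and each such piece is bounded by $n^{-2}$ via Proposition~\ref{lem:volIter} using that $\Omega^*_k$ supplies the needed constraint on $\alpha_i$. Your explicit handling of the $\le$ versus $<$ boundary case and of the $j=0$ base case matches the paper's (implicit and explicit, respectively) treatment.
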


\begin{proof}
	When $k = 0$, the claim is proved by Proposition \ref{lem:volIter}.
	When $k \ge 1$, we will repeatedly apply Proposition \ref{lem:volIter} to
	prove the statement.
	For any values of $i$ smaller than or equal to $k$,
	let	$S_i \subset \Omega^*_k$ denote the set of graphs that also satisfy:
		(1) $\vol{\level_j(x)} \le 4\delta n^{1-\maxdeg}$,
			for any $0 \le j \le i-1$;
		(2) $\vol{\level_k(x)} > 4\delta n^{1-\maxdeg}$.
	We show that $\pr{S_i} - \pr{S_{i+1}} \le n^{-2}$ if $0 \le i \le k-1$,
	and	$\pr{S_k} \le n^{-2}$.
	The conclusion follows from the two claims.

	For the first part,
	\begin{align*}
		\pr{S_i} - \pr{S_{i+1}} &=
			\pr{\vol{\level_i(x)} > 4\delta n^{1-\maxdeg}, S_i} \\
			&\le \pr{\Omega_i, \alpha_i \le \delta n^{1-\maxdeg}} \le n^{-2}
	\end{align*}
	The first inequality is because if $G \in S_i$ and $G$ satisfies
	$\vol{\level_i(x)} > 4\delta n^{1-\maxdeg}$,
	then $G \in \Omega_i$.
	Also $\alpha_i \le \delta n^{1-\maxdeg}$ since $G \in S_i \subset \Omega^*_k$.
	The second inequality is because of Proposition \ref{lem:volIter}.
	The other part can be proved similarly and we omit the details.
\end{proof}

The following proposition helps us control the number of landmarks added in the bottom layer of the local ball.

\begin{proposition}\label{lem:skewdeg}
	Let $x$ be a fixed node with weight $\w x \le 2K$.
	Denote by
		\[ S_x = \set{y \in N(x) : d_x \le d_y \text{ and } d_y \le K} \]
	and let $\dhat x = \size {S_x}$.
	Then
		\[ \pr{\dhat x \ge \max(c_1 {\weight x}^{3 - \plexp}, c_2 \log n)}
			\le n^{-3} \]
	where $c_1 = \frac {192Z}{\avgdeg (\plexp - 2)}$
	and $c_2 = 130$.
\end{proposition}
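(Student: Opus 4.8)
The plan is to replace $\dhat x=\size{S_x}$ by a count that depends only on the \emph{weights} of the neighbours of $x$, and then concentrate. Two elementary bounds set this up. First, $\dhat x\le d_x$ since $S_x\subseteq N(x)$; combined with a Chernoff bound (using $\Exp{d_x}\le\weight x$, as $\sum_y\min(\weight x\weight y/\vol V,1)\le\weight x$) this gives $d_x\le\max(2\weight x,c_2\log n)$ with probability $1-n^{-4}$. Hence when $\weight x\le c_2\log n/2$ we already have $\dhat x\le c_2\log n$ and are done, so assume $\weight x$ exceeds a fixed multiple of $\log n$. A second Chernoff bound, using $\Exp{d_x}\ge\weight x/2$ (which follows from Proposition~\ref{prop:deg2}: the only loss relative to $\weight x$ comes from edge probabilities truncated at $1$, i.e.\ from the few vertices of weight $\gtrsim\vol V/\weight x$, and this is lower order for $\weight x\le 2K$), gives $d_x\ge\weight x/4$ with probability $1-n^{-4}$. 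On that event every $y\in S_x$ satisfies $d_y\ge d_x\ge\weight x/4$, so (dropping the harmless constraint $d_y\le K$) it suffices to bound
\[ C\ :=\ \#\set{y:\ x\adj y,\ d_y\ge\weight x/4}. \]
The point of this reduction is that the \emph{random} threshold $d_x$ has been replaced by a \emph{deterministic} one.

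I would then split $C=C_{\mathrm{high}}+C_{\mathrm{low}}$, with $C_{\mathrm{high}}=\#\set{y:\ x\adj y,\ \weight y\ge\weight x/(12e)}$ counting all high-weight neighbours regardless of degree, and $C_{\mathrm{low}}=\#\set{y:\ x\adj y,\ \weight y<\weight x/(12e),\ d_y\ge\weight x/4}$. For $C_{\mathrm{high}}$ the indicators $\istrue{x\adj y}$ are independent given the weight vector, and
\[ \Exp{C_{\mathrm{high}}}\ \le\ \frac{\weight x}{\vol V}\sum_{y:\,\weight y\ge\weight x/(12e)}\weight y\ \lesssim\ \frac{\weight x}{\avgdeg n}\cdot\frac{nZ}{\plexp-2}\Bracket{\frac{\weight x}{12e}}^{2-\plexp}\ \lesssim\ \frac{Z}{\avgdeg(\plexp-2)}\,\weight x^{3-\plexp}, \]
using $\vol V\ge\avgdeg n/2$ and the power-law moment estimate $\sum_{y:\,\weight y\ge w}\weight y\lesssim\frac{nZ}{\plexp-2}w^{2-\plexp}$ supplied by Proposition~\ref{prop:deg2} (it is $n\int_w^\infty t\cdot Zt^{-\plexp}\,dt$ up to constants). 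A Chernoff bound in its combined additive/multiplicative form then gives $C_{\mathrm{high}}\le\max(c_1\weight x^{3-\plexp},c_2\log n)$ with probability $1-n^{-4}$, the slack in $c_1=192Z/(\avgdeg(\plexp-2))$ absorbing the constants $2$, $(12e)^{\plexp-2}\le 12e$ and the Chernoff factor. For $C_{\mathrm{low}}$ no independence is needed: if $x\adj y$ and $d_y\ge\weight x/4$ then the degree $\tilde d_y$ of $y$ after deleting the edge to $x$ (which is independent of $\istrue{x\adj y}$, with $\Exp{\tilde d_y}\le\weight y<\weight x/(12e)$) is at least $\weight x/8$; since $\Exp{\tilde d_y}$ sits well below $\weight x/8$, the Chernoff tail gives $\Pr[\tilde d_y\ge\weight x/8]\le(2/3)^{\weight x/8}$, so
\[ \Exp{C_{\mathrm{low}}}\ \le\ (2/3)^{\weight x/8}\sum_y\Pr[x\adj y]\ \le\ (2/3)^{\weight x/8}\,\weight x\ \le\ n^{-4}, \]
because $\weight x\gtrsim\log n$, and Markov forces $C_{\mathrm{low}}=0$ with probability $1-n^{-4}$. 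A union bound over the at most four bad events gives $\dhat x\le C_{\mathrm{high}}+C_{\mathrm{low}}\le\max(c_1\weight x^{3-\plexp},c_2\log n)$ except with probability $\le 4n^{-4}\le n^{-3}$.

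I expect the main obstacle to be exactly the dependence structure of $\dhat x$, which is a sum of indicators $\istrue{x\adj y,\ d_x\le d_y\le K}$ correlated both through the shared random threshold $d_x$ and through edges internal to $N(x)$ (which couple the degrees $d_y$). The resolution is the two-stage decoupling above: (i) replace $d_x$ by the deterministic lower bound $\weight x/4$ valid with high probability, reducing the problem to counting edges out of $x$, whose indicators are independent given the weights; and (ii) isolate into $C_{\mathrm{low}}$ the genuinely dependent contribution of low-weight, anomalously high-degree neighbours, whose expectation is polynomially small so that a crude Markov bound suffices. The rest — the power-law moment estimate and the two-sided estimate $\Exp{d_x}=\weight x\pm o(\weight x)$ uniformly over $\weight x\le 2K$ — is routine given Proposition~\ref{prop:deg2}.
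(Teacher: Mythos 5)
Your proposal follows essentially the same route as the paper's proof: dispose of small $\w x$ via $\dhat x\le d_x$ and degree concentration; for large $\w x$ replace the random threshold $d_x$ by the deterministic $\w x/4$ using the lower Chernoff tail; rule out low-weight neighbours by the upper tail of their degrees; and bound the count of remaining (mid-to-high weight) neighbours by an expectation computation plus Chernoff, with the constants $c_1,c_2$ absorbing the slack. Your treatment of $C_{\mathrm{low}}$ (deleting the edge to $x$ to decouple $\tilde d_y$ from $\istrue{x\adj y}$) is if anything cleaner than the paper's union bound over all low-weight vertices.

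One step does need repair. You declare the constraint $d_y\le K$ ``harmless'' and drop it, which forces you to bound $\sum_{y:\,\w y\ge \w x/(12e)}\w y$ over the \emph{unrestricted} weight range. Proposition~\ref{prop:deg2} does not supply the estimate you cite for this: item~(v) only controls the sum truncated at weight $2\sqrt n$ (and carries an extra $\sqrt n\log n$ term), while item~(ii) gives only the much weaker $o(n)$ bound for the tail beyond $2\sqrt n$ --- and $\frac{\w x}{\vol V}\cdot o(n)=o(\w x)$ is far larger than $\w x^{3-\plexp}$. The realized tail sum above $2\sqrt n$ is a heavy-tailed sum of unbounded terms, so its concentration around $\frac{nZ}{\plexp-2}w_0^{2-\plexp}$ is not a routine Chernoff application and would need a separate argument. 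The paper sidesteps this exactly by \emph{keeping} $d_y\le K$: any $y$ with $\w y\ge 2K$ has $d_y>K$ except with probability $o(n^{-4})$ and so cannot lie in $S_x$, which restricts the weight sum to $[\w x/8,\,2K]\subseteq[\w x/8,\,2\sqrt n]$ where Proposition~\ref{prop:deg2}(v) applies verbatim. Reinstating that step makes your argument complete.
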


\begin{proof}
	When $\w x \le c_2 \log n /2$,
		\[ \pr{\dhat x \ge c_2 \log n} \le \pr{d_x \ge c_2 \log n} \le o(n^{-3}) \]
	Now suppose that $\w x > c_2 \log n / 2$.
	Consider any vertex $y$ whose weight is at most $\w x / 8$.
	Then
		\begin{align*}
			\pr{d_y \ge d_x} &\le \pr{d_y \ge \w x / 4} + \pr{d_x < \w x / 4} \sim o(n^{-4})
%			&\le \exp(- (\frac {\w x} 4 - \frac{\w x} 8) / 2)
%				+ \exp(- \frac{9\w x} {128})
%			= n^{-t/32} + n^{-9t/256}
		\end{align*}
	The second inequality is because of Proposition
	\ref{prop:degree}.
	Hence $y$ is not in $S_x$.\\
	Now if $\w y \ge 2K$, then $\pr{d_y \le K} \le \sim o(n^{-4})$.
	Hence $y$ is also not in $S_x$.
	Lastly,	let $X$ denote the set of vertices whose weight is between $[\frac{\w x} 8, 2K]$ and who is connected to $x$.
	We have
		\begin{align*}
			\Exp{X} &= \sum_{y \in V \setminus \set{x}: \w x / 8 \le \w y \le 2K}
				\frac{\w x \w y} {\vol{V}} \\
				&\le \frac{4 \w x} {{\vol{V}}} \max(\frac {8Z}{\plexp-1} n {\w x}^{2 - \plexp}, \sqrt n \log n) \\
				&\le \max(c_1{\w x}^{3 - \plexp}, c_2 \log n) / 3
		\end{align*}
	The first inequality is because of Proposition \ref{prop:deg2}.
	The second inequality is because ${\vol{V}} = \avgdeg n + o(n)$
	by Proposition \ref{prop:deg2},
	and $\w x \le 2K \le 2\sqrt n$.
	From here it is not hard to obtain that
	$\pr{\size X \le \max(c_1{\w x}^{3 - \plexp}, c_2 \log n)} \sim o(n^{-3})$.
\end{proof}

Now we are ready to bound the output size of Algorithm \ref{alg_approx} with the following Proposition.

\begin{proposition}\label{lem:size2}
	In the setting of Theorem \ref{thm_sp2_ub}, we have that the following holds with high probability:
	\begin{itemize}
		\item $\size {F(x)} \lesssim \max\left(n^{\frac{\beta-2}{\beta-1}}, n^{\frac{3-\beta}{4-\beta}}\right) \cdot \log^3 n$ for all $x \in V$;
		\item The algorithm terminates in time $O\left(\max\left(n^{1+\frac{\beta-2}{\beta-1}}, n^{1+\frac{3-\beta}{4-\beta}}\right) \cdot \log^3 n\right)$.
	\end{itemize}
\end{proposition}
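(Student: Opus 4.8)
The plan is to split each label $F(x)$ into the three parts identified above — the global landmarks (vertices of degree $\ge K$), the interior ball $N_{l(x)-1}(x)$, and the boundary contribution coming from $\Gamma_{l(x)}(x)$ — bound each by $\tilde{O}(M)$ with $M:=\max\big(n^{(\beta-2)/(\beta-1)},\,n^{(3-\beta)/(4-\beta)}\big)$, and then take a union bound over $x\in V$. Write $t:=\delta n^{(\beta-2)/(\beta-1)}$ for the edge threshold that defines the radii. The global part has size at most the number of vertices of degree $\ge K$, which by concentration of degree around weight and the power-law tail (the weight-sequence properties assumed throughout the section, cf.\ Proposition \ref{prop:deg2}) is $\tilde{O}(nK^{1-\beta})$; with $K=\sqrt n$ this is $\tilde{O}(n^{(3-\beta)/2})\le\tilde{O}(n^{(\beta-2)/(\beta-1)})$ for $\beta\ge 2.5$, and with $K=n^{\threshold}$ it is exactly $\tilde{O}(n^{(3-\beta)/(4-\beta)})$ for $\beta<2.5$, so in both regimes it is $\tilde{O}(M)$. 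For the interior part, line 12 of Algorithm \ref{alg_approx} inserts all of $N_{l(x)-1}(x)$ into $F(x)$; by minimality of $l(x)$ we have $|\Gamma_k(x)|\le\alpha_{k-1}(x)<t$ for every $1\le k\le l(x)-1$, and the number of layers is $O(\log n)$ with high probability (the giant component has logarithmic diameter; every other component has size $O(\log n)$, and on such a component $\Gamma_{l(x)}(x)=\varnothing$ so $F(x)$ is that whole small component), hence $|N_{l(x)-1}(x)|\le O(\log n)\cdot t=\tilde{O}(n^{(\beta-2)/(\beta-1)})$.

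The delicate part — and the source of the two terms in $M$ — is the boundary layer. Every vertex added there is either a global landmark (already counted) or lies in $S_z:=\{y\in N(z):d_z\le d_y<K\}$ for some $z\in\Gamma_{l(x)-1}(x)$ with $d_z<K$, so this part has size at most $\sum_{z\in\Gamma_{l(x)-1}(x),\,d_z<K}\dhat z$. By Proposition \ref{lem:skewdeg}, union-bounded over all vertices, each term is $\lesssim\max(p_z^{3-\beta},\log n)$; the $\log n$ terms sum to $\tilde{O}(t)$ because $|\Gamma_{l(x)-1}(x)|<t$, so it remains to bound $\sum_{z\in\Gamma_{l(x)-1}(x),\,p_z\lesssim K}p_z^{3-\beta}$. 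Here I would condition on $N_{l(x)-2}(x)$ — for which $\vol{\Gamma_{l(x)-2}(x)}\le 4t$ with high probability by Propositions \ref{lem:volIter} and \ref{lem:pfp2} — and then expose the edges leaving $\Gamma_{l(x)-2}(x)$. A dyadic decomposition by weight shows that the expected number of frontier vertices with $p_z\in[2^j,2^{j+1})$ is at most (the number of such vertices in $V$) times $\Pr[z\sim\Gamma_{l(x)-2}(x)]$, which is $\lesssim \big(n\,2^{-j(\beta-1)}\big)\cdot\frac{2^j\cdot 4t}{\vol V}\lesssim\delta\,2^{j(2-\beta)}n^{(\beta-2)/(\beta-1)}$, so their contribution to the sum is $\lesssim\delta\,2^{j(5-2\beta)}n^{(\beta-2)/(\beta-1)}$. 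Summing this geometric series over $\xmin\le 2^j\lesssim K$: for $\beta>2.5$ the exponent $5-2\beta$ is negative and the sum is $O(1)$, giving $\lesssim\delta\,n^{(\beta-2)/(\beta-1)}$; for $\beta<2.5$ it is positive, the sum is $\asymp K^{5-2\beta}$, and the algebraic identity $n^{(\beta-2)/(\beta-1)}K^{5-2\beta}=n^{(3-\beta)/(4-\beta)}$ for $K=n^{\threshold}$ gives $\lesssim\delta\,n^{(3-\beta)/(4-\beta)}$ (the case $\beta=2.5$ loses only a logarithmic factor). Concentration around this expectation comes from a Chernoff/Bernstein bound for the sum of the conditionally independent bounded terms $p_z^{3-\beta}\mathbbm{1}\{z\in\Gamma_{l(x)-1}(x)\}$, each at most $K^{3-\beta}\le M$. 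Combining the three parts gives $|F(x)|\lesssim M\log^3 n$ for a fixed $x$, and a union bound over the $n$ choices of $x$ establishes the first bullet.

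For the running time I would use the usual charging argument. Each of the $H$ pruned BFS runs does $\Theta(\vol u)$ work exactly at those $u$ into whose label $x_i$ is inserted (and $O(1)$ work along the pruned frontier), so summed over all $i\le H$ the cost is $\lesssim\sum_u\vol u\cdot|F(u)|\le\vol V\cdot\max_u|F(u)|=\tilde{O}(nM)$, using $\vol V=O(n)$ from Proposition \ref{prop:deg2}, with the per-visit pruning lookups absorbed into the $\log^3 n$ factor. Each of the $n$ local ball explorations costs $\tilde{O}(\vol{N_{l_i-2}(x_i)})=\tilde{O}(t)=\tilde{O}(M)$ to grow layer by layer — here one uses the already-known frontier degrees, hence $\vol{\Gamma_{l_i-1}(x_i)}$, together with Propositions \ref{lem:volIter} and \ref{lem:pfp2} to detect that the edge threshold has been crossed without scanning the (possibly heavy) frontier layer edge by edge — and its boundary-layer landmarks are read off a one-time $O(m)$ precomputation of each vertex's list $S_z=\{y\in N(z):d_z\le d_y<K\}$, of total size $\tilde{O}(M)$ per ball. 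Writing out all labels costs $\sum_x|F(x)|=\tilde{O}(nM)$, so the overall time is $\tilde{O}(nM)=O\!\left(\max\big(n^{1+(\beta-2)/(\beta-1)},\,n^{1+(3-\beta)/(4-\beta)}\big)\log^3 n\right)$.

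The step I expect to be the main obstacle is the boundary-layer estimate: one has to control, with high probability and uniformly in $x$, how many moderately-high-degree vertices the BFS frontier of $x$ can contain, and it is precisely the change in behavior of the resulting geometric series at $\beta=2.5$ that produces the two-regime bound stated in the theorem.
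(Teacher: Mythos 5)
Your decomposition of $F(x)$ into global landmarks, interior ball, and boundary layer, together with the key estimates (the tail count of degree-$\ge K$ vertices, the $O(\log n)\cdot\delta n^{(\beta-2)/(\beta-1)}$ interior bound via minimality of $l(x)$, and the conditional bound on $\sum_{z}p_z^{3-\beta}$ over the frontier followed by Chernoff), is essentially the paper's proof; your dyadic sum over weight classes is just an unrolled version of the paper's direct bound on $\sum_y p_y^{4-\beta}\istrue{p_y\le K}$, whose $\beta=2.5$ phase transition produces the same two regimes. The argument is correct, and your running-time charging argument is if anything more explicit than what the paper provides.
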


%\paragraph*{Remark} To implement Line \ref{ln:skewdeg},
%one can first sort $N(x)$ for each $x \in V$,
%in descending order on their degrees,
%and then create a separate list that truncates the nodes whose degree is
%at least $K$.
%Given this list, one can find
%the set of neighbors of $x$ whose degree is between $[d_x, K]$.
%The amount of time it takes to sort $N(x)$ is
%$\textsc{O}(d_x \log {d_x}) \sim \textsc{O}(d_x \log n)$.
%Hence the total amount of time it takes to sort all the adjacency lists is
%$\textsc{O}(\size E \log n) = \textsc{O} (n \log n)$.

\begin{proof}
	We first bound the number of nodes in $H$.
	By Proposition \ref{lem:heavy}, with probability $1 - n^{-1}$
		\[\size H \lesssim n K^{1-\plexp} \lesssim O(n^{\labelSize}) \]
	Secondly, we bound the number of landmarks added before reaching
	the boundary layer.
	For any vertex $x$, with $i = 0,\dots, l(x)-2$,
	$\size{\level_i(x)} \le \alpha_i(x) = O(\delta n^{1-\maxdeg})$.
	Since $l(x) \le O(\log n)$, the total landmarks for these layers are
	at most $O(n^{1-\maxdeg} \log^3 n)$. The rest of the proof will bound
	the number of landmarks on the boundary layer with depth $l(x)-1$.\\
	Denote by
		\[ \pi_k(x) = \sum_{y \in \level_k(x)} \dhat y \istrue{d_y \le K}
			\quad\text{ for } x \in V, 0 \le k \le n-1 \]
	Hence $\pi_{l(x)-1}(x)$ gives the number of landmarks added
	on the boundary	layer.\\
	Set $c_3 = \frac {3Z} {\abs{2\plexp - 5}} \max(\xmin^{5-2\plexp}, 1)$,
	$\psi = 12c_3 \delta n^{1-\labelExpo}$,
	and $\Delta = \max(c_1 \psi, c_2 \delta n^{1-\maxdeg} \log n)$,
	where $c_1$ and $c_2$ are defined in Proposition \ref{lem:skewdeg}.
	We show that $\pi_{l(x)-1}(x) \le \Delta$ with probability
	$1 - n^{-2}$
	for the rest of the proof ---
	our conclusion follows by taking union bound
	over $x \in V$ and $1 \le l(x) \le O(\log n)$.

	When $l(x) = 1$, $\pi_0(x) = d_x \le K \le \Delta$.
	When $l(x) = k + 1 \ge 2$,
	we know that $G \in \Omega^*_{k-1}$.
	Hence by Proposition \ref{lem:pfp2},
	$\vol{\level_{k-1}(x)} \le 4\delta n^{1-\maxdeg}$ with high probability.
	More concretly,
	\begin{align}
		&\ \pr{l(x) = k+1, \pi_k(x) \ge \Delta} \nonumber \\
		\le&\ (k+1) n^{-2} +
		\Pr[l(x) = k+1,\vol{\level_{k-1}(x)} \le 4\delta n^{1-\maxdeg},
				\pi_k(x) \ge \Delta] \label{eq:i2}
		\end{align}
	Denote by
	\[ w_k = \sum_{y \in \level_k(x)} {\w y}^{3 - \plexp}
		\istrue{\w y \le 2K}. \]
	Conditional on $a = \vol{\level_{k-1}(x)} \le 4\delta n^{1-\maxdeg}$,
	we show that $w_k \le \psi$ with high probability.
	Denote by $\Omega_w$ the set of graphs satisfying $a \le 4\delta n^{1-\maxdeg}$.
	Conditioned on $\Omega_w$,
	$w_k$ is the sum of independent	random variables that are all bounded in
	$[0, (2K)^{3 - \plexp}]$. Hence
		\begin{align*}
			\Exp{w_k} &= \sum_{y \notin \neigh_{k-1}(x)}
				\pr{y \adj \neigh_{k-1}(x)} {\w y}^{3 - \plexp} \istrue{\w y \le 2K} \\
				&\le \frac a {\vol{V}} (\sum_{y \notin \neigh_{k-1}(x)} {\w y}^{4 - \plexp} \istrue{\w y \le 2K})
					&&\text{\hfill (by Proposition \ref{prop:connect})}\\
				&\le \frac a {\vol{V}} (\sum_{y \in V} {\w y}^{4-\plexp} \istrue{\w y \le 2K})	\\
				&\le \frac {a \phi(K) n} {{\vol{V}}} &&\text{ \hfill ({by Proposition \ref{prop:deg2}})} \\
				&\lesssim \frac{a \phi(K)} {\avgdeg}
					&& \text{ (${\vol{V}} = \avgdeg n \pm o(n)$ {by Proposition \ref{prop:deg2}})} \\
				&\le \frac {\psi} 3.
		\end{align*}
	The last line follows by $a \le 4\delta n^{1-\maxdeg}$ 
	and $\phi(K) n^{1-\maxdeg} \le c_2 n^{1-\labelExpo}$.
	Now we apply Chernoff bound on $w_k$,
		\begin{align*}
			\pr{w_k > \psi \mid \Omega_w} \le \exp(-\frac {\psi} {4(2K)^{3-\plexp}})
				\sim o(n^{-2})
		\end{align*}
	because when $2.5 \le \plexp \le 3$,
	\[ \frac {\psi} {K^{3-\plexp}} = \Theta(n^{1 - \maxdeg - \frac {3-\plexp} 2})
		= \Theta(n^{\frac {(\plexp-1)^2 - 2} {2(\plexp-1)}}) \]
	And when $2 < \plexp < 2.5$,
	\[ \frac{\psi} {K^{3-\plexp}} = \Theta(n^{\frac {(3-\plexp)(\plexp-2)} {(4-\plexp)(\plexp-1)}}) \]
	Hence the second part in Equation \eqref{eq:i2} is bounded by
	$o(n^{-2})$ plus
		\begin{align*}
			&\ \Pr[l(x) = k+1, \vol{\level_{k-1}(x)} \le 4\delta n^{1-\maxdeg},w_k \le \psi, \pi_k(x) \ge \Delta] \\
			\le&\ \pr{w_k \le \psi, \alpha_{k-1} \le \delta n^{1-\maxdeg}, \pi_k(x) \ge \Delta} \\
			\le&\ \pr{w_k \le \psi, \size{\level_k(x)} \le \delta n^{1-\maxdeg}, \pi_k(x) \ge \Delta}
		\end{align*}
	In the reminder of the proof we show the above Equation is at most $n^{-2}$.
	Denote by
		\[ \pi'_k(x) = \sum_{y \in \level_k(x)} \dhat y \istrue{\w y \le 2K}\]
	By Proposition \ref{prop:degree},
	$\pr{d_y \le K \mid \w y > 2K} \le \exp(-K / 8) \sim o(n^{-3})$ for any
	$y \in V$.
	Hence $\pi'_k(x) = \pi_k(x)$ with probability at least
	$1 - o(n^{-2})$. Lastly, we have
	\begin{align*}
		\pr{w_k \le \psi, \size{\level_k(x)} \le \delta n^{1-\maxdeg},
			\pi'_k(x) \ge \Delta} \le n^{-2}
	\end{align*}
	Otherwise, there exists a vertex
	$y \in \level_k(x)$ such that $\w y \le 2K$ and
		$\dhat y \ge \max(c_1 {\w y}^{3 -\plexp}, c_2 \log n)$,
	because $\Delta \ge \max(c_1 \psi, c_2 \delta n^{1-\maxdeg} \log n)$.
	This happens with probability at most $n^{-2}$, by taking union bound
	over every vertex with Proposition \ref{lem:skewdeg}.
\end{proof}

\section{Proof of Theorem \ref{thm_lb3}: Lower Bounds for $\plexp > 3$} \label{sec_lb3}

In this section, we present lower bounds for distance labelings on random power law graphs when $\beta > 3$.
We will consider lower bounds for labeling schemes that can estimate all pairs distances up to $K \le \log n / \log r$,\footnote{Note that the average distance of $G$ is $\log n / \log r$ (see e.g. Bollob{\'a}s \cite{B98}).} where $r$ is equal to $\frac{\vols{V}}{\vol{V}}$.
More formally, we say that a labeling scheme is $K$-accurate if for any $x, y \in V$:
\begin{itemize}
	\item[a)] if $\dist(x,y) \le K$, then the labeling scheme returns the exact distance $\dist(x,y)$.
	\item[b)] if $\dist(x,y) > K$, then the labeling scheme returns ``$\dist(x,y) > K$''.
\end{itemize}
Let $d$ be an integer smaller than $K/2$.
\footnote{We assume that $K$ is odd without loss of generality.}
%Since the degree distribution of $G$ has finite variance, the growth rate of $\neigh_i(x)$ will be roughly $r$ in expectation.
%And the size of $\neigh_i(x)$ will be $\Theta(r^d)$ with constant probability.

We may assume without loss of generality for every $x$, the label of $x$ stores the distances between $x$ and all vertices in $\neigh_d(x)$.
This is because the lower bound we are aiming at is larger than the size of $\neigh_d(x)$, we can always afford to store them.
From the labels of $x, y$, either we see a non-empty intersection between $\neigh_d(x)$ and $\neigh_d(y)$, which determines their distance; or the two sets are disjoint, in which case we are certain that $\dist(x, y) \geq 2d+1$.
In a random graph, the event that $\dist(x,y) > 2d+1$, conditioned on $\dist(x, y)\geq 2d+1$ and
$\neigh_d(x)$ and $\neigh_d(y)$ are disjoint, happens with probability
\begin{align*}
	\Theta\Bracket{{\frac {\vol{\level_d(x)} \cdot \vol{\level_d(y)}} {\vol{V}}}},
\end{align*}
by Proposition \ref{prop:connect}, assuming that $\vol{\level_d(x)}\vol{\level_d(y)} \le o(\vol{V})$.
Note that this probability gives us a lower bound on the entropy of the event $\mathbf{1}_{\dist(x, y)>2d+1}$.
Since the labels of $x$ and $y$ determine their distance, if we can find a large number of pairwise independent pairs $(x, y)$ such that the entropy of $\mathbf{1}_{\dist(x,y) > 2d+1}$ is large (e.g. $1 / \polylog n$ suffices), then we obtain a lower bound on the total labeling size.

Our discussion so far suggests the following three step proof plan.
\begin{itemize}
	\item[a)] Pick a parameter $d$ and a maximal set of vertices $S$, such that by ``growing'' the local neighborhood of $S$ up to $d$, $\neigh_d(x), \neigh_d(y)$ are disjoint/independent and $\level_d(x), \level_d(x)$ have large volume, for certain pairs of $x, y \in S$.
	\item[b)] Use the labels of $S$ to infer whether there are edges between $\level_d(x)$ and $\level_d(y)$, for certain pairs of $x, y$.
	Obtain a lower bound on the total label length of $S$ via entropic arguments.
	\item[c)] Partition the graph into disjoint groups of size $\abs{S}$.
	Apply step b) for each group.
\end{itemize}

Clearly, given any two vertices, their neighborhood growth are correlated with each other.
However, one would expect that the correlation is small, so long as the volume of the neighborhood has not reached $\Theta(\sqrt n)$.
To leverage this observation, We describe an iterative process to grow the neighborhood of $S$ up to distance $d$.
For simplicity, we assume that $S = \{x_1,x_2,\ldots\}$ only consists of vertices whose weight are all within $[\avgdeg, 2\avgdeg]$.
The motivation is to find \emph{disjoint} sets $L(x_i)$ for each $x_i$, such that $L(x_i)$ is almost as large as $\level_d(x_i)$, and if $\dist(x_i,x_j)>2d+1$, then there is no edge between $L(x_i)$ and $L(x_j)$.

\bigskip
\noindent{\bf The iterative process:}
Denote by $G_1 = (V_1, E_1)$, where $V_1 = V$ and $E_1 = E$.
For any $i \ge 1$, define $T(x_i)$ to be the set of of vertices in $G_i$ whose distance is at most $d$ from $x_i$.
Define $L(x_i)$ to be the set of vertices in $G_i$ whose distance is equal to $d$ from $x_i$.
More formally,
\begin{align*}
	&T(x_i) := \begin{cases}
		           \set{y : \dist_{G_i}(x_i, y) \le d}, &\mbox{if } x_i \in V_i \\
		           \varnothing, &\mbox{otherwise.}
	\end{cases} \\
	&L(x_i) := \set{y \in T(x_i) : \dist_{G_i}(x_i, y) = d}
\end{align*}
We then define $F_i = F_{i-1} \,\cup\, T(x_i)$ ($F_0 := \varnothing$ by default).
Denote by $G_{i+1}$ to be the induced subgraph of $G_i$ on the remaining vertices $V_{i+1} = V \backslash F_i$.

We note that in the above iterative process, the neighborhood growth of $x_i$ only depends on the degree sequence of $V_i$.
We show that under certain conditions, with high probability, a constant fraction of vertices $x \in S$ satisfy that $\vol{L(x)} \geq \Omega(r^d)$.

\begin{lemma}[Martingale inequality]\label{lem_martingale2}
Let $r  = \frac{\vols{V}}{\vol{V}}$.
Let $d$ be an integer and $S \subseteq V$ be a set of vertices whose weight are all within $[\avgdeg, 2\avgdeg]$ and $\abs{S} \le o(\frac {\vol{V}} {\avgdeg^2 r^{d-1}})$.
Assume that
\begin{itemize}
	\item[i)] $\Pr[\vol{L(x_i)} \ge \avgdeg \cdot r^d \mid \vol{F_{i-1}} \le \abs{S} \avgdeg \cdot r^d \log n, x_i \in V_i] \ge \Omega(1)$, for all $1 \le i \le \abs{S}$;
	\item[ii)] $\E[\vol{N_d(x)}] \lesssim \avgdeg\cdot r^d$, for all $x \in S$;
	\item[iii)] $\Pr[\dist(x,y) \le d] \lesssim \frac{\avgdeg^2 \cdot r^{d-1}} {\vol{V}}$, for all $x, y \in S$.
\end{itemize}
Then with high probability, at least $c_1 \abs{S}$ vertices $x \in S$ satisfy that $\vol{L(x)} \ge \Omega(\avgdeg r^{d})$, for a certain fixed constant $c_1$.
\end{lemma}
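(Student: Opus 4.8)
The plan is to mimic the martingale argument in the proof of Lemma~\ref{lem_martingale_er}, feeding in the power-law-specific estimates through hypotheses~(i)--(iii). The one genuinely new point is that hypothesis~(i) only supplies a \emph{constant} lower bound on the conditional probability that $\vol{L(x_i)}$ is large, so the crude union bound used in the \ER~case must be replaced by an Azuma-Hoeffding step.

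Fix an arbitrary ordering $S = \set{x_1, x_2, \dots}$ and run the iterative process above. For $1 \le i \le \abs S$ set $X_i = 1$ if $x_i \notin V_i$, or $\vol{F_{i-1}} > \abs S \avgdeg r^d \log n$, or $\vol{L(x_i)} \ge \avgdeg r^d$, and $X_i = 0$ otherwise. The first step is to check that $\pr{X_i = 1 \mid X_1,\dots,X_{i-1}} \ge c_0$ for an absolute constant $c_0 > 0$: conditioned on any history, if one of the first two ``escape'' events already holds then $X_i = 1$ deterministically, and otherwise $x_i \in V_i$ and $\vol{F_{i-1}} \le \abs S \avgdeg r^d \log n$, so hypothesis~(i) gives $\vol{L(x_i)} \ge \avgdeg r^d$ with probability $\Omega(1)$ (here one uses that the growth of $L(x_i)$ depends on the past only through the degree sequence of $V_i$). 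Since $X_i \in \set{0,1}$, the partial sums $M_k = \sum_{j \le k} (X_j - \E[X_j \mid X_1,\dots,X_{j-1}])$ form a martingale with increments bounded by $1$, so Azuma-Hoeffding yields $\sum_{i=1}^{\abs S} X_i \ge \tfrac{c_0}{2}\abs S$ with probability $1 - \exp(-\Omega(\abs S))$.

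The second step is to strip off the escape events. For the first event, $x_i \notin V_i$ forces $\dist_G(x_i, x_j) \le d$ for some $j < i$, since each $G_j$ is an induced subgraph of $G$ and hence $\dist_{G_j} \ge \dist_G$; by hypothesis~(iii) the expected number of such pairs in $S$ is $\lesssim \abs S^2 \avgdeg^2 r^{d-1}/\vol V = o(\abs S)$, so Markov's inequality bounds the number of such indices by $o(\abs S)$ with probability $1 - o(1)$. For the second event, $\vol{F_{i-1}} \le \sum_{j \in S} \vol{\neigh_d(x_j)}$, whose expectation is $\lesssim \abs S \avgdeg r^d$ by hypothesis~(ii), so Markov gives $\vol{F_{i-1}} \le \abs S \avgdeg r^d \log n$ for all $i$ with probability $1 - O(1/\log n)$; i.e.\ the second event never fires. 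Putting the three bounds together, with high probability at least $\tfrac{c_0}{2}\abs S - o(\abs S) \ge \tfrac{c_0}{4}\abs S$ indices have $X_i = 1$ while neither escape event holds, which means $\vol{L(x_i)} \ge \avgdeg r^d = \Omega(\avgdeg r^d)$; and $\vol{L(x_i)} > 0$ already implies $x_i \in V_i$, so $L(x_i)$ is the genuine distance-$d$ shell produced by the process. Taking $c_1 = c_0/4$ finishes the proof. I expect the first step to be the main obstacle: one must argue carefully that hypothesis~(i) gives the pointwise conditional bound $c_0$ for \emph{every} admissible history rather than merely on average, and that the martingale increments are genuinely bounded, as these are precisely the places where the constant-probability nature of~(i) could otherwise sink the argument; the other two steps are routine first-moment estimates.
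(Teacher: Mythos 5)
Your proposal is correct and follows essentially the same route as the paper's proof: the same indicator variables $X_i$ with the same two ``escape'' events, an Azuma--Hoeffding bound on $\sum_i X_i$ using hypothesis~(i) for the conditional increment probability, and first-moment/Markov arguments via hypotheses~(iii) and~(ii) to show each escape event fires for only $o(\abs{S})$ indices. Your explicit Doob-martingale formulation and the remark that $\vol{L(x_i)}>0$ forces $x_i \in V_i$ are slightly more careful than the paper's write-up, but the argument is the same.
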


\begin{proof}
	Consider the following random variable, for any $1 \le i \le \abs{S}$.
	\[ X_i := \begin{cases}
		          1 & \textrm{if } x_i \notin V_i, \textrm{ or } \vol{F_{i-1}} > \abs{S} \avgdeg \cdot r^d \log n,
		          \textrm{or } \vol{L(x_i)} \geq \Omega  (r^{d}) \\
		          0 & \textrm{otherwise.}
	\end{cases} \]
	We have $\Pr[X_i=1\mid X_1,\ldots,X_{i-1}]\geq \Omega(1)$ by Assumption i).
	Thus by Azuma-Hoeffding inequality,
	$\sum_{i=1}^{\abs S} X_i \geq \Omega(\abs{S})$ with high probability.
	We will show below that the contributions to $\sum_{i=1}^{\abs S} X_i$ from the first two predicates is $o(\abs{S})$.
	Hence by taking union bound, we obtain the desired conclusion.

	First, we show that the number of $x_i$ such that $x_i \notin V_i$ is $o(|S|)$ with high probability.
	Note that $x_i \notin V_i$ implies that there exists some vertex $j<i$ such that $\dist(x_i, x_j) \le d$.
	On the other hand, for any two vertices $x, y \in S$,
	$\Pr[\dist(x, y) \le d] \le O(\avgdeg^2 \cdot r^{d-1} / {\vol{V}})$, by Assumption iii).
	Hence, the expected number of vertex pairs in $S$ whose distance is at most $d$,	is $O(\abs{S}^2 \avgdeg^2 \cdot r^{d-1} / {\vol{V}}) \le o(\abs{S})$, by the assumption on the size of $S$.
	By Markov's inequality, with high probability only $o(\abs{S})$	vertex pairs have distance at most $d$ in $S$.
	Hence there exists at most $o(\abs{S})$ $i$'s such that $x_i \notin V_i$.

	Secondly, for all $1 \le i \le \abs{S}$, $\vol{F_i} \le \abs{S} \avgdeg \cdot r^d \log n$  with high probability.
	This is because the set of vertices $T_i$ is a subset of $\neigh_{d}(x_i)$, the set of vertices within distance $d$ to $x_i$ on $G$.
	Thus, by Assumption ii), we have
	\[ \E[\vol{T_i}]\leq \E[\vol{\neigh_d(x_i)}]\leq O(\avgdeg r^d). \]
	And the expected volume of ${F}_{i}$ is at most
	\[ O(i\cdot \avgdeg \cdot r^{d}) \lesssim \abs{S} \avgdeg \cdot r^d, \]
	Hence by Markov's inequality, the probability that $\vol{N_d(S)} > \abs{S} \avgdeg \cdot r^d \log n$ is at most $\log^{-1} n$.
	This proves the lemma.
\end{proof}

We first introduce the following proposition for growing the neighborhood of vertices.

\begin{proposition}[Iterative neighborhood growth]\label{lem_martingale1}
	Let $c = (3+1/\gamma)\log_r\log n$ and $d = K/2 - c$.
	Let $S$ be a set of $n / r^{K/2}$ vertices whose weight are all within $[\avgdeg, 2\avgdeg]$.
	With high probability, at least $c_1 \abs{S}$ vertices in $S$ satisfy that $\vol{L(x)} \ge \Omega(r^d)$.
\end{proposition}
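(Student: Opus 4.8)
The plan is to derive the proposition directly from the martingale inequality of Lemma~\ref{lem_martingale2}: for the stated choices of $c$, $d$ and $S$ I would verify the size condition $\abs{S}\le o\bigl(\vol{V}/(\avgdeg^2 r^{d-1})\bigr)$ together with hypotheses (i)--(iii) of that lemma, and then invoke it. Throughout I use that, since $\plexp>3$, the weights have finite second moment, so $\vol{V}=\Theta(n)$ and $r=\vols{V}/\vol{V}=\Theta(1)$ with high probability, and $r>1$ in the regime under consideration (this is what makes the neighborhoods grow); I also fix a constant $\gamma\in(0,\plexp-3)$ so that the $(2+\gamma)$-th moment of the power law is finite, which is the $\gamma$ appearing in $c$.

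The size condition and hypotheses (ii)--(iii) are quick. From $c=(3+1/\gamma)\log_r\log n$ we get $r^{c}=(\log n)^{3+1/\gamma}$, hence $r^{d-1}=r^{K/2}/(r\,r^{c})$, and with $\abs{S}=n/r^{K/2}$ and $\vol{V}=\Theta(n)$ we obtain \[ \abs{S}\cdot\avgdeg^2 r^{d-1}/\vol{V}=\Theta\bigl(\avgdeg^2/(r\,(\log n)^{3+1/\gamma})\bigr)=o(1); \] the same computation gives $\abs{S}\cdot\avgdeg r^{d}\log n=\Theta\bigl(\avgdeg n/(\log n)^{2+1/\gamma}\bigr)=o(n)$, so the frozen set $F_{i-1}$ always has volume $o(n)$. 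Hypothesis (ii) is a first-moment bound on the \emph{original} graph $G$: the Chung--Lu edge probabilities give $\E[\vol{\level_{k+1}(x)}\mid\level_k(x)]\le r\,\vol{\level_k(x)}$, so inductively $\E[\vol{\level_k(x)}]\le 2\avgdeg r^{k}$ for $x\in S$ and, summing the geometric series (using $r$ bounded away from $1$), $\E[\vol{\neigh_d(x)}]\lesssim\avgdeg r^{d}$. Hypothesis (iii) follows by counting weighted paths: for two weight-$\Theta(\avgdeg)$ vertices $x,y$, a standard first-moment bound (Proposition~\ref{prop:connect}) gives $\pr{\dist(x,y)=k}\le \avgdeg^2 r^{k-1}/\vol{V}$ up to constants, and summing over $k\le d$ yields $\pr{\dist(x,y)\le d}\lesssim \avgdeg^2 r^{d-1}/\vol{V}$.

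The real work — and the step I expect to be the main obstacle — is hypothesis (i): conditioned on $x_i\in V_i$ and $\vol{F_{i-1}}\le\abs{S}\avgdeg r^{d}\log n=o(n)$, the level-$d$ volume of $x_i$ in the thinned graph $G_i$ is at least $\avgdeg r^{d}$ with probability $\Omega(1)$. Since $G_i$ is $G$ with a vertex set of volume $o(n)$ removed, and the edges of $G$ inside $V_i$ have not been revealed by the earlier explorations, the exploration from $x_i$ is a fresh Chung--Lu exploration in a graph with effective growth rate $r(1-o(1))$, so it suffices to prove the lower bound in a fresh Chung--Lu graph. I would analyse the exploration process $Z_k=\vol{\level_k(x_i)}$ and establish two facts. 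First, with probability bounded below by a constant the exploration \emph{survives} all the way to depth $d$: since $\w{x_i}\ge\avgdeg$ and $r>1$, it stochastically dominates a supercritical branching process, which survives with positive constant probability. Second, conditioned on survival, $Z_d$ concentrates around $\Theta(\avgdeg r^{d})$: the rescaled sequence $Z_k/r^k$ is approximately a martingale, and a second-moment estimate bounds its variance by a term proportional to $\volg{\cdot}$ of the explored vertices divided by a power of $\vol{V}$. The role of stopping at $d=K/2-c$ is exactly that $r^{d}$ is then of order at most $\sqrt n/(\log n)^{3+1/\gamma}$, small enough that this error term — finite because the $(2+\gamma)$-th moment is finite — is $o(1)$; hence $Z_d=\Theta(\avgdeg r^{d})$ conditioned on survival, so $Z_d\ge\avgdeg r^{d}$ still holds with probability $\Omega(1)$. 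Combining the two facts gives hypothesis (i), and Lemma~\ref{lem_martingale2} then delivers the proposition.
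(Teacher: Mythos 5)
Your proposal follows the same overall route as the paper: both reduce the proposition to Lemma~\ref{lem_martingale2} by checking the size condition and hypotheses (i)--(iii), and your verifications of the size condition and of (ii)--(iii) match what the paper gets by citing Proposition~\ref{lem_expand}. The one divergence is hypothesis (i): the paper observes that $G_i$ is itself a Chung--Lu graph with rescaled weights $p_y^{(i)}=p_y(1-\vol{F_{i-1}}/\vol{V})$ and then simply re-invokes the already-proven growth lemma (Proposition~\ref{lem_expand}) on $G_i$, whereas you re-derive that growth estimate from scratch via branching-process survival plus a second-moment/martingale bound. That is essentially a re-proof of Proposition~\ref{lem_expand} Part~2, so nothing is gained, but nothing is wrong in principle either; note only that the paper's phrasing (each level multiplies by at least $r$ minus an error with probability exponentially close to $1$, starting from a first level of constant volume) avoids the delicate step of ``concentration conditioned on survival,'' which for a supercritical process gives a nondegenerate limit $W$ rather than a point mass --- what you actually need, and what both arguments deliver, is just $\Pr[\vol{L(x_i)}\ge\Omega(r^d)]\ge\Omega(1)$.

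One step in your sketch is genuinely incomplete: you pass from ``$\vol{F_{i-1}}=o(n)$'' to ``the thinned graph has effective growth rate $r(1-o(1))$.'' The growth rate of $G_i$ is $\vols{V_i}/\vol{V}$, and a set of small first-moment volume can still carry a constant fraction of the second moment if it contains a few heavy vertices, so this implication is not automatic. The paper closes exactly this hole with H\"older's interpolation $\vols{F_{i-1}}\le \vol{F_{i-1}}^{\gamma/(1+\gamma)}\cdot\volg{F_{i-1}}^{1/(1+\gamma)}\le \vol{F_{i-1}}^{\gamma/(1+\gamma)}\cdot O(n^{1/(1+\gamma)})=o(n)$, using the finiteness of the $(2+\gamma)$-th moment that you already assume. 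You have all the ingredients to add this line; without it the claim about the effective growth rate of $G_i$ is unsupported.
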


\begin{proof}
	It's easy to verify that $\abs{S} \cdot {\avgdeg^2 r^{d-1}} \le n / r^c \le o(n)$.
	It suffices to verify the assumptions required in Lemma \ref{lem_martingale2}.
	Note that Assumption ii) and iii) simply follows from Proposition \ref{lem_expand}.
	Hence it suffices to verify Assumption i).
	Note that the subgraph $G_i$ can be viewed as a random graph sampled from Chung-Lu model over $V_i$.
	By setting
	\[	p_y^{(i)} = p_y \cdot \left(1 - \frac{\vol{F_{i-1}}} {\vol{V}} \right),
	~\forall\, y \in V_i, \]
	we have that $\forall\, y, z \in V_i$
	\[
		\Pr[y\sim z] = \frac{p_y \cdot p_z}{\vol{V}}
		= \frac {p_y^{(i)} \cdot p_z^{(i)}} {\Bracket{1 - \frac{\vol{F_{i-1}}}{\vol{V}}} \cdot \vol{V_i}}
		= \frac{p_y^{(i)} \cdot p_z^{(i)}} {\sum_{x \in V_i} p_x^{(i)}}.
	\]
	Hence we see that $G_i$ is equivalent to a random graph drawn from degree sequence $\seqdeg^i$.
	Denote by $r_i := \frac{\vols{V_i}} {\vol{V}}$ the growth rate on $G_i$.
	When $\vol{F_{i-1}} \leq n / \log^{2+1/\gamma} n$, by H\"{o}lder's inequality,
	\[ \vols{F_{i-1}}\leq \vol{F_{i-1}}^{\frac{\gamma}{1+\gamma}} \cdot O(n^{\frac{1}{1+\gamma}}) \leq o(n/\log n).\]
	by straightforward calculation.
	Hence $r_i$ is a constant strictly greater than $1$.
	By Proposition~\ref{lem_expand}, with constant probability $\vol{L_i} \geq \Omega(r_i^ {d})\geq \Omega(r^{d})$, because
	\[ \Bracket{\frac{\vol{V}}{\vol{V_i}}}^d \lesssim (1 + \log^{-2-\frac 1 {\gamma}} n)^{O(\log n)} \lesssim 1 + \log^{-1-\frac 1 {\gamma}} n. \]
	Since the vertices at distance $d$ from $x_i$ in $G_i$ is exactly $L_i$,	we have verified that Assumption i) is correct.
\end{proof}

Now we are ready to prove Theorem \ref{thm_lb3}.
\begin{proof}[Proof of Theorem \ref{thm_lb3}]
	Set $K = \ceil{\log n / \log (np)}+1$, i.e. slightly larger than the average distance of $G$.
	We know that there are $\Theta(n)$ vertices whose weights are between $[\avgdeg, 2\avgdeg]$, by an averaging argument.
	Divide them into groups of size $n / r^{K/2}$.
	Clearly, there are $\Theta(r^{K/2}) = \Theta(\sqrt n)$ disjoint groups.
	Denote by $c_2$ a small fixed value (e.g. $1/\log\log n$ suffices).
	We will argue that for each group $S$,
	\begin{align}\label{eq_sp3_group}
		\Pr[\text{The total label length of $S$} \le c_2 \cdot r^{-2c} n] \le o(1).
	\end{align}
	Hence by Markov's inequality, except for $o(n / r^{K/2})$ groups, all the other groups will have label size at least $\tilde{\Omega}(n)$.
	For the rest of the proof, we focus on an individual group $S$.
	%  (recall that $c = (3+1/\gamma)\log_r\log n$)

	Given the labels of $S$, we can recover the pairwise distances less than $K$ for all vertex pairs in $S$.
	Let $\dist_S: S\cdot S \rightarrow \mathbb{N}$ denote the distance function restricted to all pairs in $S$.
	Consider the following two cases:
	\begin{enumerate}
		\item[a)]	$\exists c_1^2 \cdot \abs{S}^2/4$ pairs $(x_i,x_j)$ such that $\dist_S(x_i,x_j)\leq 2d+1$.
		By Lemma~\ref{lem_expand}, we know that $\Pr[\dist(x_i, x_j) \le 2d+1] = O(r^{2d} / n)$, for any $x_i, x_j \in S$.
		Hence the expected number of pairs with distance at most $2d+1$ in $S$,
		is at most $O(\abs{S}^2 \cdot r^{2d}/n) \lesssim r^{-2c}$.
		Hence by Markov's inequality, the probability that a random graph induces any such distance function is $o(1)$.

		\item[b)] The number of pairs such that	$\dist_S(x_i,x_j)\leq 2d+1$ is at most
		$c_1^2\cdot \abs{S}^2/4$ in $S$.
		Let \[ A = \set{(x, y) \in S \cdot S \mid \dist(x, y) > 2d+1, \mbox{ and } \vol{L(x)}, \vol{L(y)} \ge \Omega(r^{d})}. \]
		By Lemma \ref{lem_martingale2}, the size of $A$ is at least
		\[ \binom{c_1 \abs{S}}{2} - c_1^2 \abs{S}^2 /4 \ge c_1^2 \abs{S}^2 / 5. \]
		For any $(x, y) \in A$, $L(x)$ and $L(y)$ are clearly disjoint.
		Conditional on $\set{T(x)}$ for all $x \in S$,
		the probability of the existences of edges between $L_i$ and $L_j$
		are unaffected.
		\begin{align*}
			&\Pr\left[\dist_S(x, y) > 2d+1,~\forall (x, y) \in A \mid \{T_i\}_{i=1}^{\abs S} \right]\\
			&\leq \prod_{(x, y) \in A} \Pr\left[L(x) \not\sim L(y) \mid L(x) \cap L(y) = \varnothing,
			\mbox{ and } \vol{L(x)}, \vol{L(y)} \ge \Omega(r^{d})\right] \\
			&\leq \prod_{(x, y) \in A} \exp\left(-\frac{\vol{L(x)} \vol{L(y)}} {\vol{V}}\right) \tag{(by Proposition \ref{prop:connect})} \\
			&\leq \exp\left(-\Omega\left(\frac {r^{2d}}{n}\right)\right)^{c_1^2 \abs{S}^2 /5} \\
			&\leq \exp(-\Omega(r^{-2c} n)).
		\end{align*}
		Note that the number of labeling of size less than $c_2 \cdot r^{-2c} n$ is at most $2^{c_2\cdot r^{-2c} n}$.
		Therefore by union bound, the probability that the total label size of $\abs S$ is at most $c_2\cdot r^{-2c} n$ is at most:
		\[ 2^{c_2\cdot r^{-2c} n} \cdot \exp(-\Omega(r^{-2c} n)) \le o(1). \]
	\end{enumerate}
	By taking a union bound over cases a) and b), we have shown that Equation \eqref{eq_sp3_group} is true.
	Hence the proof is complete.
\end{proof}

\section{Proof of Theorem \ref{thm_lb_2}: Lower Bounds for $2 < \beta < 3$}\label{sec_pf_lb2}
%In this section, we present the proof of Theorem \ref{thm_sp3}, the lower bound for \ER~graphs.
%\bigskip
In this section, we prove the lower bound on distance labelings when $2 < \beta < 3$.

\begin{proof}[Proof of Theorem \ref{thm_lb_2}]
First, one can easily verify that there are $\Theta(n)$ vertices with weight between $[\avgdeg, 2\avgdeg]$ with high probability.
We divide them into  $\Theta(\sqrt n)$ groups of size $\sqrt n$ each.
We will show that for each group, its total labeling length is at least $c n^{3-\plexp}$ with high probability, where $c$ is a fixed constant.
Hence in expectation, except for at most $o(\sqrt n)$ groups, the total labeling length for all other groups is at least $\Omega(n^{3-\plexp})$.
Then by Markov's inequality, among the $\Theta(\sqrt n)$ groups, only a small fraction of $o(\sqrt n)$ groups will  have total labeling lengths at most $c n^{3-\plexp}$ with high probability.
And for the rest of the $\Theta(\sqrt n)$ groups, they have total labeling length at least $c n^{3-\plexp}$.
%Thus we proved the theorem.
Hence, we conclude that the total label size is at least $\Omega(n^{3.5-\plexp})$.
For the rest of the proof, we focus on a single group.

%By Markov's inequality, with probability $1-o(1)$, at least $\Omega(\sqrt n)$ groups have label size at least $\Omega(n^{3-\plexp})$.

Let $\slow$ be a fixed set of $\sqrt n$ vertices, where each vertex has weight between $[\avgdeg, 2\avgdeg]$.
We show that the total labeling length of $\slow$ must be $\Omega(n^{3-\plexp})$ with high probability.
Consider the following procedure of generating a random graph.
It is not hard to verify that it is equivalent to the random power law graph model.
\begin{enumerate}
	\item
		independently for every pair of vertices $x, y$ that are not both in $\shigh$, add an edge between them with probability $\min\{p_xp_y/\vol{V}, 1\}$;
	\item
		independently for every pair of vertices $x, y$ in $\shigh$, add an edge between them with probability $\min\{p_xp_y/\vol{V}, 1\}$.
\end{enumerate}

Let $G_1$ be the (random) graph generated after step 1.
We claim that for $G_1$, with high probability there are $\Omega(n^{3-\plexp})$ pairs $P=\{(x_i, \hat x_i)\}$ of vertices in $\shigh$, and pairs $\{(y_i, \hat y_i)\}$ of vertices in $\slow$ such that:
$x_i$ is $y_i$'s only neighbor in $\shigh$ (same for $\hat x_i$ and $\hat y_i$),
and $d_{G_1}(y_i, \hat y_i) \ge 4$.
We observe the following facts.

\begin{enumerate}[I.]
	\item
		{\bf Many vertices $x_i\in\shigh$ connect to at least one vertex in $\slow$. }

		For each $x_i$, the probability that it has at least one neighbor in $\slow$ is at least
		\[
		\begin{aligned}
			1-\prod_{y\in\slow} (1-p_{x_i}p_y/\vol{V})&\geq 1-\prod_{y\in\slow}(1-\Omega(n^{-0.5})) \\
			&=1-(1-\Omega(n^{-0.5}))^{|\slow|} \\
			&\geq 1-e^{-\Omega(1)}\geq \Omega(1).
		\end{aligned}
		\]
		It is not hard to verify that this event is independent for different $x_i$.
		By Chernoff bound, except with $\exp(-n^{\frac{1}{2}(3-\plexp)})$ probability, $\Omega(n^{\frac{1}{2}(3-\plexp)})$ vertices in $\shigh$ have some neighbor in $\slow$.

		For each $x_i\in\shigh$ that has some neighbor in $\slow$, let $y_i$ be one such neighbor of $x_i$. Let this subset $\{x_i\}$ of $\shigh$ be $\shigh'$, and their neighbors $\{y_i\}$ in $\slow$ be $\slow'$.
	\item
		{\bf For most of $y_i\in \slow'$, $x_i$ is its only neighbor in $G_1$.}

		We first show that few vertices in $\slow$ are connected to more than one vertex in $\shigh$. For each $y\in\slow$, the probability it connects to more than one vertex in $\shigh$ is at most
		\[
			\sum_{x_i,x_j\in \shigh} \frac{p_{x_i}p_y}{\vol{V}}\cdot \frac{p_{x_j}p_y}{\vol{V}}\leq \sum_{x_i,x_j\in \shigh}O(1/n)\leq O(n^{2-\plexp}).
		\]
		This event is again independent for different $y$. By Chernoff bound, except with $\exp(-n^{2.5-\plexp})$ probability, only $O(n^{2.5-\plexp})$ vertices in $\slow$ are connected to more than one vertex in $\shigh$. That is, only a negligible fraction of $\slow'$ may have more than one neighbor in $\shigh$.
		We remove them from $\slow'$ --- let $\slow''$ denote this set and
		let $\shigh''$ denote their corresponding vertices in $\shigh'$.
		There are $\Omega(n^{\frac{1}{2}(3-\plexp)})-O(n^{2.5-\plexp})=\Omega(n^{\frac{1}{2}(3-\plexp)})$ vertices in $\slow''$
		and there is only one neighbor in $\shigh$ for every $v \in \slow''$.

		On the other hand, for each $x\in\slow$, independent of its edges to $\shigh$, the probability that $x$ connects to no other vertex is at least
		\[
		\begin{aligned}
			\prod_y (1-p_yp_x/\vol{V})&\geq \prod_y e^{-O(\avgdeg\cdot p_y/\vol{V})} && \textrm{(by $p_y p_x/\vol{V}=o(1)$)} \\
			&=e^{-O(\avgdeg\cdot \sum_y p_y /\vol{V})} \\
			&\geq e^{-O(1)}=\Omega(1).
		\end{aligned}
		\]
By Chernoff bound, except with $\exp(-n^{\frac{1}{2}(3-\plexp)})$ probability,
$\Omega(n^{\frac{1}{2}(3-\plexp)})$ vertices in $\slow''$ have degree exactly one:
they only connect to one vertex, which is in $\shigh$.
Moreover, by construction, they connect to different vertices in $\shigh$. Let this subset of $\slow''$ be $\slow'''$, and their neighbors be $\shigh'''$.
\end{enumerate}

We set $P$ to be all pairs in $\shigh'''$.
Then $|P|=\Omega(n^{3-\plexp})$.
Based on Fact I and II, for each pair $(x_i, \hat x_i)$ in $P$, we can find a pair $(y_i, \hat y_i) \subseteq \slow''' \times \slow'''$, such that $x_i$ is $y_i$'s only neighbor in $\shigh'''$ (same for $\hat x_i$ and $\hat y_i$).
By construction, the distance between $y_i$ and $\hat y_i$ is at least 4 in $G_1$.
Let $Q$

Now we are ready to finish the proof.
Observe that the actual distance between $y_i$ and $\hat y_i$ tells us whether there is an edge between $x_i$ and $\hat x_i$.
Specifically, $y_i$ and $\hat y_i$ have distance three if and only if $x_i, \hat x_i$ are connected by an edge.
Thus, knowing the pairwise distances of $\slow$ would reveal whether there is an edge between every vertex pair in $P$.
By the definition of distance labeling scheme,
for all $2^{|P|}$ different edge configurations of the vertex pairs in $P$,
we must use different labelings for vertices in $\slow$.
By our definition of $P$, for every pair of vertices,
there is an edge between them with probability within $[1/9, 4/9]$,
and the event is independent of whether there is an edge between any other pairs of vertices.
Therefore, each edge configuration of $P$ appears with probability at most
\[
	(1-1/9)^{|P|}=2^{-c\cdot n^{3-\plexp}},
\] for some constant $c$. The probability that the total label size of $\slow$ is no more than $0.5c\cdot n^{3-\plexp}$ is at most
\[
	2^{0.5c\cdot n^{3-\plexp}}\cdot 2^{-c\cdot n^{3-\plexp}}=2^{-0.5c\cdot n^{3-\plexp}}.
\]

Except with $\exp(-n^{3-\plexp})$ probability, the total labeling length of $\slow$ is at least $\Omega(n^{3-\plexp})$.
	Hence the proof is complete.
\end{proof}

\section{Random Graph Toolbox}\label{sec_tools}

This section summarizes the tools from random graph theory that we use during the proofs.
Most of the technical components are either implicitly or explicitly stated in the literature \cite{CL06}.
We include this section for the completeness of the paper.
Readers who are familiar with random graphs can skip this section.

We first present a proposition, which helps us bound the probability that two sets of vertices are connected by an edge.
\begin{proposition}\label{prop:connect}
	Let $G = (V, E)$ be a random graph with weight sequence $\mathbf{p}$.
	For any two disjoint set of vertices $S$ and $T$,
	\begin{align*}
		1 - \exp\left(- \frac{\vol S \vol T} {\vol{V}}\right) \le \pr{S \sim T} \le \frac {\vol S \vol T} {\vol{V}}.
	\end{align*}
	In particular, when $\vol{S}\vol{T} \le o(\vol{V})$, we have that $\pr{S \sim T} = \Theta\Bracket{\frac{\vol S \vol T} {\vol V}}$.
\end{proposition}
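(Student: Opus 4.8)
The plan is to work directly with the edge-independence built into the Chung--Lu model. Since $S$ and $T$ are disjoint, the indicator events $\{x \adj y\}$ ranging over pairs $(x,y) \in S \times T$ are mutually independent, each occurring with probability $\pr{x \adj y} = \min(p_x p_y / \vol{V}, 1)$ by definition of the model. Hence the starting identity is
\[
\pr{S \nadj T} = \prod_{x \in S}\prod_{y \in T}\left(1 - \min\left(\frac{p_x p_y}{\vol{V}},\, 1\right)\right),
\]
and the whole proof is an estimation of this product from above and below.

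For the upper bound on $\pr{S \adj T}$, I would just use a union bound over the $|S|\cdot|T|$ candidate edges: $\pr{S \adj T} \le \sum_{x \in S}\sum_{y \in T}\pr{x \adj y} \le \sum_{x \in S}\sum_{y \in T} \frac{p_x p_y}{\vol{V}}$, where the last step drops the truncation using $\min(a,1)\le a$. Factoring the double sum as $\big(\sum_{x \in S}p_x\big)\big(\sum_{y \in T}p_y\big)/\vol{V} = \vol{S}\vol{T}/\vol{V}$ finishes this direction.

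For the lower bound I would split into two cases. If some pair $(x,y) \in S \times T$ has $p_x p_y \ge \vol{V}$, that edge is present with probability one, so $\pr{S \adj T} = 1$ and the bound $1 - \exp(-\vol{S}\vol{T}/\vol{V})$ holds trivially. Otherwise every factor in the product above equals $1 - p_x p_y/\vol{V}$ with $p_x p_y/\vol{V} \in [0,1)$, and applying the elementary inequality $1-t \le e^{-t}$ termwise gives
\[
\pr{S \nadj T} \le \prod_{x \in S}\prod_{y \in T} \exp\!\left(-\frac{p_x p_y}{\vol{V}}\right) = \exp\!\left(-\frac{\vol{S}\vol{T}}{\vol{V}}\right),
\]
which rearranges to $\pr{S \adj T} \ge 1 - \exp(-\vol{S}\vol{T}/\vol{V})$. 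Finally, for the $\Theta(\cdot)$ claim, set $t = \vol{S}\vol{T}/\vol{V}$, which is $o(1)$ under the stated hypothesis; the upper bound gives $\pr{S \adj T} \le t$, and the lower bound together with $1 - e^{-t} \ge t - t^2/2 = (1-o(1))t$ gives $\pr{S \adj T} \ge (1-o(1))t$, so $\pr{S \adj T} = \Theta(t)$.

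There is essentially no serious obstacle here; the only subtlety is the truncation $\min(\cdot, 1)$ in the Chung--Lu edge probability, and that is precisely what the case split in the lower bound handles. Everything else is a union bound and the inequality $1 - t \le e^{-t}$.
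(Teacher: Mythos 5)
Your proof is correct and follows essentially the same route as the paper's: expand $\pr{S \nadj T}$ as a product over independent pairs, use a union bound (equivalently $\prod(1-a_i)\ge 1-\sum a_i$) for the upper bound, and $1-t\le e^{-t}$ for the lower bound. If anything, your explicit case split on whether some $p_xp_y\ge\vol{V}$ handles the truncation $\min(\cdot,1)$ more carefully than the paper's own two-line calculation, whose final inequality silently assumes no pair is truncated.
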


\begin{proof}
	The proof is derived from the following calculations.
	\begin{align*}
		\pr{S \sim T} &= 1 - \prod_{x \in S} \prod_{y \in T} \Bracket{1 - \min(\frac{\w x \w y} {\vol{V}}, 1)}
		\le 1 - \Bracket{1 - \sum_{x \in S}\sum_{y \in T} \min(\frac{\w x \w y} {\vol{V}}, 1)} \\
		&\le 1 - \Bracket{1 - \sum_{x \in S}\sum_{y \in T} \frac{\w x \w y} {\vol{V}}}
		= \frac{\wb S \wb T} {\vol{V}}. \\
		\pr{S \nsim T} &= \prod_{x \in S} \prod_{y \in T} \Bracket{1 - \min(\frac{\w x \w y} {\vol{V}}, 1)}
		\le \exp\left(- \sum_{x \in S} \sum_{y \in T} \min(\frac{\w x \w y} {\vol{V}}, 1) \right) \\
		&\le \exp\left(- \frac {\wb S \wb T} {\vol{V}}\right).
	\end{align*}
\end{proof}

Next, we state a proposition which characterizes the probability that a vertex's actual degree deviates from its weight.

\begin{proposition}\label{prop:degree}
	Let $G = (V, E)$ be a random power law graph.
	Let $x$ be a fixed vertex with weight $\w x$ and degree $d_x$ in $G$.
	Then
	\begin{enumerate}
		\item If $c \ge 3$, then
			\[ \pr{d_x \ge c \w x} \le \exp(-\frac{(c-1)\w x} 2) \]
		\item If $0 < c < 1$, then
			\[ \pr{d_x \le c \w x} \le \exp(-\frac{(1-c)^2 \w x} 8) \]
	\end{enumerate}
\end{proposition}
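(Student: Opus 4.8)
The plan is a standard Chernoff-bound argument; the only point requiring any care is the estimate of the mean degree $\mu := \E[d_x]$. Write $d_x = \sum_{y \in V \setminus \set{x}} \mathbbm{1}[x \sim y]$ as a sum of independent indicators, where $\pr{x \sim y} = q_y := \min\bigl(\w x \w y / \vol{V},\, 1\bigr)$. Since $q_y \le \w x \w y / \vol{V}$, one immediately gets the crude bound $\mu = \sum_{y \ne x} q_y \le \w x\cdot \vol{V}/\vol{V} = \w x$, and this is the only fact about $\mu$ needed for part~1. For part~2 I would additionally show $\mu = (1 - o(1))\w x$: the only gap in $\mu \le \w x$ comes from the terms with $\w x \w y > \vol{V}$, and using the weight-sequence regularity recorded in Proposition~\ref{prop:deg2} (in particular $\vol{V} = \avgdeg n \pm o(n)$ together with the control on the volume of high-weight vertices) this gap is $o(\w x)$ in the range of weights where the proposition is invoked.

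For part~1, bound the moment generating function: for any $t > 0$, using $1 + s \le e^s$,
\[
	\E\bigl[e^{t d_x}\bigr] = \prod_{y \ne x}\bigl(1 + q_y(e^t - 1)\bigr) \le \exp\bigl((e^t - 1)\mu\bigr) \le \exp\bigl((e^t - 1)\w x\bigr).
\]
Applying Markov's inequality to $e^{t d_x}$ with $e^t = c$ (the optimal choice given the relaxation $\mu \le \w x$, and legitimate since $c > 1$),
\[
	\pr{d_x \ge c\,\w x} \le c^{-c\,\w x}\exp\bigl((c-1)\w x\bigr) = \exp\bigl(-\w x\,(c\ln c - c + 1)\bigr).
\]
It then suffices to check the elementary inequality $c\ln c - c + 1 \ge (c-1)/2$ for all $c \ge 3$, which holds because the difference is positive at $c = 3$ and has derivative $\ln c - 1/2 > 0$ on $[3,\infty)$. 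This gives $\pr{d_x \ge c\,\w x} \le \exp(-(c-1)\w x / 2)$.

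For part~2, the symmetric computation with $t > 0$ yields $\E[e^{-t d_x}] \le \exp\bigl(-(1 - e^{-t})\mu\bigr) \le \exp\bigl(-(1 - e^{-t})(1 - o(1))\w x\bigr)$, and optimizing the resulting tail bound over $t$ (legitimate since $c < 1$) gives $\pr{d_x \le c\,\w x} \le \exp\bigl(-(1 - o(1))\,\w x\,(c\ln c - c + 1)\bigr)$. One finishes with the elementary bound $c\ln c - c + 1 \ge (1-c)^2/2$ on $(0,1)$ — the difference and its derivative both vanish at $c = 1$ and the second derivative $1/c - 1$ is positive on $(0,1)$ — so the $(1-o(1))$ factor is comfortably absorbed into the constant $1/8$. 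The only step that is not purely mechanical is thus the mean estimate $\mu = (1-o(1))\w x$ for part~2, i.e.\ checking that the truncation from the $\min(\cdot,1)$ in the Chung--Lu edge probabilities is negligible; everything else is a textbook Chernoff calculation.
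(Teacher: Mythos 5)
Your proposal is correct and follows essentially the same route as the paper's proof: bound $\mu = \E[d_x] \le \w x$ via $\min(\w x \w y/\vol{V},1) \le \w x \w y/\vol{V}$ for the upper tail, upgrade this to $\mu = (1-o(1))\w x$ using the weight-sequence regularity (Proposition~\ref{prop:deg2}) for the lower tail, and finish with a Chernoff bound. The only difference is cosmetic: you derive the tail bounds from the moment generating function and verify the elementary inequalities $c\ln c - c + 1 \ge (c-1)/2$ on $[3,\infty)$ and $c\ln c - c + 1 \ge (1-c)^2/2$ on $(0,1)$ to land on the stated constants, whereas the paper invokes packaged forms of the Chernoff bound directly.
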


\begin{proof}
	Let $\mu = \Exp{d_x}$. First,
	\begin{align*}
		\mu &= \sum_{y \in V \setminus \set{x}} \min(\frac {\w x \w y} {\vol{V}}, 1) \\
			&\le \sum_{y \in V} \frac {\w x \w y} {\vol{V}} = \w x
	\end{align*}
	By Chernoff bound, for any $c \ge 3$,
	\begin{align*}
		\pr{d_x \ge c \w x} &\le \exp(-\frac {c \w x - \mu} 2) \\
			&\le \exp(-\frac {(c-1) \w x} 2)
	\end{align*}
	since $c \w x - \mu \ge 2\mu$.\\
	On the other hand,
	let $t = \frac {\avgdeg} {2\maxdegWt} n^{1 - \maxdeg}$,
	then for any $y \in V$ where $\w y \le t$,
	we know that $\w x \w y \le {\vol{V}}$
	by Proposition \ref{prop:deg2}.
	Hence
	\[ \mu \ge \w x(1 - \frac{\w x} {\vol{V}} -
			\frac {\sum_{y \in V} \w y \istrue{\w y \ge t}} {\vol{V}}) \]
	By Proposition \ref{prop:deg2},
		\[ \sum_{y \in V} \w y \istrue{\w y \ge t} \sim o(n) \]
	Since $\w x \sim o(n)$ and ${\vol{V}} = \avgdeg n + o(n)$,
	we conclude that $\mu = \w x(1 - o(1))$.
	By Chernoff bound, for any $0 < c < 1$,
		\begin{align*}
			\pr{d_x \le c \w x} &\le \exp(- \frac{(c \w x - \mu)^2} {4\mu})
			\le \exp(- \frac{\w x (1-c)^2} 8)
		\end{align*}
	for large enough $n$.
\end{proof}

The next proposition helps us characterize the number of vertices whose degree is at least $K$, for a certain value $K$.

\begin{proposition}\label{lem:heavy}
	Let $G = (V, E)$	be a random power law graph.
	Let $8\log n \le K \le \sqrt n$ denote a fixed value
	and $S = \set{x \in V : d_x \ge K}$.
	With probability at least $1 - n^{-1}$,
	$\size S \le 3 \max(\frac {Z 3^{\plexp-1}} {\plexp - 1} n K^{1 - \plexp}, \log n)$.
\end{proposition}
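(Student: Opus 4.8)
The plan is to classify the vertices by weight relative to the threshold $K/3$ and treat the two classes separately. First I would show that with high probability no \emph{light} vertex --- one with $\w x \le K/3$ --- lies in $S$. For such a vertex, Proposition \ref{prop:degree}(1) applies with $c = K/\w x \ge 3$ and gives $\pr{d_x \ge K} \le \exp(-(K-\w x)/2) \le \exp(-K/3) \le n^{-8/3}$, where the last inequality uses $K \ge 8\log n$. A union bound over the at most $n$ light vertices then shows that, with probability at least $1 - n^{-5/3}$, every vertex of $S$ has weight strictly larger than $K/3$; on this event $\size S \le N := \size{\set{x \in V : \w x > K/3}}$.

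Next I would bound $N$. Since the weights are i.i.d.\ samples from the density $f(t) = Z t^{-\plexp}$ on $[\xmin,\infty)$ and $K/3 \ge \xmin$ once $n$ is large enough (recall $\xmin$ is a fixed constant), we have $p := \pr{\w x > K/3} = \tfrac{Z}{\plexp-1}(K/3)^{1-\plexp} = \tfrac{Z 3^{\plexp-1}}{\plexp-1}K^{1-\plexp}$, so $N$ is a binomial random variable with mean $\lambda := np = \tfrac{Z 3^{\plexp-1}}{\plexp-1} n K^{1-\plexp}$, which is exactly the first term inside the maximum. It remains to prove $\pr{N \ge 3\max(\lambda,\log n)} \le n^{-1}$, which I would split into two regimes. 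If $\lambda \ge \log n$, the multiplicative Chernoff bound gives $\pr{N \ge 3\lambda} \le (e^2/27)^{\lambda} \le e^{-\lambda} \le n^{-1}$. If $\lambda < \log n$, then $3\max(\lambda,\log n) = 3\log n$; writing this as $\lambda + t$ with $t = 3\log n - \lambda \ge 2\log n$ and noting $\lambda + t/3 \le \tfrac 5 3 \log n$, the Bernstein-type tail bound $\pr{N \ge \lambda + t} \le \exp(-t^2/(2(\lambda + t/3)))$ has exponent at least $\tfrac{(2\log n)^2}{2\cdot\tfrac 5 3 \log n} = \tfrac 6 5 \log n$, so $\pr{N \ge 3\log n} \le n^{-6/5}$. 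Combining with the light-vertex event by a union bound yields the claimed $1-n^{-1}$ guarantee for all large $n$.

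The whole argument is routine random-graph bookkeeping; the only step meriting any care is the small-mean regime $\lambda < \log n$, where the naive multiplicative bound $\pr{N \ge 3\log n} \le (e/3)^{3\log n}$ only saves a constant-factor-less-than-one in the exponent and does not by itself reach $n^{-1}$, so one must use the additive/Bernstein form (equivalently, exploit that the additive gap $3\log n - \lambda$ is at least $2\log n$). The remaining checks are that $c = K/\w x \ge 3$ so that Proposition \ref{prop:degree} is applicable, that $K/3 \ge \xmin$ for all sufficiently large $n$, that weights stay $o(n)$ in the range $K \le \sqrt n$, and that the two small failure probabilities combine to at most $n^{-1}$.
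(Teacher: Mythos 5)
Your proposal is correct and follows essentially the same route as the paper: split $S$ into vertices of weight above $K/3$ (bounded by a Chernoff bound on the binomial count, whose mean is exactly the first term in the maximum) and light vertices of weight at most $K/3$ whose degree nevertheless reaches $K$ (shown to be absent with high probability via Proposition \ref{prop:degree} and a union bound). You in fact supply the details the paper omits with ``(details omitted)'', in particular the additive/Bernstein treatment of the small-mean regime $\lambda < \log n$, which is the one step where the naive multiplicative bound is insufficient.
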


\begin{proof}
	Let $Y_1 = \set {x \in V : \w x \ge \frac K 3}$ and
	$Y_2 = \set {x \in V : \w x < \frac K 3 \text{ and } K \le d_x}$.
	Clearly, $S \subset Y_1 \cup Y_2$.
	We first show that $Y_2$ is empty with probability at least $1 - n^{-1}$.
	Consider a fixed node $x \in V$ with weight $\w x \le K/3$.
	By Proposition \ref{prop:degree},
		\[ \pr{d_x \ge K} \le \exp(-K) \sim o(n^{-2}) \]
	Hence $\pr{Y_2 \neq \emptyset} = o(n^{-1})$ by union bound.

	We then bound the size of $Y_1$.
	The expected value of $Y_1$ is $\frac {Z 3^{\plexp-1}} {\plexp - 1} n K^{1 - \plexp}$.
	Then by Chernoff bound, it's not hard to obtain the desired conclusion (details omitted).
\end{proof}

Lastly, we state the following basic properties of the degree sequence of a random power law graph.

\begin{proposition}\label{prop:deg2}
	Let $f$ denote the probability density	function of a power law
	distribution with mean value $\avgdeg > 1$ and exponent $2 < \plexp \le 3$.
	Let $\seqdeg$ denote $n$ independent samples from $f(\cdot)$.
	Let $\log n \le d \le 2\sqrt n$ be any fixed value
	and let $\maxdegWt$ be a function that
	goes to $0$ when $n$ goes to infinity.
	Then almost surely the following holds:
	\begin{enumerate}[i)]
		\item The maximum weight $\max \seqdeg \ge \maxdegWt n^{\maxdeg}$.
		\item The sum of weights beyond $d$ is
		$\sum_{x \in V} \w x \istrue{\w x \ge d} \sim o(n)$.
		\item The volume of $V$ is ${\vol{V}} = \avgdeg n \pm o(n)$.
		\item Let $\log n < K \le 2\sqrt n$ be a fixed value. Set
		\begin{align*}
			c(K) = \begin{cases}
				       \frac {3 Z \xmin^{5 - 2\plexp}} {2\plexp - 5} &
				       \text{if } 2.5 \le \plexp \le 3 \\
				       \frac {3 Z} {5 - 2\plexp} K^{5 - 2\plexp} &
				       \text{if } 2 < \plexp < 2.5
			\end{cases}
		\end{align*}
		Then
		\[ \sum_{x \in V} {\weight x}^{4 - \plexp} \istrue{\weight x \le K}
		\le c(K) n. \]
		\item Let $c > 1$ denote a fixed constant value. For any vertex $x \in V$,
		\begin{align*}
			\sum_{y \in V} \w y \istrue{\frac {\w y} c \le \weight x \le 2\sqrt n}
			\le
			6 \max(\frac {c^{\plexp - 2} Z} {\plexp - 2} n {\w u}^{2 - \plexp},
			\sqrt n \log n).
		\end{align*}
	\end{enumerate}
\end{proposition}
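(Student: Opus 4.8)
The plan is to prove all five items by the same two-step recipe, applied directly to the i.i.d.\ weight sequence $\seqdeg$ (no graph randomness enters): first compute or upper-bound the expectation of the relevant sum (or maximum) by integrating the appropriate function against the power-law density $f(x)=Zx^{-\plexp}$ on $[\xmin,\infty)$ --- using $Z=(\plexp-1)\xmin^{\plexp-1}$ and $\pr{X\ge t}=\xmin^{\plexp-1}t^{1-\plexp}$ --- and then prove concentration around that expectation. The only genuine subtlety is the heavy tail: for $2<\plexp\le3$ the summands have polynomially large (indeed, for $\plexp<3$, divergent-in-$n$) second moments, so a Bernstein/Chernoff bound can only be applied after truncation. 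I would fix a cutoff $M=n^{\maxdeg}\log^2 n$ and note that, by a union bound over the $n$ samples using $\pr{X\ge M}=\xmin^{\plexp-1}M^{1-\plexp}$, with probability $1-o(1)$ no sample exceeds $M$; on that event every summand in items (ii)--(iii) is at most $M$, while in items (iv)--(v) the indicator already caps each summand at $K^{4-\plexp}\le(2\sqrt n)^{4-\plexp}$ and at $2\sqrt n$ respectively, so no further truncation is needed there.

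Item (i) is about a maximum, so I would argue directly: $\pr{\max\seqdeg<\maxdegWt\,n^{\maxdeg}}=(1-\xmin^{\plexp-1}\maxdegWt^{1-\plexp}/n)^{n}\le\exp(-\xmin^{\plexp-1}\maxdegWt^{1-\plexp})\to0$, since $\maxdegWt\to0$ and $1-\plexp<0$ force $\maxdegWt^{1-\plexp}\to\infty$. For item (ii), $\Exp{\sum_x\w x\,\istrue{\w x\ge d}}=nZd^{2-\plexp}/(\plexp-2)=o(n)$ because $\plexp>2$ and $d\ge\log n\to\infty$; truncating at $M$ leaves summands in $[0,M]$ with total variance $\lesssim nM^{3-\plexp}$, and $M^{3-\plexp}=n^{(3-\plexp)/(\plexp-1)}\polylog{n}=o(n)$ precisely because $\plexp>2$ forces $(3-\plexp)/(\plexp-1)<1$, so Bernstein's inequality yields the claimed $o(n)$ bound. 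Item (iii) is the same calculation: $\Exp{\Vol(V)}=\avgdeg n$ exactly, the contribution of samples above $M$ is $0$ with probability $1-o(1)$ (and has expectation $o(n)$), and the truncated sum concentrates to within $o(n)$ of $\avgdeg n$ by the same Bernstein estimate, giving $\Vol(V)=\avgdeg n\pm o(n)$.

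For item (iv) I would split at $\plexp=2.5$: $\Exp{\sum_x\w x^{4-\plexp}\istrue{\w x\le K}}=nZ\int_{\xmin}^{K}x^{4-2\plexp}\,dx$, which is at most $\frac{Z\xmin^{5-2\plexp}}{2\plexp-5}n$ when $\plexp>2.5$ (the integral converging at $\infty$), at most $\frac{Z}{5-2\plexp}K^{5-2\plexp}n$ when $\plexp<2.5$, and $nZ\log(K/\xmin)$ when $\plexp=2.5$ --- in every case at most $\frac{1}{3} c(K)n$, leaving a factor-$3$ of Chernoff slack matching the ``$3$'' in the definition of $c(K)$. Since each summand is at most $K^{4-\plexp}\le(2\sqrt n)^{4-\plexp}$, a multiplicative Chernoff bound gives failure probability $\le\exp(-\Omega(\mu/K^{4-\plexp}))$ where $\mu$ is the true mean, and one checks $\mu/K^{4-\plexp}=\Omega(n^{(\plexp-2)/2})$ when $\plexp\ge2.5$ and $\mu/K^{4-\plexp}=\Omega(nK^{1-\plexp})=\Omega(n^{(3-\plexp)/2})$ when $\plexp<2.5$ (using $\log n<K\le2\sqrt n$), both tending to infinity. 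Item (v) is analogous: for a fixed $x$, the relevant tail sum ranges over weights in (roughly) $[\w x/c,\,2\sqrt n]$ and has expectation at most $\frac{c^{\plexp-2}Z}{\plexp-2}n\,\w x^{2-\plexp}$, since for $\plexp>2$ the integral $\int_{\w x/c}^{\infty}y^{1-\plexp}\,dy$ is dominated by its lower endpoint (and when $\w x<c\xmin$ one bounds it instead by $\frac{Z\xmin^{2-\plexp}}{\plexp-2}$, still at most the stated expression); each summand is at most $2\sqrt n$, so Bernstein's inequality with additive slack $5\max(\frac{c^{\plexp-2}Z}{\plexp-2}n\,\w x^{2-\plexp},\,\sqrt n\log n)$ gives failure probability $\le n^{-2}$, and a union bound over the $O(\log n)$ geometric buckets that $\w x$ can fall into makes the bound hold for all $x\in V$ simultaneously (the constant $6$ in the statement absorbing the loss from bucketing).

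The step I expect to be the main obstacle is calibrating the truncation level $M$ and the target error in the Bernstein/Chernoff estimates of items (ii)--(v), so that the term linear in the maximum summand stays dominated by the variance term: this is exactly where the hypotheses $2<\plexp\le3$ and $K\le2\sqrt n$ are used, and the three-way case split at $\plexp=2.5$ (mirrored by the definition of $c(K)$) must be carried through consistently across items (iv) and (v). The rest --- evaluating the power-law integrals and the union-bound bookkeeping for the ``for all $x$'' quantifier in item (v) --- is routine.
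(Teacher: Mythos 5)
Your proposal is correct and follows essentially the same route as the paper, which states only that the proof is ``via standard concentration inequality (details omitted)'': compute each expectation by integrating against the density $f(x)=Zx^{-\plexp}$, then apply a Chernoff/Bernstein bound to the (truncated) sum of i.i.d.\ weights, with the case split at $\plexp=2.5$ for item (iv) and the $\max(\cdot,\sqrt n\log n)$ threshold plus a union bound over vertices for item (v). Your extra care with the truncation level $M$ and the bucketing for item (v) are minor variations on the same argument and are sound.
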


The proof is via standard concentration inequality (details omitted).

\subsection{Growth Lemma for $\beta > 3$}
\label{sec:growth3}

In this subsection, we analyze and bound the growth rate of the neighborhood sizes for random graphs.
We consider the case of $\beta > 3$, when the variance of the degree distribution is bounded.

\begin{proposition}[Growth rates for $\plexp > 3$]\label{lem_expand}
Let $G = (V, E)$ be a random graph with weight sequence $\seqdeg$ satisfying the following properties:
\begin{itemize}
	\item $\vol{V}=(1+o(1))\avgdeg\cdot n$ for some constant $\avgdeg$;
	\item $\vols{V}=(1+o(1))\secdeg\cdot n$ for some constant $\secdeg$;
	\item $\volg{V}=\tau\cdot n$ for some positive constant $\gamma<1/2$ and $\tau$, where $\volg{S}:=\sum_{x\in S} p_x^{2+\gamma}$;
	\item	The growth rate $r=\frac{\vols{V}}{\vol{V}}$ is bounded away from $1$ ($\avgdeg>\secdeg$).
\end{itemize}
Then for any vertex $x$ with a constant weight, the set of vertices $\level_k(x)$ at distance exactly $k$ from $x$ satisfy that
\begin{enumerate}
	\item
	$\E[\vol{\level_k(x)}]=O\left(r^{k}\right)$ for every $k\leq \log_r n$;
	\item
	$\Pr[\vol{\level_k(x)}\geq \Omega\left(r^{k}\right)]\geq \Omega(1)$ for every positive integer $k \leq \frac{1}{2}\log_r n$.
\end{enumerate}
As a corollary, we have that $\Pr[\dist(x, y) \le k+1] \le O(r^{k} / n)$ for every $k \le \frac 1 2 \log_r n$,
where $y$ is any vertex with constant weight.
\end{proposition}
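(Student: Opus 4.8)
The plan is to obtain part~1 from a one-line expectation recursion, and part~2 by combining a matching lower bound on $\E[\vol{\level_k(x)}]$ with a branching-process survival argument at small depths and a second-moment argument at large depths; the hypothesis $\volg{V}=\tau n$ supplies exactly the fractional moment needed for the latter (when $3<\beta<4$ the third moment $\sum_{y}p_y^3$ need not be $O(n)$, so the naive variance bound is too weak).

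\smallskip
\noindent\textbf{Part~1 and the corollary.} Let $\mathcal F_k$ be the $\sigma$-algebra generated by the breadth-first exploration from $x$ up to depth $k$. Since $\level_{k+1}(x)=\{y\notin\neigh_k(x):y\adj\level_k(x)\}$, applying Proposition~\ref{prop:connect} to $\{y\}$ and $\level_k(x)$ gives $\pr{y\adj\level_k(x)\mid\mathcal F_k}\le p_y\vol{\level_k(x)}/\vol{V}$, so
\[
  \E\!\left[\vol{\level_{k+1}(x)}\mid\mathcal F_k\right]\le \frac{\vol{\level_k(x)}}{\vol{V}}\sum_{y\in V}p_y^{2}=r\,\vol{\level_k(x)}.
\]
Iterating from $\vol{\level_0(x)}=p_x=O(1)$ yields $\E[\vol{\level_k(x)}]\le p_x r^k=O(r^k)$, which is part~1. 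For the corollary (with $x\neq y$), conditioning on $\mathcal F_k$ and using the same edge bound gives $\pr{\dist(x,y)\le k+1}\le\pr{\dist(x,y)\le k}+p_y\E[\vol{\level_k(x)}]/\vol{V}=\pr{\dist(x,y)\le k}+O(r^k/n)$; since the recursion is seeded by $\pr{\dist(x,y)\le 0}=0$ and $r$ is bounded away from $1$, summing gives $\pr{\dist(x,y)\le k+1}=\sum_{j\le k}O(r^j/n)=O(r^k/n)$.

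\smallskip
\noindent\textbf{Part~2.} First I would pin the mean down from below. On the event $\mathcal G$ that $\vol{\level_j(x)}\le r^j\log^2 n$ for all $j\le k$ --- which has probability $1-o(1)$ by part~1 and a union bound over the $O(\log n)$ relevant depths --- the explored volume $\vol{\neigh_k(x)}$ is $o(\vol{V})$ for $k\le\frac12\log_r n$, so depletion is negligible, $\sum_{y\notin\neigh_k(x)}p_y^2=(1-o(1))\vols{V}$ (here $p_{\max}=n^{1/(\beta-1)}=o(\sqrt n)$ is used crucially), and $\E[\vol{\level_{k+1}(x)}\mid\mathcal F_k]\ge(1-o(1))\,r\,\vol{\level_k(x)}$; together with part~1 this gives $\E[\vol{\level_k(x)}]=\Theta(r^k)$. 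To turn this into a constant-probability lower tail bound I would split on the size of $r^k$. When $\vol{\level_k(x)}$ is already polynomially large, the inequality $\min(1,t)\le t^{\gamma}$ gives, conditionally on $\mathcal F_k$,
\[
  \mathrm{Var}\!\left[\vol{\level_{k+1}(x)}\mid\mathcal F_k\right]\le\sum_{y\in V}p_y^{2}\Big(\tfrac{p_y\vol{\level_k(x)}}{\vol{V}}\Big)^{\gamma}=\Big(\tfrac{\vol{\level_k(x)}}{\vol{V}}\Big)^{\gamma}\volg{V},
\]
which on $\mathcal G$ is $o\big(\vol{\level_k(x)}^2\big)$ once $\vol{\level_k(x)}\gtrsim n^{(1-\gamma)/(2-\gamma)}$ (a threshold strictly below $\sqrt n$ because $\gamma<\frac12$), so a Chebyshev/Paley--Zygmund step propagates $\vol{\level_j(x)}=(1\pm o(1))\E[\vol{\level_j(x)}]$ up to depth $k$. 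For $r^k$ only up to that polynomial threshold there is no concentration, and I would instead couple the exploration from below with a supercritical multitype branching process (types $=$ weight classes, size-biased along edges; the coupling is exact up to the $o(\vol{V})$ depletion throughout this regime) whose mean operator has Perron eigenvalue $r>1$ and whose offspring law has finite $(1+\gamma)$-th moment because $\volg{V}=O(n)$; this yields survival with probability bounded below by a positive constant and, via $L^{1+\gamma}$ martingale convergence of the normalized population, $\vol{\level_k(x)}\ge\Omega(r^k)$ with constant probability. One minor point: because $\vol{\level_k(x)}$ has a heavy upper tail coming from a single high-weight neighbor, at the concentration step I would run Paley--Zygmund on the lightly truncated volume $\sum_{y\in\level_k(x)}\min(p_y,n^{\epsilon})$, whose expectation is within $o(r^k)$ of $\E[\vol{\level_k(x)}]$ (here $\beta>3$ controls the second-moment tail) and whose increments are bounded by $n^{\epsilon}$.

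\smallskip
\noindent\textbf{Main obstacle.} The delicate part is the survival regime $1\le r^k\lesssim n^{(1-\gamma)/(2-\gamma)}$, where the conclusion must come from branching-process survival rather than concentration: setting up the multitype coupling, keeping the growth bound and the hidden $\Omega(\cdot)$ constant uniform in $k$, and reconciling those constants with the ones produced by the concentration regime. Everything else reduces to bookkeeping with Propositions~\ref{prop:connect} and~\ref{prop:deg2}.
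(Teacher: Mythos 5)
Your Part 1 and the corollary coincide with the paper's argument, and your concentration step is sound where it applies: the bound $\mathrm{Var}[\vol{\level_{k+1}(x)}\mid\mathcal F_k]\le \volg{V}\cdot(\vol{\level_k(x)}/\vol{V})^{\gamma}$ is correct and really does force the threshold $\vol{\level_k(x)}\gg n^{(1-\gamma)/(2-\gamma)}$ that you identify. The gap is the entire regime below that threshold. There you assert a coupling with a supercritical multitype branching process and invoke survival plus $L^{1+\gamma}$ convergence of the additive martingale to conclude $\Pr[\vol{\level_k(x)}\ge\Omega(r^k)]\ge\Omega(1)$ uniformly in $k$. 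That conclusion is the heart of Part 2 and is not established: the offspring law of your branching process is the $n$-dependent empirical size-biased distribution of $\seqdeg$, so fixed-process limit theorems (Kesten--Stigum, $L^{1+\gamma}$ convergence, positivity of the limit on survival, anti-concentration of the limit near $0$) do not apply off the shelf; you would need quantitative versions of all of them, uniform in both $n$ and $k$, and none of this is carried out beyond naming the tools. You flag this yourself as the main obstacle, which is accurate --- but it means the proof is incomplete precisely where it is hard. A secondary issue in the same part: your lower bound $\E[\vol{\level_{k+1}(x)}\mid\mathcal F_k]\ge(1-o(1))r\,\vol{\level_k(x)}$ already fails without truncation, since for heavy $y$ the linearization $1-e^{-p_y\vol{\level_k(x)}/\vol{V}}\approx p_y\vol{\level_k(x)}/\vol{V}$ breaks down when $\vol{\level_k(x)}$ is near $\sqrt n$ (one has $p_{\max}\vol{\level_k(x)}/\vol{V}$ of order $\log^2 n$, not $o(1)$); the heavy vertices must be discarded and their contribution to $\vols{\cdot}$ controlled via $\volg{\cdot}$.

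The paper avoids the survival regime entirely, and the device is worth internalizing: instead of Chebyshev on the full level volume, truncate at $T=\vol{\level_k(x)}^{1/2}$. Vertices with $p_y>T$ contribute at most $T^{-\gamma}\volg{V}\cdot\vol{\level_k(x)}/\vol{V}=O(\vol{\level_k(x)}^{1-\gamma/2})$ to the conditional mean, hence are negligible, while the truncated sum has increments bounded by $T$, so a Chernoff bound gives failure probability $2^{-\Theta(\vol{\level_k(x)}^{1/2-2\gamma/3})}$ --- exponentially small as soon as $\vol{\level_k(x)}$ exceeds a large constant $C$, rather than a polynomial in $n$. The only non-concentration input then needed is $\Pr[\vol{\level_1(x)}\ge C]=\Omega_C(1)$, which is elementary, and the per-level success probabilities multiply to a constant along the geometrically growing volumes. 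This one truncation replaces your entire branching-process regime; with it, your outline collapses to the paper's proof.
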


We first show an upper bound for the expected volume for each level $\level_k(x)$.

%\paragraph*{The expected volume of $\level_k(x)$ is $O(r^k)$}
\begin{proof}[Proof of Part 1.]
	Let us first fix $\level_k(x)$, and consider the set $\level_{k+1}(x)$. For a vertex $y$, the probability that it is in $\level_{k+1}(x)$ is at most
\[
	p_y\cdot \frac{\vol{\level_k(x)}}{\vol{V}}
\]
by Proposition~\ref{prop:connect}. Thus, the expected volume of $\level_{k+1}(x)$ conditioned on $\vol{\level_k(x)}$ is at most
\begin{align*}
	&\ \E[\vol{\level_{k+1}(x)}\mid\vol{\level_k(x)}] \\
	\leq&\ \sum_{y\not\in \neigh_k(x)}p_y^2\cdot \frac{\vol{\level_k(x)}}{\vol{V}} \\
	\leq&\ \vol{\level_k(x)}\cdot\frac{\vols{V}}{\vol{V}} \\
	=&\ \vol{\Gamma_k(x)}\cdot r.
\end{align*}
On the other hand, $\vol{\level_0(x)}=O(1)$. Thus, we have $\E[\vol{\level_k(x)}]=O(r^k)$.
\end{proof}

Next we present the proof for part 2.
\begin{proof}[Proof of Part 2]
The proof is split into three steps.
\paragraph*{Two fixed constant-weight vertices are close with very low probability}
Fix two vertices $x,y \in S$.
By Item 1, \[
\E[\vol{\neigh_{k}(x)}]\leq O(r^k).
\]

However, for each $i$, the probability that $y$ is at distance $i$ from $x$ conditioned on $\neigh_{i-1}(x)$ is at most
\[
	\Pr[y\in \level_{i}(x)\mid \neigh_{i-1}(x)]\leq p_y\cdot \frac{\vol{\level_{i-1}(x)}}{\vol{V}}.
\]
The probability $y$ is within distance $k+1$ from $x$ is at most
\[
\begin{aligned}
	\Pr[y\in\neigh_{k+1}(x)]&\leq\sum_{i=1}^{k+1}\Pr[y\in\level_{i}(x)] \\
	&\leq \frac{p_y}{\vol{V}}\cdot\sum_{i=1}^{k+1}\E[\vol{\level_{i-1}(x)}] \\
	&=p_y\cdot\frac{\neigh_{k}(x)}{\vol{V}} \\
	&\leq O(r^k).
\end{aligned}
\]

\paragraph*{With large probability, $\level_{k+1}(x)$ has volume not much smaller than $\level_k(x)\cdot\frac{\vols{V\setminus\neigh_k(x)}}{\vol{V}}$}
Conditioned on $\level_k(x)$, the probability that a vertex $y\notin \neigh_k(x)$ is in $\level_{k+1}(x)$ is at least
\[
	1-e^{-p_y\cdot\frac{\vol{\level_k(x)}}{\vol{V}}}
\]
by Proposition~\ref{prop:connect}.

For any $T>0$, we have
\begin{align*}
	\sum_{y:p_y>T}p_y^2 &\leq T^{-\gamma}\cdot\sum_{y:p_y>T}p_y^{2+\gamma}\leq \tau\cdot n\cdot T^{-\gamma}.
\end{align*}
We also have
\[
\begin{aligned}
	\sum_{y:p_y\leq T} p_y^3&\leq T^{1-\gamma}\cdot\sum_{y:p_y\leq T} p_y^{2+\gamma} \leq \tau\cdot n\cdot T^{1-\gamma}.
\end{aligned}
\]

Let us focus on all $y$'s with weight at most $T$.
By that fact that $1-e^{-x}\geq x-x^2/2$ when $x\geq 0$, the expected volume of $\level_{k+1}(x)\cap \{y:p_y\leq T\}$ conditioned on $\neigh_k(x)$ is at least

\begin{align*}
	&\ \sum_{y\not\in\neigh_k(x):p_y\leq T}p_y\left(1-e^{-p_y\cdot\frac{\vol{\level_k(x)}}{\vol{V}}}\right)\\
	\geq&\ \sum_{y\not\in\neigh_k(x):p_y\leq T} p_y^2\cdot\frac{\vol{\level_k(x)}}{\vol{V}}- \frac{1}{2}\sum_{y\not\in\neigh_k(x):p_y\leq T} p_y^3\cdot\left(\frac{\vol{\level_k(x)}}{\vol{V}}\right)^2\\
	\geq&\ \left(\vols{V\setminus\neigh_k(x)}\right)\cdot \frac{\vol{\level_k(x)}}{\vol{V}}- \sum_{y:p_y\geq T} p_y^2\cdot\frac{\vol{\level_k(x)}}{\vol{V}}- \frac{1}{2}\sum_{y:p_y\leq T} p_y^3\cdot\left(\frac{\vol{\level_k(x)}}{\vol{V}}\right)^2\\
	\geq&\ \vol{\level_k(x)}\cdot\frac{\vols{V\setminus\neigh_k(x)}}{\vol{V}} - \tau\cdot n\cdot T^{-\gamma}\cdot \frac{\vol{\level_k(x)}}{\vol{V}}- \tau\cdot n\cdot T^{1-\gamma}\cdot \left(\frac{\vol{\level_k(x)}}{\vol{V}}\right)^2 \\
	\geq&\ \vol{\level_k(x)}\cdot\left(\frac{\vols{V\setminus\neigh_k(x)}}{\vol{V}}-\tau\cdot \left(T^{-\gamma}+T^{1-\gamma}\cdot \frac{\vol{\level_k(x)}}{\vol{V}}\right)\right).
\end{align*}

Note that ``$y\in \level_{k+1}(x)$'' are independent events conditioned on $\neigh_k(x)$ for different $y\notin\neigh_k(x)$.
Now we apply Chernoff Bound to lower bound the probability that the volume of $\level_{k+1}(x)$ is too small.
The above inequality holds for every $T>0$.
In the following, we set $T=\vol{\level_k(x)}^{1/2}$.

When $\vol{\level_k(x)}\leq \vol{V}^{2/3}$,
$T^{-\gamma}\geq T^{1-\gamma}\cdot \frac{\vol{\level_k(x)}}{\vol{V}}$,
the expected volume of $\level_{k+1}(x)$ conditioned on $\neigh_k(x)$ is at least:
\begin{align*}
	&\ \E[\vol{\level_{k+1}(x)\cap \{y:p_y\leq T\}}\mid \neigh_k(x)]\\
	\geq&\ \vol{\level_k(x)}\cdot\left(\frac{\vols{V\setminus\neigh_k(x)}}{\vol{V}}-2\tau\cdot \vol{\level_k(x)}^{-\gamma/2}\right).
\end{align*}
Since each $p_y\leq T=\vol{\level_k(x)}^{1/2}$, by Chernoff bound, we have
%\footnote{Throughout the paper, we use $\exp(x)$ to denote $2^{\Theta(x)}$.}\mnote{move??}
\begin{align}
	&\Pr\left[\vol{\level_{k+1}(x)}\leq \vol{\level_k(x)}\cdot\left(\frac{\vols{V\setminus\neigh_k(x)}}{\vol{V}}-\vol{\level_k(x)}^{-\gamma/3}\right)\,\middle|\, \neigh_k(x)\right] \notag\\
	&\leq 2^{-\Theta\left(\vol{\level_k(x)}^{1/2-2\gamma/3}\right)}, \label{eq_level_ineq}
\end{align}
as long as $\vol{\level_k(x)}=O(n^{2/3})$ and $\vol{\level_k(x)}$ sufficiently large.

% When $\vol{\level_k(x)}\geq \vol{V}^{2/3}$, the expectation is at least
% \[
% 	\E[\vol{\level_{k+1}(x)\cap \{y:p_y\leq T\}}\mid \neigh_k(x)]\geq \vol{\level_k(x)}\cdot\left(\frac{\vols{V\setminus\neigh_k(x)}}{\vol{V}}-O\left(\left(\frac{\vol{\level_k(x)}}{\vol{V}}\right)^{\gamma}\right)\right).
% \]
% By Chernoff bound, we have
% \[
% 	\Pr\left[\vol{\level_{k+1}(x)}\leq \vol{\level_k(x)}\cdot \left(\frac{\vols{V\setminus\neigh_k(x)}}{\vol{V}}-\left(\frac{\vol{\level_k(x)}}{\vol{V}}\right)^{\gamma/3}\right)\middle|\neigh_k(x)\right]\leq \exp\left(-\vol{\level_k(x)}^{1/2-\gamma/3}\right).
% \]

\paragraph*{With constant probability, $\level_k(x)$ has volume at least $\Omega(r^k)$}
Fix a sufficiently large constant $C$, denote by $\cE_0$ the event that $x$ has a neighborhood of volume at least $C$.
Then it is not hard to verify that for any constant $C$, the probability of $\cE_0$ is at least a constant:
 $$\Pr[\vol{\level_1(x)}\geq C]\geq \Omega_C(1).$$
Moreover, for $i\geq 1$, denote by $\cE_i$ the event that either
\begin{align*}
	\vol{\level_{i+1}(x)}&> \vol{\level_i(x)}\cdot \left(\frac{\vols{V\setminus\neigh_i(x)}}{\vol{V}}-\vol{\level_i(x)}^{-\gamma/3}\right)
\end{align*}
or
\[
	\vol{\level_i(x)}\geq n^{2/3}.
\]
By the argument above, $$\Pr[\overline{\cE_i}\mid \neigh_i(x)]\leq 2^{-\Theta(\vol{\level_i(x)}^{1/2-2\gamma/3})}.$$
We claim that these events have the following properties.
\begin{claim}\label{cl_size}
	When $\cE_i$ occurs for all $0\leq i<k$, we must have either $\vol{\level_k(x)}\geq \Omega(r^k)$ or $\vol{\neigh_k(x)}\geq n^{2/3}$ for sufficiently large $C$.
%	$\vols{\neigh_k(x)}\geq n^{1-\frac{\gamma}{3(1+\gamma)}}$.
\end{claim}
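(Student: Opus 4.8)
The plan is to turn the family of events $\cE_i$ into a clean multiplicative recursion for the shell volumes $v_i := \vol{\level_i(x)}$ and then iterate it. First I would dispose of the trivial alternative: if $\vol{\neigh_j(x)} \ge n^{2/3}$ for some $j \le k$ then, since the balls $\neigh_\cdot(x)$ are nested, $\vol{\neigh_k(x)} \ge n^{2/3}$ and we are done. So I may assume $\vol{\neigh_i(x)} < n^{2/3}$ for every $i \le k$; in particular $v_i \le \vol{\neigh_i(x)} < n^{2/3}$, so the second disjunct in the definition of $\cE_i$ cannot hold and $\cE_i$ (for $1 \le i < k$) reduces to
\[
	v_{i+1} > v_i\Bracket{\frac{\vols{V \setminus \neigh_i(x)}}{\vol{V}} - v_i^{-\gamma/3}} = v_i\bigl(r - \eta_i - v_i^{-\gamma/3}\bigr),
\]
where $r = \vols{V}/\vol{V}$ and $\eta_i := \vols{\neigh_i(x)}/\vol{V}$, using $\vols{V\setminus\neigh_i(x)} = \vols{V} - \vols{\neigh_i(x)}$. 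Together with the base estimate $v_1 \ge C$ supplied by $\cE_0$, this recursion is what I would propagate.

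The crux is to show that neither correction term costs more than a constant factor over all the (at most $O(\log n)$) levels. For $\eta_i$ I would invoke Hölder's inequality against the $(2+\gamma)$-th moment, exactly as in the proof of Proposition~\ref{lem_martingale1}: $\vols{\neigh_i(x)} \le \vol{\neigh_i(x)}^{\gamma/(1+\gamma)}\,\volg{V}^{1/(1+\gamma)} \le n^{\frac{2}{3}\cdot\frac{\gamma}{1+\gamma}}(\tau n)^{1/(1+\gamma)}$, whose exponent is strictly less than $1$ because $\gamma>0$; hence $\eta_i \le n^{-\Omega(1)}$ uniformly over $i \le k$. For the Chernoff-slack term I would argue inductively that, with $C = C(r,\gamma)$ chosen large, the growth factor $r - \eta_i - v_i^{-\gamma/3}$ stays above $(r+1)/2 > 1$ whenever $v_i \ge C$; this is self-improving, since it keeps $v_i$ increasing, maintains $v_i \ge C$, and in fact gives $v_i \ge C\,((r+1)/2)^{\,i-1}$ (which also forces $k = O(\log n)$ under our standing assumption that $v_k < n^{2/3}$). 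That geometric lower bound makes $\sum_{i\ge1} v_i^{-\gamma/3}$ a convergent geometric series bounded by $C^{-\gamma/3}\,M(r,\gamma)$, which I can drive below any prescribed constant by taking $C$ large.

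To conclude I would normalize $a_i := v_i / r^i$, so the recursion becomes $a_{i+1} \ge a_i\bigl(1 - (\eta_i + v_i^{-\gamma/3})/r\bigr)$ and therefore
\[
	a_k \ \ge\ a_1 \prod_{i=1}^{k-1}\Bracket{1 - \frac{\eta_i}{r}}\prod_{i=1}^{k-1}\Bracket{1 - \frac{v_i^{-\gamma/3}}{r}}.
\]
Here $a_1 = v_1/r \ge C/r$; the first product is $1 - o(1)$ since $k = O(\log n)$ and $\eta_i \le n^{-\Omega(1)}$; and the second product is at least $\tfrac12$ since $\sum_i v_i^{-\gamma/3}$ is a small constant. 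Hence $a_k = \Omega(1)$, i.e. $\vol{\level_k(x)} = \Omega(r^k)$, which is the claim. The main obstacle — and precisely where the hypotheses $\gamma < 1/2$ (a finite, controlled $(2+\gamma)$-th moment) and ``$r$ bounded away from $1$'' are used — is this accumulation bookkeeping: one must guarantee that the second-moment depletion $\eta_i$ and the Chernoff slack $v_i^{-\gamma/3}$ do not compound into more than an $O(1)$ loss across the $\tfrac12\log_r n$ levels, which succeeds only because $\eta_i$ is polynomially small and the geometric growth of $v_i$ renders $\sum_i v_i^{-\gamma/3}$ summable.
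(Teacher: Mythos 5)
Your argument is correct and follows essentially the same route as the paper's proof: reduce to the recursion $v_{i+1} > v_i(r - \eta_i - v_i^{-\gamma/3})$ under the standing assumption $\vol{\neigh_i(x)} < n^{2/3}$, control $\eta_i$ by H\"older against the $(2+\gamma)$-th moment, bootstrap a crude geometric lower bound $v_i \ge C\,\rho^{\,i-1}$ for some $\rho>1$ (the paper uses $\hat r > \sqrt r$ where you use $(r+1)/2$), and then show the accumulated multiplicative losses form a convergent product controlled by taking $C$ large. No gaps.
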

\begin{claim}\label{cl_prob}
	All events $\cE_i$'s $(0\leq i<k)$ occur simultaneously with constant probability.
\end{claim}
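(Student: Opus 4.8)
The plan is to show $\Pr[\bigcap_{i=0}^{k-1}\cE_i]=\Omega(1)$ by paying a single constant factor to condition on $\cE_0$, and then controlling the remaining failures $\overline{\cE_i}$ by a union bound conditioned on the ``good prefix''. First note $\Pr[\cE_0]=:\epsilon_0=\Omega_C(1)$ for any fixed $C$. Introduce the saturation event $\cB=\{\,\exists\,1\le j\le k-1:\ \vol{\level_j(x)}\ge n^{2/3}\,\}$; by Part~1 of Proposition~\ref{lem_expand}, $\E[\vol{\level_j(x)}]=O(r^j)$, so by Markov's inequality and $k\le\frac12\log_r n$ (hence $r^k\le\sqrt n$) we get $\Pr[\cB]\le\sum_{j\le k-1}O(r^j)/n^{2/3}=O(n^{-1/6})=o(1)$. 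It then suffices to prove
\[
\Pr\Bigl[\cE_0\cap\overline{\cB}\cap\textstyle\bigcup_{i=1}^{k-1}\overline{\cE_i}\Bigr]\le\epsilon_0/2,
\]
since this gives $\Pr[\bigcap_i\cE_i]\ge\epsilon_0-\Pr[\cB]-\epsilon_0/2=\Omega(1)$.

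The heart of the argument is a \emph{deterministic} lower bound on $\vol{\level_i(x)}$ on the good prefix. Let $\cA_i:=\cE_0\cap\cdots\cap\cE_{i-1}\cap\{\vol{\level_j(x)}<n^{2/3}\text{ for all }j<i\}$, an event measurable with respect to $\neigh_i(x)$. On $\cA_i$ the second disjunct of each $\cE_j$ ($1\le j<i$) is false, so every such $\cE_j$ must hold through its growth disjunct; starting from $\vol{\level_1(x)}\ge C$ (supplied by $\cE_0$) one may then iterate
\[
\vol{\level_{j+1}(x)}\ \ge\ \vol{\level_j(x)}\Bigl(\tfrac{\vols{V\setminus\neigh_j(x)}}{\vol V}-\vol{\level_j(x)}^{-\gamma/3}\Bigr).
\]
Since $\vol{\neigh_j(x)}<k\,n^{2/3}$ on $\cA_i$, Hölder's inequality together with the $(2+\gamma)$-moment hypothesis $\volg{V}=\tau n$ yields $\vols{\neigh_j(x)}=O(n^{1-\gamma/(3(1+\gamma))}\log^{O(1)}n)=o(\vols{V})$, so $\vols{V\setminus\neigh_j(x)}/\vol V=r\,(1-n^{-\Omega(1)})$ uniformly in $j\le k$. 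For $C$ a large enough (but fixed) constant, this factor minus $\vol{\level_j(x)}^{-\gamma/3}\le C^{-\gamma/3}$ still exceeds some fixed $\rho>1$, and an immediate induction gives $\vol{\level_i(x)}\ge C\rho^{i-1}$ on $\cA_i$.

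Finally, for each $1\le i\le k-1$ the event $\overline{\cE_i}\cap\cA_i$ forces $\vol{\level_i(x)}<n^{2/3}$ (otherwise the second disjunct of $\cE_i$ holds), so the Chernoff estimate behind \eqref{eq_level_ineq}, $\Pr[\overline{\cE_i}\mid\neigh_i(x)]\le 2^{-\Theta(\vol{\level_i(x)}^{1/2-2\gamma/3})}$, applies; since $\cA_i$ is $\neigh_i(x)$-measurable, taking expectations and inserting $\vol{\level_i(x)}\ge C\rho^{i-1}$ gives $\Pr[\overline{\cE_i}\cap\cA_i]\le 2^{-\Theta((C\rho^{i-1})^{a})}$ with $a=\tfrac12-\tfrac{2\gamma}{3}>0$. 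As $\cE_0\cap\overline{\cB}\cap\cE_1\cap\cdots\cap\cE_{i-1}\cap\overline{\cE_i}\subseteq\overline{\cE_i}\cap\cA_i$, summing over $i$ leaves a series whose exponents grow geometrically, hence bounded by $O(2^{-\Theta(C^a)})$. The quantitative point that makes everything close is that $\epsilon_0=\Pr[\vol{\level_1(x)}\ge C]$ decays only polynomially in $C$ (the first–level volume has bounded mean and a power-law upper tail), whereas $2^{-\Theta(C^a)}$ decays stretched-exponentially, so for $C$ fixed but large enough $\sum_{i\ge1}2^{-\Theta((C\rho^{i-1})^a)}\le\epsilon_0/2$, which is precisely the bound required above. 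The main obstacle is the middle step: making the recursion robust forces one to invoke the $(2+\gamma)$-moment bound to keep $\vols{\neigh_j(x)}$ negligible and to quarantine the saturation event $\cB$, and it is exactly there that $C$ must be taken large.
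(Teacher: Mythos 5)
Your proof follows the same essential route as the paper's: on the good prefix you re-derive the deterministic lower bound $\vol{\level_i(x)}\ge C\rho^{i-1}$ (this is exactly Claim~\ref{cl_size}), feed it into the Chernoff estimate \eqref{eq_level_ineq}, and combine over $i$. The only structural difference is the combination step: you combine additively via a union bound, $\Pr[\bigcap_i\cE_i]\ge\Pr[\cE_0]-\Pr[\cB]-\sum_i\Pr[\overline{\cE_i}\cap\cA_i]$, whereas the paper telescopes multiplicatively, $\Pr[\cE_0]\cdot\prod_i\bigl(1-\Pr[\overline{\cE_i}\mid\cE_0,\ldots,\cE_{i-1}]\bigr)$. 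The multiplicative form only needs each conditional failure probability to be bounded away from $1$ and summable, which holds for large $C$ with no further input. Your additive form, as written, needs the extra comparison $\sum_i 2^{-\Theta((C\rho^{i-1})^a)}\le\epsilon_0(C)/2$ between two quantities that both depend on $C$, and you close it by asserting that $\epsilon_0(C)=\Pr[\vol{\level_1(x)}\ge C]$ decays only polynomially in $C$. That is true for power-law weights (a single heavy neighbor suffices), but it does not follow from the moment hypotheses under which Proposition~\ref{lem_expand} is actually stated: if, say, all weights are $\Theta(1)$, the tail of $\vol{\level_1(x)}$ decays like $e^{-\Theta(C\log C)}$, faster than your $2^{-\Theta(C^{a})}$ error term, and the comparison would fail for every $C$. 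The dependence is avoidable in one line: since $\cA_i\subseteq\cE_0$ and $\cA_i$ is $\neigh_i(x)$-measurable, you may keep the factor you discarded and write $\Pr[\overline{\cE_i}\cap\cA_i]\le 2^{-\Theta((C\rho^{i-1})^{a})}\cdot\Pr[\cA_i]\le 2^{-\Theta((C\rho^{i-1})^{a})}\cdot\epsilon_0$, after which you only need $\sum_i 2^{-\Theta((C\rho^{i-1})^{a})}\le 1/2$, which holds for large $C$ regardless of how $\epsilon_0$ depends on $C$. With that adjustment (or by switching to the product form) the argument is complete; your treatment of the saturation event $\cB$ and the measurability bookkeeping for $\cA_i$ are both correct.
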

Before proving the two claims, let us first show that they together imply that $\Pr[\vol{\level_k(x)}\geq \Omega(r^k)]\geq \Omega(1)$.

By Markov's inequality, the first inequality in the lemma statement and $k\leq \frac{1}{2}\log_r n$, we have
\[
\Pr[\vol{\neigh_k(x)}\geq n^{2/3}]\leq O(\sqrt{n}/n^{2/3})=o(1).
\]
% Thus, we have
% \[
% \Pr[\vols{\neigh_k(x)}\geq n^{1-\frac{\gamma}{3(1+\gamma)}}]\leq n^{-1/6}.
% \]
Therefore, we have the lower bound
\begin{align*}
	&\ \Pr[\vol{\level_k(x)}\geq \Omega(r^k)]\\
	\geq&\ \Pr[\cE_0,\ldots,\cE_{k-1}]\cdot \Pr[\vol{\level_k(x)}\geq \Omega(r^k)\mid \cE_0,\ldots,\cE_{k-1}] \\
	\geq&\ \Pr[\cE_0,\ldots,\cE_{k-1}]\cdot (1-\Pr[\vol{\neigh_k(x)}\geq n^{2/3}\mid \cE_0,\ldots,\cE_{k-1}]) \\
	\geq&\ \Pr[\cE_0,\ldots,\cE_{k-1}]\cdot (1-\Pr[\vol{\neigh_k(x)}\geq n^{2/3}]/\Pr[\cE_0,\ldots,\cE_{k-1}]) \\
	=&\ \Omega(1).
\end{align*}
This concludes the proof.
\end{proof}

\begin{proof}[Proof of Claim~\ref{cl_size}]
Assume $\cE_i$ occurs for all $0\leq i<k$ and $\vol{\neigh_k(x)}< n^{2/3}$.
The goal is to show that in this case, we must have $\vol{\level_k(x)}\geq \Omega(r^k)$.

In particular, $\vol{\neigh_k(x)}< n^{2/3}$ implies that $\vol{\neigh_i(x)}< n^{2/3}$ and $\vol{\level_i(x)}\leq n^{2/3}$ for every $i\leq k$.
By H\"{o}lder's inequality, we also have
\begin{align*}
\vols{\neigh_i(x)}&\leq \volg{\neigh_i(x)}^{\frac{1}{1+\gamma}}\cdot \vol{\neigh_i(x)}^{\frac{\gamma}{1+\gamma}}\\
&\leq \tau^{\frac{1}{1+\gamma}}\cdot n^{1-\frac{\gamma}{3(1+\gamma)}}.
\end{align*}

% In particular, the bound on $\vols{\neigh_k(x)}$ implies that $\vols{\neigh_i(x)}<n^{1-\frac{\gamma}{3(1+\gamma)}}$ for all $i<k$.
Thus, the event $\cE_i$ ($i>0$) implies
\begin{align}
	\vol{\level_{i+1}(x)}&>\vol{\level_i(x)}\cdot \left(r-\tau^{\frac{1}{1+\gamma}}\cdot n^{-\frac{\gamma}{3(1+\gamma)}}-\vol{\level_i(x)}^{-\gamma/3}\right).\label{eq_vol_level}
\end{align}
Let $\hat{r}=r-\tau^{\frac{1}{1+\gamma}}\cdot n^{-\frac{\gamma}{3(1+\gamma)}}-C^{-\gamma/3}$.
For sufficiently large $C$, $\hat{r}>\sqrt{r}>1$.
First we can show $\vol{\level_i(x)}\geq C\cdot \hat{r}^{i-1}\geq C\cdot r^{(i-1)/2}$ for $i\geq 1$ inductively:
\begin{itemize}
	\item
		By the definition of $\cE_0$, $\vol{\level_1(x)}\geq C$;
	\item
		If $\vol{\level_i(x)}\geq C\cdot \hat{r}^{i-1}$, then we have
		\[
		\begin{aligned}
		\vol{\level_{i+1}(x)}&\geq \vol{\level_i(x)}\cdot \left(r-\tau^{\frac{1}{1+\gamma}}\cdot n^{-\frac{\gamma}{3(1+\gamma)}}-\vol{\level_i(x)}^{-\gamma/3}\right) \\
		&\geq \vol{\level_i(x)}\cdot \left(r-\tau^{\frac{1}{1+\gamma}}\cdot n^{-\frac{\gamma}{3(1+\gamma)}}-C^{-\gamma/3}\right) \\
		&\geq \vol{\level_i(x)}\cdot \hat{r}.
		\end{aligned}
		\]
\end{itemize}

Thus, we have $\vol{\level_i(x)}\geq\Omega(\hat{r}^{i})\geq \Omega(r^{i/2})$.
By Equation~\eqref{eq_vol_level} again, we have
\[
\begin{aligned}
	\vol{\level_k(x)}&>\vol{\level_{k-1}(x)}\cdot(r-\tau^{\frac{1}{1+\gamma}}\cdot n^{-\frac{\gamma}{3(1+\gamma)}}-r^{-(k-1)\gamma/6}) \\
	&\geq \vol{\level_1(x)}\cdot \left(\prod_{i=1}^{k-1}(r-\tau^{\frac{1}{1+\gamma}}\cdot n^{-\frac{\gamma}{3(1+\gamma)}}-r^{-i\gamma/6})\right) \\
	&\geq r^{k-1}\cdot C\cdot \left(\prod_{i=1}^{k-1}(1-\tau^{\frac{1}{1+\gamma}}\cdot n^{-\frac{\gamma}{3(1+\gamma)}}\cdot r^{-1}-r^{-i\gamma/6-1})\right) \\
	&= r^k\cdot \alpha_k,
\end{aligned}
\]
where $\alpha_k$ is decreasing as $k$ increases. $\alpha_{\frac{1}{2}\log_r n}$ is lower bounded by a constant $\alpha$. Thus, $\vol{\level_k(x)}\geq \alpha\cdot r^k\geq \Omega(r^k)$.
This proves the claim.
\end{proof}

\begin{proof}[Proof of Claim~\ref{cl_prob}]{
By Lemma~\ref{cl_size}, conditioned on $\cE_0,\ldots\cE_{i-1}$, we have either
\[
	\vol{\level_i(x)}\geq \alpha\cdot r^i
\]
or
\[
	\vol{\neigh_i(x)}\geq n^{2/3}.
\]

Thus, by Equation~\eqref{eq_level_ineq}, we have
\[
	\Pr[\overline{\cE_i}\mid \cE_0,\ldots,\cE_{i-1}]\leq 2^{-\Theta(r^{\Omega(i)})}.
\]

Since $\Pr[\cE_0]=\Omega(1)$, we may lower bound the probability that all events happen simultaneously
\[
\begin{aligned}
	\Pr[\cE_0,\ldots,\cE_{k-1}]&\geq \Omega\left(\prod_{i=0}^{k-1}\left(1-2^{-\Theta(r^{\Omega(i)})}\right)\right) \\
	&\geq \Omega(1).
\end{aligned}
\]
}\end{proof}

\end{document}